\newcommand{\jao}[1] {}
\renewcommand{\jao}[1] {{\color{OrangeRed}{\bf{JAO: #1}}}}
\newcommand{\jmh}[1] {}
\renewcommand{\jmh}[1] {{\color{blue}{\bf{JMH: #1}}}}
\newcommand{\trg}[1] {}
\renewcommand{\trg}[1] {{\color{ForestGreen}{\bf{TRG: #1}}}}
\newcommand{\comments}[1]{}
\newtheorem{proposition}{Proposition}
\newtheorem{corollary}{Corollary}
\newtheorem{lemma}{Lemma}
\newtheorem{theorem}{Theorem}
\newtheorem*{theorem*}{Theorem}
\begin{document}

\title{Universal thermodynamic bounds on nonequilibrium response \\ with biochemical applications}

\author{Jeremy A. Owen}
\affiliation{Physics of Living Systems Group, Department of Physics, Massachusetts Institute of Technology, 400 Technology Square, Cambridge, MA 02139}
\author{Todd R. Gingrich}
\affiliation{Department of Chemistry, Northwestern University, Evanston, IL 60208}
\author{Jordan M. Horowitz}
\affiliation{Department of Biophysics, University of Michigan, Ann Arbor, Michigan, 48109,
USA}
\affiliation{Center for the Study of Complex Systems, University of Michigan, Ann Arbor, Michigan 48104, USA}
\affiliation{Department of Physics, University of Michigan, Ann Arbor, Michigan 48109, USA}

\date{\today}

\begin{abstract}
Diverse physical systems are characterized by their response to small perturbations.
Near thermodynamic equilibrium, the fluctuation-dissipation theorem provides a powerful theoretical and experimental tool to determine the nature of response by observing spontaneous equilibrium fluctuations.
In this spirit, we derive here a collection of equalities and inequalities valid arbitrarily far from equilibrium that constrain the response of nonequilibrium steady states in terms of the strength of nonequilibrium driving.
Our work opens new avenues for characterizing nonequilibrium response.
As illustrations, we show how our results rationalize the energetic requirements of two common biochemical motifs. 
\end{abstract}

\maketitle

\section{Introduction}
One of the basic characteristics of any physical system is its response to small perturbations~\cite{Chaiken}.
For instance, response is used to quantify everything from material properties---such as conductivity~\cite{Kubo} and viscoelasticity~\cite{Mason1995}---to the sensing capability of cells~\cite{Qian2012,govern2014PRL} and the accuracy of biomolecular processes~\cite{Bintu2005,Murugan2011,Estrada2016}.
Near thermodynamic equilibrium, response is completely determined by the nature of spontaneous  fluctuations, according to the fluctuation-dissipation theorem (FDT)~\cite{Kubo}.
This deep connection between response and fluctuations is not only of theoretical interest, but also finds practical application.
The FDT forms the basis of powerful experimental techniques, such as microrheology, spectroscopy, and dynamic light scattering~\cite{Chaiken}.
It additionally has implications for the design of mesoscopic devices: highly-responsive equilibrium devices are always plagued by noise. To combine low fluctuations with high sensitivity, a device must be driven away from equilibrium.

The great utility of the FDT near equilibrium~\cite{Chaiken} has led to significant interest in expanding its validity and developing generalizations for nonequilibrium situations. Generically, response can be related to some formal nonequilibrium correlation functions~\cite{Agarwal1972,Risken,Baiesi2009,Prost2009, Seifert2010}.
While these predictions offer fundamental theoretical insight, the necessary correlations are often prohibitively difficult to measure except in simple single-particle systems~\cite{GomezSolano2009,Mehl2010, Bohec2013}. 
In certain special cases, such as under stalling conditions, the FDT holds unmodified~\cite{Altaner2016}. More commonly, however, the study of nonequilibrium response has focused on how the FDT is violated.
For example, the violation of the velocity FDT for Brownian particles can be related to the steady-state heat dissipation through the Harada-Sasa equality \cite{harada2005equality, harada2006energy}; a useful prediction that has been utilized to measure dissipation and efficiency in molecular motors~\cite{Toyabe2010}.
Alternatively, violations of the FDT can be used to fit model parameters, as has been suggested for models of biomolecular processes~\cite{Yan2013,Sato2003}. 
More often FDT violations are framed in terms of system-specific ``effective temperatures''~\cite{Cugliandolo2011, Benisaac2011, Dieterich2015}, whose time-dependence under some circumstances can reveal information about effective equilibrium descriptions.

Inspired by the recent demonstration of thermodynamic bounds on far-from-equilibrium dynamical fluctuations~\cite{Barato2015, Gingrich2016}, we show here that generic nonequilibrium steady-state response can be constrained in terms of experimentally-accessible thermodynamic quantities.
In particular, we present equalities and inequalities---akin to the FDT but valid arbitrarily far from equilibrium---that link static response to the strength of nonequilibrium driving.
Our results open new possibilities to experimentally characterize away-from-equilibrium response and suggest design principles for high-sensitivity, low-noise devices.
As illustrations, we show how our results rationalize the energetic requirements of biochemical switches, biochemical sensing, and kinetic proofreading.

\section{Modeling nonequilibrium steady states}

Nonequilibrium steady states are characterized by the constant and irreversible exchange of energy and matter between a system and its environment.
These flows are driven by thermodynamic affinities---quantities like temperature gradients, chemical potential differences and nonconservative mechanical forces. 
The underlying dynamics leading to the establishment of such steady states are often well-modeled as a continuous-time Markov jump process on a finite set of states $i=1,\dots,N$, which represent (coarse-grained) physical configurations.
The probability $p_i(t)$ to find the system in state $i$ at time $t$ then evolves according to the master equation
\begin{equation}\label{master}
\dot{p}_i(t) = \sum_{j=1}^N W_{ij} p_j(t),
\end{equation}
where the off-diagonal entries of the transition rate matrix $W_{ij}$ specify the probability per unit time to jump from $j$ to $i$, and diagonal entries $W_{ii}=-\sum_{j\neq i}W_{ji}$ are fixed by the conservation of probability.
Time-reversibility of the underlying microscopic dynamics implies that $W_{ij}\neq 0$ only if $W_{ji}\neq 0$~\cite{Seifert2012}. 
We will additionally suppose that for any two states, there is some sequence of allowed transitions ($W_{ij} \neq 0$) connecting them, a property known as irreducibility.
Under this assumption, no matter the initial condition, the solution of the master equation \eqref{master} converges at long times to the unique steady state distribution $\pi$ that satisfies $\sum_{j=1}^NW_{ij}\pi_j=0$.  
This distribution $\pi$, and in particular its dependence on physical quantities through the transition rates $W_{ij}$, serves as a general model of a nonequilibrium steady state.


Many key properties of this nonequilibrium steady state, including its thermodynamics, come to light by picturing the stochastic dynamics described by \eqref{master} playing out on a transition graph---a weighted directed graph $G$, as in Fig.~\ref{fig:graph}$(a)$, where the vertices $\{i\}$ represent the states and directed edges $\{e_{mn}\}$ represent possible transitions, weighted by the rates $W_{mn}$.
\begin{figure}
\includegraphics[scale=.2]{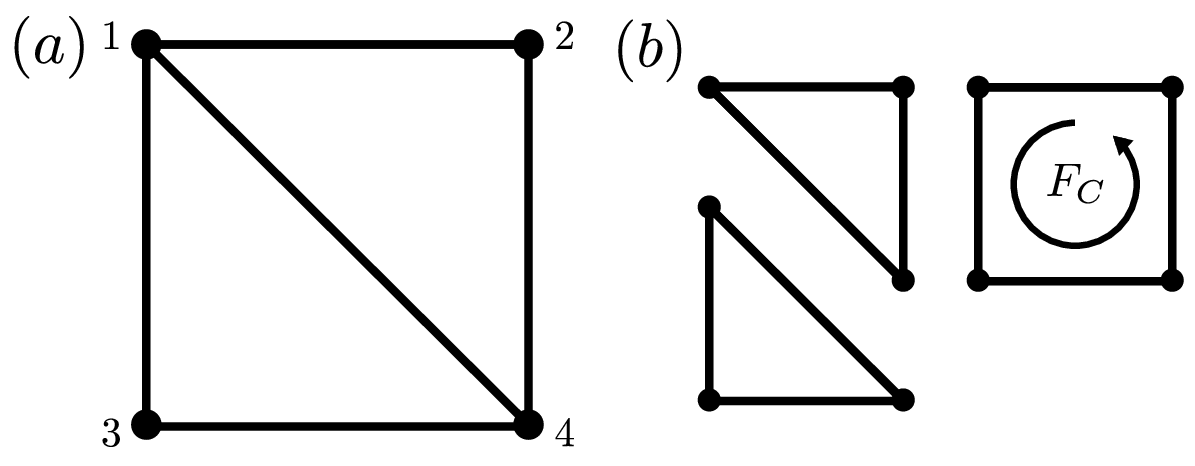}
\caption{{\bf Transition graphs and cycles.} $(a)$ Representative transition graph for a $4$-state system, which will act as a recurring illustrative example. $(b)$ Cycles around which the cycle forces $F_C$ drive the system out of equilibrium.  }\label{fig:graph}
\end{figure}
Note that, by assumption, every edge in $G$ has a reverse, so we will represent and discuss the transition graph as if it were an {undirected} graph, with the understanding that every undirected edge represents two opposing directed edges.

The cycles in the graph, like those in Fig.~\ref{fig:graph}$(b)$, play a central role in the thermodynamics of the steady state.
A {cycle} is a sequence of directed edges and vertices connecting the initial vertex to itself without self-intersecting, $C=\{i_0 \xrightarrow{e_{i_1i_0}} i_1 \to \cdots \to i_{m} \xrightarrow{e_{i_0i_m}}  i_0\}$. 
The asymmetry of the rates around these cycles then encodes the thermodynamic affinities driving the system out of equilibrium through the {cycle forces}---the log of the product of rates around the cycle divided by the product of rates in the reverse orientation~\cite{schnakenberg1976, Andrieux2006}:
\begin{equation}\label{eq:force}
F_{C}=\ln\left(\frac{W_{i_0i_m}\cdots W_{i_2i_1}W_{i_1i_0}}{W_{i_mi_0}\cdots W_{i_1i_2}W_{i_0i_1}}\right).
\end{equation} 
These cycle forces are linear combinations of thermodynamic affinities multiplied by their conjugate distances---for example a chemical potential gradient times a change in particle number. As such, the cycle forces equal the dissipation (entropy production) in the environment accrued every time the system flows around the cycle $C$.
This means that the cycle forces depend on macroscopically tunable parameters---such as environmental temperature or chemical potential---that characterize how strongly the system is driven away from equilibrium.
If all the cycle forces vanish, the system satisfies detailed balance, a statistical time-reversal symmetry~\cite{Zia2007} characteristic of thermodynamic equilibrium.

\section{Static response to perturbations}
Now, suppose the transition rates $W_{ij}(\lambda)$ depend on a control parameter $\lambda$, which could represent, say, the strength of an applied electric field, a temperature, or even a microscopic kinetic parameter such as a reaction barrier.
 In this work, we focus on the response to static perturbations, that is how steady-state averages $\langle Q\rangle_\pi=\sum_j Q_j \pi_j$ respond to small changes in $\lambda$: 
 \begin{equation}
\partial_\lambda \langle Q\rangle_\pi = \sum_i Q_i \partial_\lambda \pi_i =\sum_i Q_i  \sum_{kl}\frac{\partial W_{kl}}{\partial\lambda}\frac{\partial \pi_i}{\partial W_{kl}}.
\end{equation}

 At thermal equilibrium, the steady state $\pi^{\rm eq}_i\propto e^{-\beta \epsilon_i(\lambda)}$ depends only on the 
underlying (free) energy landscape $\epsilon_i(\lambda)$, irrespective of the precise form of the transition rates, where $\beta=1/k_{\rm B}T$ with $k_{\rm B}$ Boltzmann's constant and $T$ temperature.
This simplifying fact immediately implies the static FDT, which equates the static response to an equilibrium correlation function, 
\begin{equation}\label{eqFDT}
\partial_\lambda \langle Q\rangle_{\rm eq} =\beta C_{\rm eq}(Q,V),
\end{equation}
where $C_{\rm eq}(Q,V)=\langle QV\rangle_{\rm eq}-\langle Q\rangle_{\rm eq}\langle V\rangle_{\rm eq}$ and the subscript ``eq'' emphasizes that averages are taken with respect to the equilibrium distribution~\cite{Kubo}.
Here, $V=-\partial_\lambda\epsilon$ is known as the coordinate conjugate to $\lambda$ and represents the displacement induced by $\lambda$---for example, volume is conjugate to pressure and particle number is conjugate to chemical potential.
The FDT's utility in part stems from the fact that we often know the conjugate coordinate from basic physical reasoning and it is easily measured.

Away from equilibrium, the steady-state distribution generally has a complicated dependence on the rates.
Nevertheless, response can  always be related to a nonequilibrium correlation function~\cite{Agarwal1972,Risken},
\begin{equation}\label{noneqFDT}
\partial_\lambda \langle Q\rangle_\pi = C_\pi(Q,\partial_\lambda \ln \pi),
\end{equation}
but the relevant ``conjugate coordinate'' $\partial_\lambda \ln \pi$ requires knowledge of the parameter-dependence of the full nonequilibrium steady-state distribution, which can be challenging to calculate or measure.
Still, this response formula has been given thermodynamic meaning by relating the nonequilibrium conjugate coordinate to the stochastic entropy production rate~\cite{Seifert2010} as well as  to the time-reversal symmetry properties of the path action~\cite{Baiesi2009}.

\section{Parameterizing Perturbations}

Here, we turn our attention to the variations of the steady-state distribution with the transition rates, $\partial \pi_i/\partial W_{kl}$, constraining them in terms of $\pi$ and the cycle forces $F_C$.
We accomplish this goal by breaking any general perturbation into a linear combination of three special types of perturbations that change rates in a coordinated way.
By focusing on these subclasses of perturbations, we will be able to provide clear and measurable thermodynamic constraints on static response.

To classify perturbations, it will prove fruitful to parameterize the rate matrix as
\begin{equation}\label{eq:rates}
W_{ij}=\exp\left[-( B_{ij}-E_j - F_{ij}/2)\right],
\end{equation}
introducing the vertex parameters $E_j$, (symmetric) edge parameters $B_{ij} = B_{ji}$, and asymmetric edge parameters $F_{ij}=-F_{ji}$, which can all be varied independently.
Any rate matrix can be cast in this form, albeit non-uniquely.
To see this, consider the following program for identifying such a parameterization: choose the vertex parameters $\{E_1,\dots,E_N\}$ arbitrarily, then set
\begin{align}
B_{ij} = \frac{1}{2}\left[(E_j-\ln W_{ij})+(E_i-\ln W_{ji})\right]\\
F_{ij}=-\left[(E_j-\ln W_{ij})-(E_i-\ln W_{ji})\right].
\end{align}
The non-uniqueness of the parameterization is manifest in this construction because of the freedom to choose the $E_i$. For example, we could choose $E_i = 0$ for all $i$. We emphasize, however, that no matter the choice, the derivatives with respect to $E_j$, $B_{ij}$ or $F_{ij}$ are independent of the values of the parameterization.
Variations of a vertex parameter, say $E_j$, is equivalent to scaling all of the rates out of state $j$
\begin{equation} \label{eq:vertchainRule}
\frac{\partial \pi_k}{\partial E_j} = \sum_i W_{ij} \frac{\partial \pi_k}{\partial W_{ij}}.
\end{equation}
Similarly, derivatives with respect to $B_{ij}$ and $F_{ij}$ multiplicatively scale transition rates associated with a single edge
\begin{equation}\label{eq:barchainRule}
\frac{\partial \pi_k}{\partial B_{ij}} = -W_{ij} \frac{\partial \pi_k}{\partial W_{ij}} - W_{ji} \frac{\partial \pi_k}{\partial W_{ji}}
\end{equation}
\begin{equation}\label{eq:FchainRule}
\frac{\partial \pi_k}{\partial F_{ij}} = \frac{1}{2}\left(W_{ij} \frac{\partial \pi_k}{\partial W_{ij}} - W_{ji} \frac{\partial \pi_k}{\partial W_{ji}}\right).
\end{equation}
These facts are illustrated in Fig.~\ref{fig:derivative}. We note that the right hand sides of equations \eqref{eq:vertchainRule}--\eqref{eq:FchainRule} do not depend on the choice of parameterization \eqref{eq:rates}, even though the parameterization is not unique.

Our parameterization is reminiscent of the Arrhenius expression for transition rates for a system evolving in an energy landscape with wells of depth $E_i$ and barriers of height $B_{ij}$ driven by forces $F_{ij}$. 
While we stress that \eqref{eq:rates} will not in general support such an interpretation, the analogy is suggestive in several ways.
For example, the asymmetric edge parameters $F_{ij}$ are the sole contributors to the cycle forces (affinities) $F_C=\sum_{e_{ij}\in C}F_{ij}$. Furthermore, if all the $F_{ij}=0$, the steady-state distribution has the Boltzmann form $\pi_i \propto \exp(-E_i)$, with the $E_i$ acting as a dimensionless energy.



\begin{figure}
\centering
\includegraphics[scale=.15]{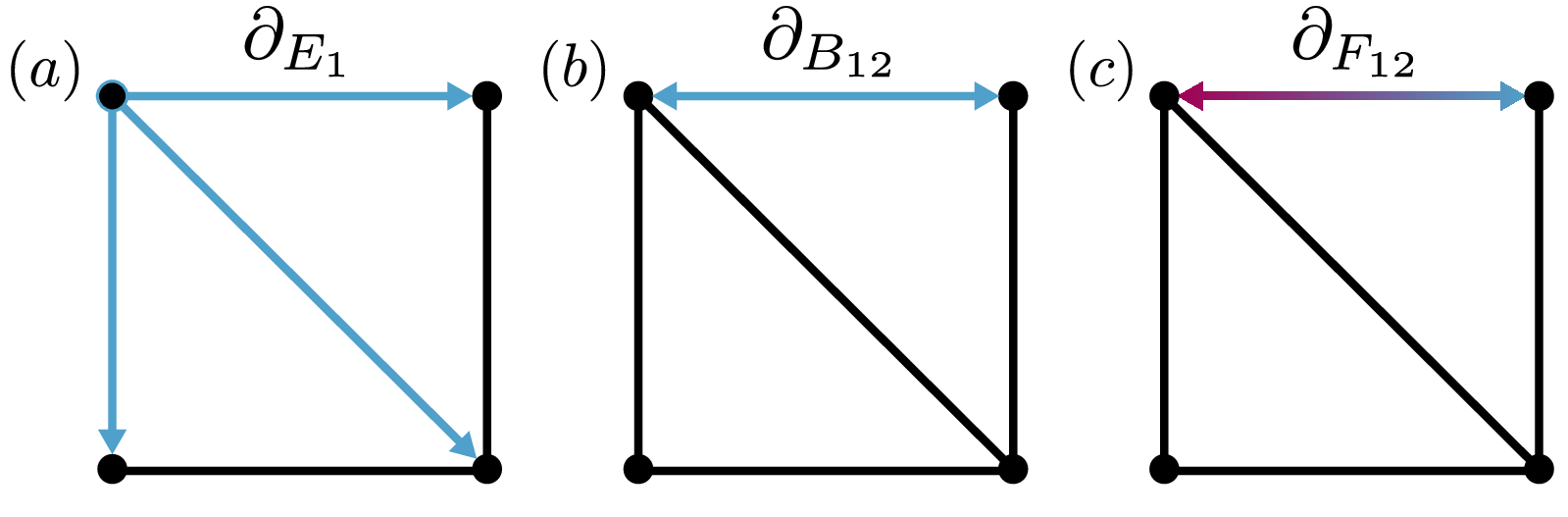}
\caption{{\bf Parameterizing perturbations.} Any perturbation of rates can be decomposed into the variation of some combination of $(a)$ vertex parameters, $(b)$ edge parameters, and $(c)$ asymmetric edge parameters.  Affected rates are highlighted.}\label{fig:derivative}
\end{figure}


Our main results are a series of simple thermodynamic equalities and inequalities for how the steady state responds to perturbations of the $E_j$, $B_{ij}$ and $F_{ij}$.
By combining these results, we can constrain the response to any arbitrary perturbation of the rates through a decomposition of the form
 \begin{equation}
\frac{\partial \pi_k}{\partial \lambda}=\sum_j \frac{\partial E_j}{\partial \lambda}\frac{\partial \pi_k}{\partial E_j}+\sum_{i>j}\frac{\partial B_{ij}}{\partial \lambda}\frac{\partial \pi_k}{\partial B_{ij}}+\sum_{i>j}\frac{\partial F_{ij}}{\partial \lambda}\frac{\partial \pi_k}{\partial F_{ij}}
\end{equation}
The freedom in the rate parameterization makes this decomposition non-unique, and the tightness of our inequalities will depend on the specific decomposition.
For example, one choice could force $\partial_\lambda E_i=0$ for all $i$.
We are not presently aware of a good strategy to identify the decomposition which yields optimally tight inequalities.
In this work, we show that simple decompositions can nevertheless yield interesting bounds.

In deriving our response results, the basic mathematical tool we rely on is the matrix-tree theorem (MTT), presented in Appendix \ref{sec:mtt}, which gives an exact algebraic expression for the steady-state probabilities $\pi_i$ in terms of the rates $W_{ij}$~\cite{schnakenberg1976,Hill}. 
All our results---presented in the following sections---are obtained by reasoning about the result of differentiating the expression given by the MTT. Proofs are given in the appendices.

%
%
%
%
%
%
%

\section{Vertex perturbations}
Our first main result is the exact expression for the response to a vertex perturbation (Appendix~\ref{sec:vertex})
\begin{equation}
\label{eq:ener}
\frac{\partial \pi_i}{\partial E_k} = \begin{cases} -\pi_i(1-\pi_i) &\mbox{if } i =k \\ 
\pi_i \pi_k & \mbox{if } i \neq k \end{cases}.
\end{equation}
We stress that the $B_{ij}$ and $F_{ij}$ are unrestricted, so this equality holds even for nonequilibrium steady states.
As an immediate consequence, we also find that
for $i \neq j$,
\begin{equation}
\frac{\partial \ln \left(\pi_i / \pi_j\right)}{\partial E_k} = \begin{cases} -1 &\mbox{if } i = k \\ 1 &\mbox{if } j = k  \\
0 & \text{otherwise} \end{cases},
\end{equation}
which implies that the relative probability between two states is insensitive to the vertex parameters elsewhere in the graph.

Remarkably, these equalities are exactly equivalent to the response of a Boltzmann distribution to energy perturbations, which leads to the surprising conclusion that far-from-equilibrium response has an equilibrium-like structure if the perturbation leaves the $B_{ij}$ and  $F_{ij}$ fixed.
To leverage this observation, let us assume that we vary the rates only through the system's energy function $\epsilon_i(\lambda)$ and that the rates depend on the energy as $W_{ij}=\omega_{ij}e^{\beta \epsilon_j}$, with arbitrary energy-independent $\omega_{ij}$.
Comparison with \eqref{eq:rates}, shows that variations in the energy $\epsilon_i$  in this case can be parameterized as vertex parameters $E_i$.
Then \eqref{eq:ener} implies that  arbitrarily far from equilibrium the response maintains the equilibrium-like form of the FDT, 
\begin{equation}
\partial_\lambda \langle Q\rangle_\pi = \beta C_\pi(Q,V), 
\end{equation}
with the response proportional to the nonequilibrium steady-state correlation with the coordinate conjugate to the energy $V=-\partial_\lambda \epsilon$ [cf.~\eqref{eqFDT} and \eqref{noneqFDT}].
This prediction implies that experimental verification of the static FDT is not sufficient to conclude that a system is in equilibrium.

\section{Symmetric edge perturbations}

More generally, a perturbation will modify not only the vertex parameters $E_i$, but also the edge parameters $B_{ij}$. 
While at equilibrium the steady state is independent of edge parameters $B_{ij}$, this is generically not the case out of equilibrium.
In this section, we demonstrate that in fact response to edge perturbations is constrained by thermodynamic affinities through the cycle forces.
For proofs of the results in this section, see Appendix \ref{sec:barrier}.

\subsection{Single edge}

Our second main result is that the response to the perturbation of a symmetric edge parameter $B_{mn}$, associated to a single edge $e_{mn}$, is constrained by the cycle forces:
\begin{align}
\label{eq:barBound2}
&\left|\frac{\partial \pi_i}{\partial B_{mn}}\right| \le \pi_i(1-\pi_i)\tanh\left(F_{\rm max}/4\right)\\
\label{eq:barBound1}
&\left|\frac{\partial (\pi_i/\pi_j)}{\partial B_{mn}}\right| \le \left(\frac{\pi_i}{\pi_j}\right)\tanh\left(F_{\rm max}/4\right),
\end{align}
where
\begin{equation}
F_{\rm max}=\max_{C\ni e_{mn}}|F_C|
\end{equation}
 is the maximum cycle force over all cycles that contain the (undirected) perturbed edge $e_{mn}$ (illustrated in Fig.~\ref{fig:barrier}).
 If the cycle forces all equal zero---as they must at equilibrium---then $F_{\rm max}=0$, and the response is zero, as expected.  
In addition, only perturbations of an edge contained in a cycle can induce a response: perturbations of edges whose removal would disconnect $G$ therefore cannot alter the steady state.
 Equation~\eqref{eq:barBound2}, furthermore, has the character of the FDT, once we recognize $\pi_i(1-\pi_i)$ as the variance of the occupation fluctuations of state $i$; thus, we see a manifestation of how thermodynamics shapes the interplay between response and fluctuations.
These inequalities, applying to all discrete stochastic dynamics, significantly generalize a bound for two-state systems derived by Hartich et al.~in a model of nonequilibrium sensing~\cite{Hartich2015}.
\begin{figure}
\includegraphics[scale=0.11]{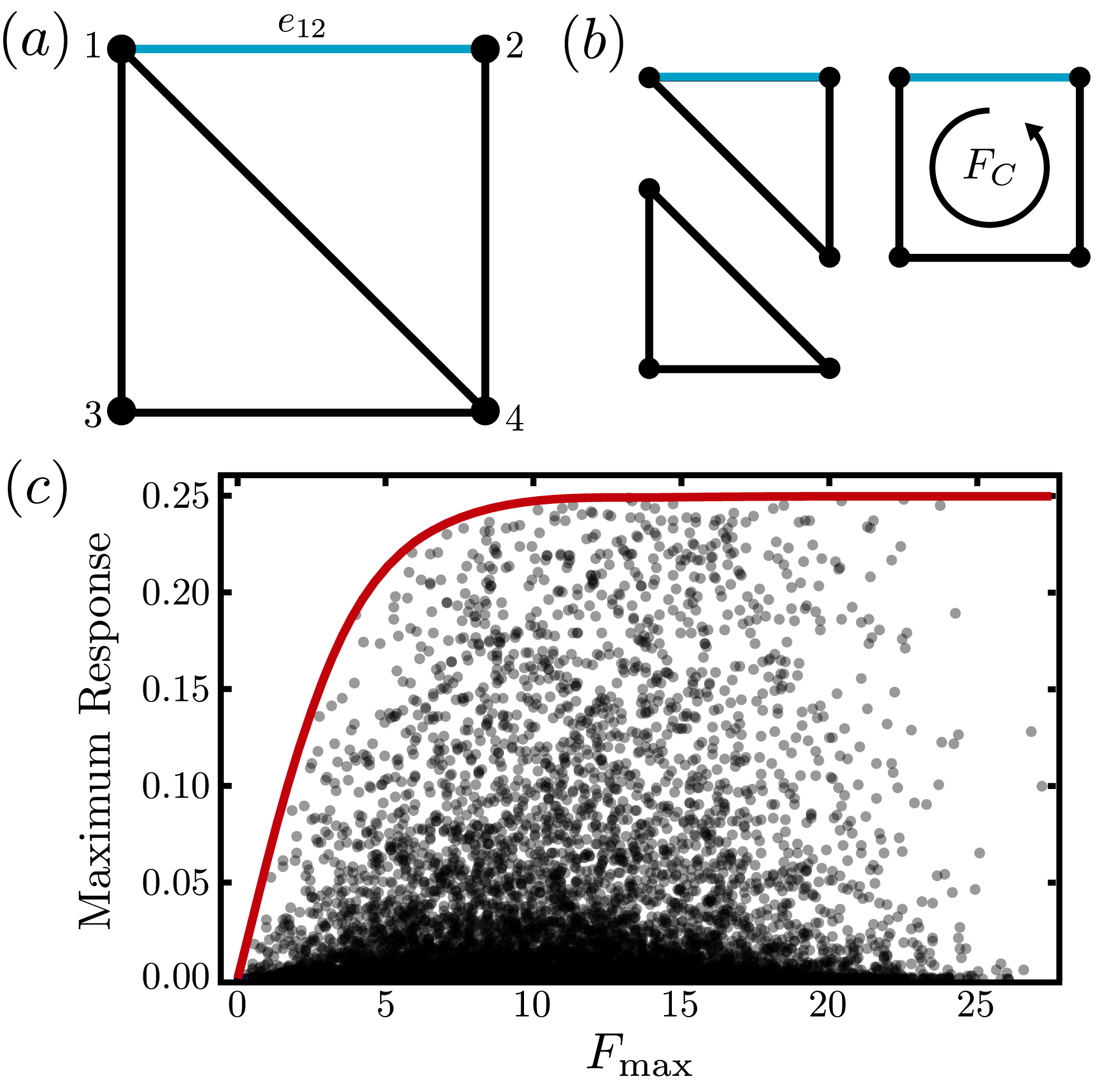}
\caption{{\bf Thermodynamics and topology bound response.} $(a)$ Transition graph for our representative $4$-state system with a single perturbed edge connecting states $1$ and $2$, $e_{12}$, highlighted in blue. $(b)$ Only cycle forces $F_C$ around cycles that contain the perturbed edge, $e_{12}$, highlighted in blue, constrain the static response.  $(c)$ The maximum response $\max_j|\partial \pi_j/\partial B_{12}|$ to the perturbation of the edge parameter $B_{12}$ as a function of maximum cycle force around cycles containing $e_{12}$ for $15000$ randomly sampled rate matrices (grey dots).  All samples fall below the predicted bound $(1/4)\tanh(F_{\rm max}/4)$ (red line).}\label{fig:barrier}
\end{figure}

The conditions for equality in \eqref{eq:barBound2} and \eqref{eq:barBound1} would suggest methods for designing optimized or highly responsive devices.
As detailed in Appendix~\ref{sec:saturate}, we can exhibit at least one scenario that does saturate \eqref{eq:barBound1}: a single cycle with strong time-scale separation so that the system effectively only has two states.
This limiting scenario suggests that small single cycle systems are ideal for optimizing response.
Systematically deducing the system parameters that saturate our inequalities in general remains for future work.

%
%

\subsection{Multiple edges}
 
The response to a perturbation of multiple edge parameters can be bounded by combining \eqref{eq:barBound1} with the triangle inequality. For example, for any set $S$ of $|S|$ edges,
\begin{equation}\label{eq:triangle}
\begin{split}
\left|\sum_{e_{mn} \in S} \frac{\partial \ln (\pi_i / \pi_j)}{\partial B_{mn}}\right|& \leq \sum_{e_{mn} \in S} \left|\frac{\partial \ln (\pi_i / \pi_j)}{\partial B_{mn}}\right| \\
& \leq |S|\tanh(F_\mathrm{max}/4).
\end{split}
\end{equation}
It is clear, however, that this inequality is not always the best we can do. Consider for example the case where $S$ consists of every edge in $G$. In this case, increasing all the edge parameters by the same amount (which is what the sum above amounts to) is like rescaling time, which cannot affect $\pi$ and therefore has zero response.

In this section, we provide a different bound on the response to a perturbation of multiple edge parameters that in many cases improves on \eqref{eq:triangle}. 
Suppose we vary the edge parameters associated to the edges $S$ of a subgraph $H\neq G$. Let $W$ be the set of vertices of $H$ that connect it to the rest of the graph (i.e.~the set of vertices of $H$ incident to an edge not in $H$).
Then,
\begin{equation}\label{eq:barrierBoundMulti}
\left|\sum_{e_{mn} \in S} \frac{\partial}{\partial B_{mn}}\ln \left(\frac{\pi_i}{\pi_j}\right)\right| \leq \left(|W|-1\right)\tanh\left(\frac{F_{i \leftrightarrow j}}{4}\right).
\end{equation}
where $F_{i \leftrightarrow j}$, defined precisely in Appendix \ref{sec:barrier}, can be physically identified as the largest possible entropy produced in the environment when the system goes from $i$ to $j$ and back again (along paths without self-intersection). 
Whenever there is only one path through state space between $i$ and $j$, and in all cases at thermodynamic equilibrium, $F_{i \leftrightarrow j} = 0$.

 
We finally note that our results \eqref{eq:barBound2}, \eqref{eq:barBound1}, and \eqref{eq:barrierBoundMulti} admit generalization to the response of a ratio of positive observables $\langle A\rangle / \langle B\rangle $.  In this general case, the bounds remain unchanged, except that $F_{i \leftrightarrow j}$ is replaced by its maximum over all pairs of vertices $i, j$.

\section{Asymmetric edge perturbations}

Lastly, the MTT allows us to bound the response to asymmetric edge perturbations as (Appendix \ref{sec:force})
\begin{equation}
\label{eq:forceBound}
\left|\frac{\partial \pi_i}{\partial F_{mn}}\right|\le \pi_i(1-\pi_i) \leq \frac{1}{4},
\end{equation}
which is related to, but distinct from, inequalities established in~\cite{thiede2015}. 
This result is a consequence of an identical inequality that holds for general rate perturbations $\partial \pi_i / \partial \ln W_{jk}$.

Any perturbation of rates can be decomposed into a linear combination of perturbations of the vertex and edge parameters $E_i$, $B_{ij}$, and $F_{ij}$ we have introduced, and so the response can be bounded using our inequalities (via the triangle inequality). What our results in this section show is that even for a general perturbations, there is a universal bound---the response to the variation of a single rate is bounded by a constant independent of the structure of $G$ or rates of other transitions, meaning that high sensitivity always requires many different transitions to be perturbed, their cumulative effect generating a response that can greatly exceed $1/4$.  

\section{Biochemical applications}

In this section, we illustrate the use of our main results by detailing applications to well-studied motifs appearing in biochemical networks.

\subsection{Covalent modification cycle}
First, we consider a well-studied model~\cite{Goldbeter1981,Qian2007,xu2012realistic,dasgupta2014fundamental} of a biological switch---the modification/demodification cycle depicted in Fig.~\ref{fig:phospho}, also known as the Goldbeter--Koshland loop \cite{Goldbeter1981, xu2012realistic}, or ``push--pull" network \cite{govern2014PNAS}.

\begin{figure}
	\includegraphics[scale=0.625]{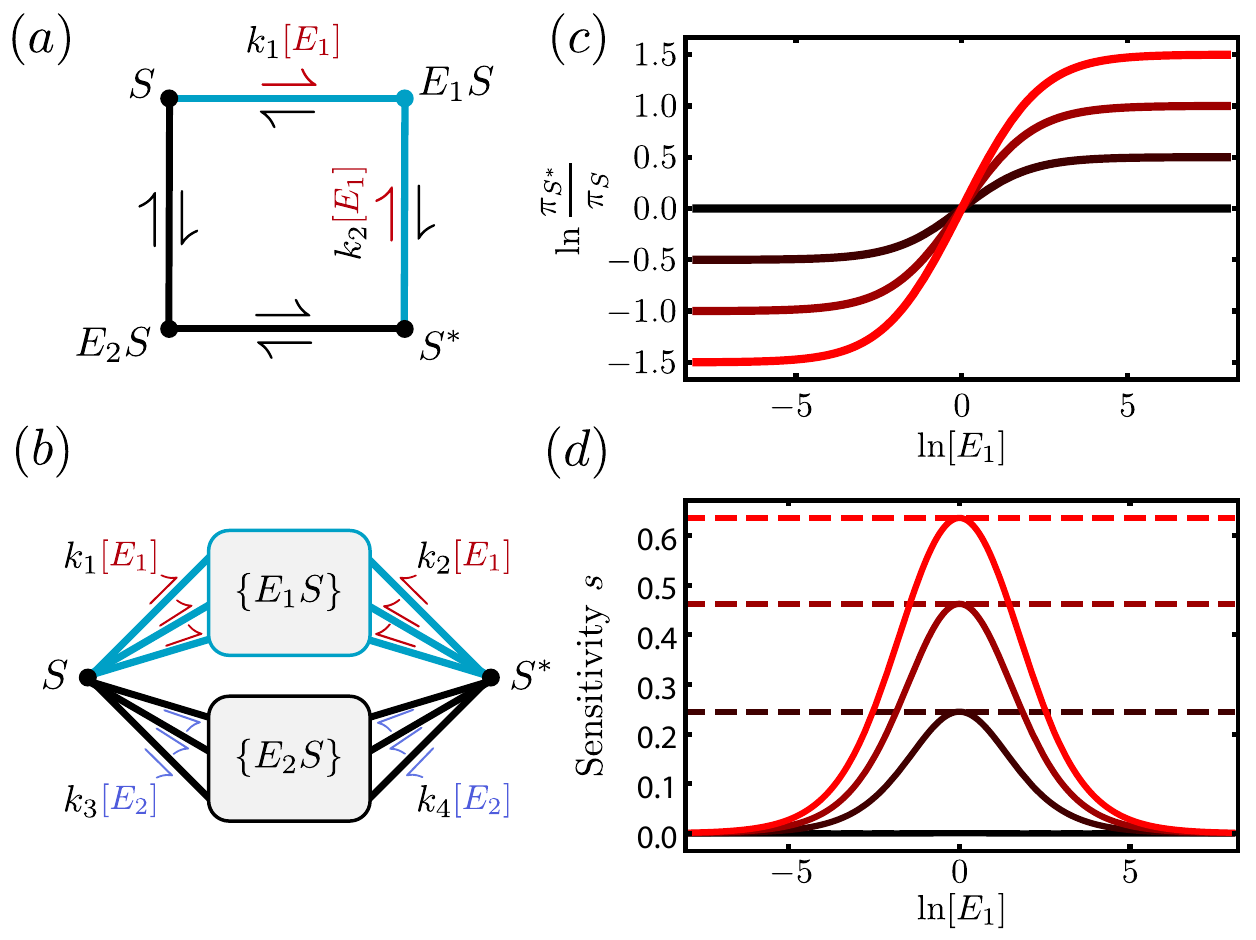}
	\caption{{\bf Sensitivity of a biochemical switch.} Modification/demodification cycle with $(a)$ a single intermediate or $(b)$ general enzymology.   Perturbations of $[E_1]$-dependent rates (red arrows) can be parameterized by blue vertex and edge perturbations.  $(c)$ Single-cycle switch-like behavior of $\ln(\pi_{S^*}/\pi_S)$ as a function of $\ln([E_1])$, with redder curves representing larger chemical driving $\Delta \mu$. $(d)$ Sensitivity $s=\left|\partial \ln(\pi_{S^*}/\pi_S)/ \partial\ln([E_1]) \right|$ as a function of $\ln([E_1])$ for same values of $\Delta \mu$, compared to predicted bound \eqref{eq:switch} (dashed lines).}
	\label{fig:phospho}
\end{figure}
The network consists of a substrate with two forms, an ``unmodified" $S$ and ``modified" $S^*$, along with enzymes, $E_1$ and $E_2$, that actively catalyze its modification and demodification, respectively. 
For example, if  $E_1$ is a kinase, $E_2$ a phosphatase, and $S^*$ a singly-phosphorylated form of $S$, then the system is driven by the chemical potential gradient $\Delta \mu=\mu_{\rm ATP}-\mu_{\rm ADP}-\mu_{{\rm P}_i}$ for ATP hydrolysis.
In the limit in which the substrate is very abundant compared to its modifying enzymes, it is well known that such a system can exhibit unlimited sensitivity to changes in the ratio of the concentrations of the modifying and demodifying enzymes~\cite{Goldbeter1981}.

In the other limit---that of a single substrate molecule---our results \eqref{eq:barBound1} limit the sensitivity of the ratio $\pi_{S^*}/\pi_{S}$ for a particular substrate molecule to changes in the enzyme concentration (Appendix \ref{sec:biochem}):
\begin{equation}\label{eq:switch}
s=\left|\frac{\partial \ln \left(\pi_{S^*}/\pi_S \right)}{\partial \ln [E_1]}\right|\leq \tanh(\Delta \mu/4k_{\rm B}T),
\end{equation}
where $F_{\rm max}=\Delta \mu/k_{\rm B}T$ is the single chemical driving force.
For the simple cycle in Fig.~\ref{fig:phospho}(a) where each enzyme has a single intermediate and we assume mass-action kinetics, this result arises from unraveling a change in $[E_1]$ as a change in the vertex parameter associated to $E_1 S$, together with changes in the parameters of the edges connecting $E_1S$ to $S$ and $E_1S$ to $S^*$.

In fact, inequality \eqref{eq:switch} turns out to hold under assumptions much more general than those of Fig.~\ref{fig:phospho}(a).
It remains true even if catalysis by $E_1$ and $E_2$ proceeds via any number of intermediate complexes with arbitrary rates as in Fig.~\ref{fig:phospho}(b), as long as there is no irreversible formation of a dead-end complex and the chemical driving is the same around every cycle in which $E_1$ makes the modification of $S$ and $E_2$ removes it~\cite{xu2012realistic,dasgupta2014fundamental}.
In this general case, the many perturbed vertex and edge parameters  (Fig.~\ref{fig:phospho}(b)) form a subgraph that acts effectively as a single edge perturbation.
Our multi-edge bound \eqref{eq:barrierBoundMulti} then applies with $F_{S\leftrightarrow S^*}=\Delta\mu/k_{\rm B}T$ being the maximum entropy produced to go from the unmodified $S$ to the modified $S^*$ form and back again.

In the absence of nonequilibrium drive ($\Delta\mu=0$), it is clear this switch cannot work, because it operates by varying the kinetics via an enzyme concentration, and at equilibrium the steady state is independent of kinetics. It has long been known that switches require energy~\cite{Goldbeter1987, Qian2007, tu2008nonequilibrium}. Our results provide a general quantification of this requirement.

\subsection{Biochemical sensing}
The covalent modification cycle, discussed in the previous section, is an integral component of numerous biochemical models for cellular sensing~ \cite{berg1977physics, bialek2005physical, kaizu2014berg, govern2014PRL, govern2014PNAS}.
So far, we have described a single substrate molecule stochastically switching between its two forms due to the action of abundant enzymes $E_1$ and $E_2$.
Here, we imagine there are $N$ substrate molecules, which act as $N$ independent copies of the system studied above, as long as the numbers of both enzymes greatly exceeds $N$. 
Then the number $s$ of modified substrate molecules $S^*$  can  be interpreted as a noisy readout of the enzyme concentration $[E_1]$.
The random variable $s$ is binomially distributed, with mean $\mu_s = N \pi_{S^*}$ and variance $\sigma_{s}^2 = N \pi_{S^*}(1-\pi_{S^*})$, which implicitly depend on $[E_1]$.
Thus, this scenario provides a mechanism to measure  a chemical concentration, by exploiting the relation between $s$ and $[E_1]$.


Now suppose one makes the observation at some time that there are $\tilde{s}$ molecules of $S^*$. Supposing $[E_2]$ and all rate constants assume known, fixed values, one can produce an estimate $\hat{[E_1]}$ of $[E_1]$ by choosing $\hat{[E_1]}$ to be the value of $[E_1]$ that gives $\mu_s(\hat{[E_1]}) = \tilde{s}$. The variance of the estimate $\hat{[E_1]}$ so constructed is often well-approximated, when the noise is small, by~\cite{govern2014PRL, govern2014PNAS} 
\begin{equation}
\label{eq:errorprop}
\sigma_{e}^2 \approx \frac{\sigma_{s}^2}{(\partial \mu_s/\partial [E_1])^2},
\end{equation}
where the quantities on the right hand side should be evaluated at the true concentration $[E_1]$.  Our result \eqref{eq:barBound2} combined with the probabilistic inequality $\pi_{S^*}(1-\pi_{S^*})\le1/4$ then leads to the following bound on the relative error
\begin{equation}
\label{eq:sensingbound}
\left(\frac{\sigma_e}{[E_1]}\right)^2 \gtrsim \frac{4}{N \tanh(\Delta \mu/4 k_\mathrm{B} T)^2}.
\end{equation}
This result interpolates between bounds on error established by Govern and ten Wolde \cite{govern2014PRL, govern2014PNAS} in two limits of resource limitation in cellular sensing systems. That work studies a model of sensing in which cell surface receptors bind to an extracellular ligand whose concentration the cell needs to determine. The ligand-bound receptors then participate in a modification/demodification cycle like the one we study here, playing the role of $E_1$.  See Appendix \ref{tenWolde_consistency} for details.



\subsection{Kinetic proofreading}

As a third application, we turn to the effectiveness of kinetic proofreading~\cite{Hopfield1974, Ninio1975}. A common challenge faced by biomolecular processes is that of discriminating between two very similar chemical species.  At equilibrium, the probability of an enzyme $E$ being bound to a substrate $S$, divided by the probability of that enzyme being free is $\exp(-\Delta)$, where $k_B T \Delta$ is the binding (free) energy of the enzyme-substrate complex $ES$. Two substrates with very similar binding energies are constrained to be bound by the enzyme a similar fraction of time.

\begin{figure}
	\includegraphics[scale=.095]{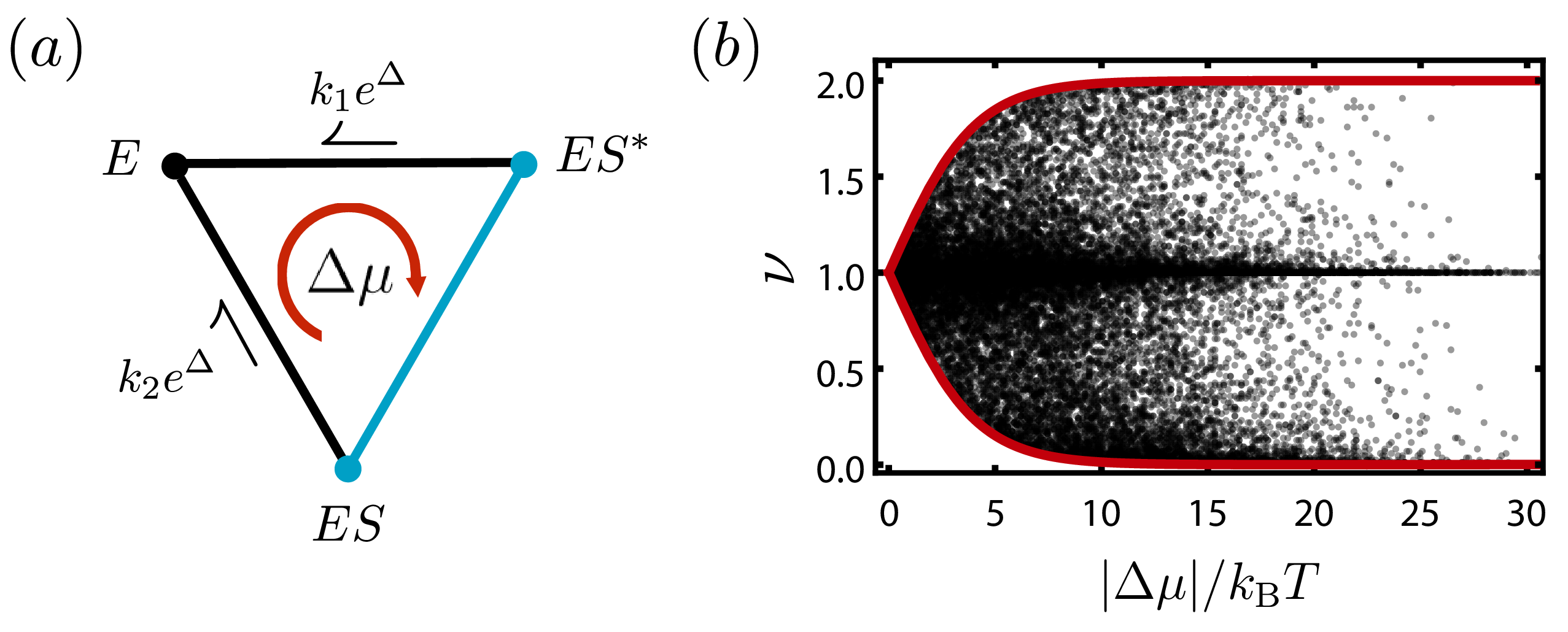}  
\caption{{\bf Bounding the discriminatory index.} $(a)$~Graph of a single-cycle kinetic proofreading network.  Perturbations in the (dimensionless) binding energy $\Delta$ can be unraveled as blue vertex and edge perturbations. $(b)$ Discriminatory index $\nu$ plotted against the thermodynamic affinity $|\Delta \mu| / k_\mathrm{B}T$ for the single-cycle network generated from 30000 randomly sampled transition rates. All samples fall within the predicted bound \eqref{eq:nubound} (red line).}
\label{fig:kp}
\end{figure}
Kinetic proofreading is a scheme to use nonequilibrium driving to improve discrimination based on binding energy. One way to quantify the discriminatory ability of a kinetic network is using the {discriminatory index} introduced by Murugan et al.~\cite{Murugan2014},
\begin{equation}
\nu = - \frac{\partial \ln(\pi_{E}/\pi_{ES})}{\partial \Delta}. 
\end{equation}
At equilibrium, $\nu=1$.
 The simplest nonequilibrium scheme to improve on this is the single-cycle network illustrated in  Fig.~\ref{fig:kp}(a). Note that we have supposed the binding energy $\Delta$ appears exclusively in the unbinding rates. Hopfield observed that in a certain nonequilibrium limit of the rates, $\nu \to 2$~\cite{Hopfield1974}. Our results lead to a constraint on $\nu$ that interpolates between the equilibrium case and this limit. 

In the single-cycle network, the variation of the binding energy $\Delta$ is equivalent to the variation of two vertex parameters (that of $ES$ and $ES^*$) and a single edge parameter ($ES \leftrightarrow ES^*$), leading to the inequality (Appendix \ref{sec:biochem}),
\begin{equation}\label{eq:nubound}
|\nu - 1| \leq \tanh(\Delta\mu/4k_{\rm B}T),
\end{equation}
where $F_\mathrm{max} = \Delta\mu/k_{\rm B}T$ is the chemical driving around the cycle. This bound, which can be saturated, reduces correctly to $\nu = 1$ at equilibrium and is consistent with $\nu \to 2$ in the limit of strong driving $\Delta \mu \to \infty$.

We can also bound $\nu$ in the case of a more general kinetic proofreading scheme \cite{murugan2012speed, Murugan2014} in which there are $m$ complexes that can dissociate. Each of the dissociation transitions can be thought of as crossing a ``discriminatory fence" \cite{Murugan2014}, its rate depending on the binding energy $\Delta$, as in Fig.~\ref{fig:KP_suppl_2}. 
\begin{figure}
	\centering
	\scalebox{0.6}{\includegraphics{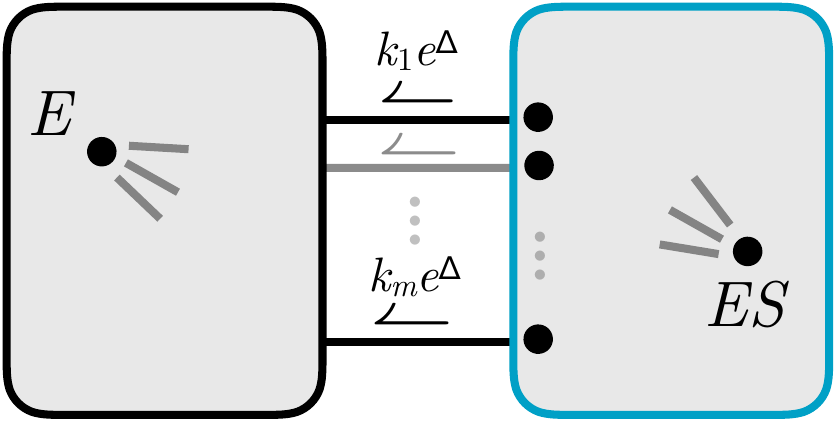}}
	\caption{{\bf Multi-step kinetic proofreading scheme}. The collection of edges with rates that depend on the binding energy $\Delta$ specify a ``discriminatory fence". Perturbing $\Delta$ is equivalent to perturbing vertex and edge parameters of the subgraph labeled in blue.}
	\label{fig:KP_suppl_2}
\end{figure}
We suppose that the dissociation transitions are the only ones that depend on $\Delta$.  We make no assumptions about the structure of the transition graph on either side of the fence. In such a network, perturbing $\Delta$ is equivalent to perturbing the edge and vertex parameters on one side of the ``fence'' forming the subgraph highlighted in blue in Fig.~\ref{fig:KP_suppl_2}.
We then have (Appendix \ref{sec:biochem})
\begin{align}
\left|\nu - 1\right| \le(m-1)\tanh(F_{E \leftrightarrow ES}/4),
\end{align}
where $F_{E \leftrightarrow ES}$ is the maximum entropy produced to go from $E$ to $ES$ and back again.
Notably, we recover the result $\left|\nu - 1\right| \leq m-1$ of \cite{Murugan2014}.

\section{Conclusion}
In this work, we have developed a series of universal bounds on nonequilibrium response in terms of the strength of the nonequilibrium driving. We show that for a large class of static perturbations, a result equivalent to the FDT continues to hold out of equilibrium. For many other perturbations, we bound the response in terms of the dimensionless thermodynamic forces, which quantify departure from equilibrium. 

The illustrations detailed in the previous section demonstrate the potential of our results to unify long-standing observations about the importance of energy ``expenditure" in many different models. The tasks of making a sharp molecular switch, a good sensor, or discriminating between two similar ligands, all have in common the need for a large response to a small perturbation.  We find new bounds interpolating between known limits in these systems, and show how they all descend from our results on vertex and edge perturbations.

A more detailed analysis of the conditions under which our bounds are saturated would lead to design principles for optimal response. 
Our preliminary investigation identified single cycles as ideal when a single edge parameter is varied. We expect that for more complex perturbations, the most highly responsive systems may have a more complicated structure.

An important theme highlighted by our work is that sensitivity is limited not only by nonequilibrium driving, but also, very strictly, by network size and structure. The total number of transitions in a biochemical network limits response, because the response to the scaling of any one rate is bounded by $1/4$. At the same time, our multi-edge results show how many enlargements or complications of networks (e.g.~departure from Michaelis-Menten assumptions in the covalent modification cycle), do not confer any advantage. In this sense, our results build on the work of others who studied similar questions in the context of kinetic proofreading \cite{Murugan2014} and biochemical copy processes \cite{ouldridge2017thermodynamics}.



%
%
%
%

Our results point to numerous other extensions, including bounds on the response of currents with implications for the Green-Kubo and Einstein relations~\cite{Seifert2010b,Baiesi2011,Dechant2018}. 
We have also focused on results that hold in general, not taking into account possible characteristic structures in the graph of states and transitions, which are present for example in many natural examples, such as chemical reaction networks. The study of such extensions and special cases strike us as promising directions for future work. 


%

\section*{Acknowledgments}
We would like to thank Alexandre Solon and Matteo Polettini for very useful discussions. JAO would like to thank Jeremy England for advice and support.

\appendix

\section{Matrix-tree theorem}\label{sec:mtt}

%
The key tool that we apply in our analysis of nonequilibrium response is the matrix-tree theorem (MTT).
%
%
To state the theorem, we must introduce some additional notation and concepts.

For any set of directed edges $S=\{i\to j,k\to l,\dots\}$, we define the weight $w(S)$ to be the \emph{product} of the weights of the edges,
	\begin{equation}
	w(S) = W_{ji}W_{lk}\cdots.
	\end{equation}
	The weight $w(H)$ of any subgraph $H$ we define to be the weight of its edge set.

We also need to introduce {spanning trees}, which are connected subgraphs of a graph $G$ that contain every vertex, but have no cycles, see Fig.~\ref{fig:tree}.
\begin{figure}[h]
\centering
\includegraphics[scale=.75]{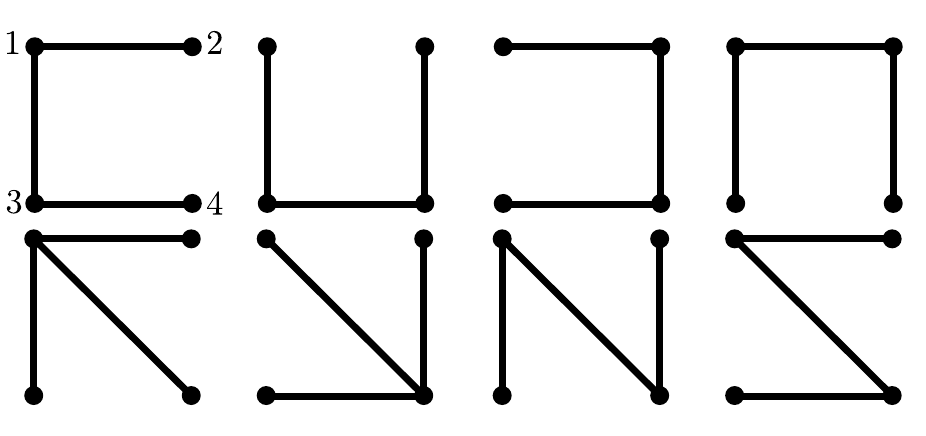}
\caption{{\bf Spanning trees.} All spanning trees for our $4$-state illustrative graph.}\label{fig:tree}
\end{figure}
Every graph that is connected (as is, by assumption, the transition graph of our system) has at least one spanning tree.
For any spanning tree $T$ and vertex $r$ of $G$, there is a unique way to direct the edges of $T$ so that they all ``point towards" $r$, which we then call the ``root". The resulting directed graph, which we write $T_r$, is a {rooted spanning tree} of $G$. 
The steady-state distribution $\pi$ is given explicitly by the matrix-tree theorem (MTT) \cite{tutte1948, hill1966, shubert1975, schnakenberg1976, leighton1986, mirzaev2013} in terms of weights of rooted spanning trees of $G$.

\begin{theorem*}[matrix-tree theorem]\label{mtt}
Let $W$ be the transition rate matrix of an irreducible continuous-time Markov chain with $N$ states. Then the unique steady-state distribution is \begin{equation}
{\pi}_k = \frac{1}{\mathcal{N}}\sum_{\substack{\text{\rm spanning trees} \\ \text{\rm $T$ of $G$}}}\ w(T_k),
\end{equation}
where ${\mathcal N}=\sum_{k=1}^N \sum_T\ w(T_k)$ is the normalization constant.
\end{theorem*}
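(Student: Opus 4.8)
The plan is to verify directly that the unnormalized vector $\rho_k := \sum_T w(T_k)$, with the sum running over spanning trees $T$ of $G$, satisfies $\sum_j W_{kj}\rho_j = 0$; since uniqueness of the stationary distribution for an irreducible chain has already been taken for granted above, it then only remains to normalize. Using the probability-conservation identity $W_{kk}=-\sum_{j\neq k}W_{jk}$, the stationarity equation for index $k$ becomes equivalent to
\begin{equation}
\sum_{j\neq k} W_{kj}\,\rho_j \;=\; \Big(\sum_{j\neq k} W_{jk}\Big)\rho_k ,
\end{equation}
so it suffices to exhibit a weight-preserving identification showing the two sides enumerate the same family of subgraphs.

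That family consists of what I will call \emph{unicyclic spanning subgraphs rooted at $k$}: connected subgraphs of $G$ containing all $N$ vertices and exactly $N$ directed edges, one out of each vertex, whose (necessarily unique) cycle passes through $k$, with weight $w(\Phi)$ the product of the rates of the edges of $\Phi$. The left-hand side counts these: $W_{kj}\,w(T_j)$ is the weight of $T_j$ with the adjoined edge $j\to k$, and adding that edge to the tree rooted at $j$ produces such a $\Phi$ (its cycle being the $T_j$-path from $k$ to $j$ closed by $j\to k$); conversely, deleting from any such $\Phi$ the unique cycle-edge that enters $k$ returns a spanning tree rooted at that edge's tail $j$, and since $G$ has no self-loops the tail satisfies $j\neq k$, so the bijection respects the summation range. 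The right-hand side counts the same objects, now with the adjoined/deleted edge being $k\to j$, the unique out-edge of $k$ (necessarily a cycle-edge since $k$ lies on the cycle), whose removal leaves a spanning tree rooted at $k$ and supplies the factor $W_{jk}$. Both sides therefore equal $\sum_\Phi w(\Phi)$ over this family, so $W\rho=0$.

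To conclude, connectivity of $G$ guarantees at least one spanning tree, and all the rates appearing in $w(T_k)$ are strictly positive, so $\rho_k>0$ for every $k$; hence $\pi:=\rho/\mathcal N$ with $\mathcal N=\sum_k\rho_k$ is a genuine probability vector in the kernel of $W$, which by irreducibility is the unique steady state. The one step demanding care is the combinatorial bijection, and specifically keeping the edge orientations straight: $T_k$ orients tree edges \emph{toward} the root $k$, whereas $W_{kj}$ weights the edge $j\to k$, so the edge one adjoins or deletes is always exactly the one that opens or closes the cycle at $k$. A more mechanical alternative, should that bookkeeping prove cumbersome, is to recognize $\rho_k$ as (up to sign) a diagonal cofactor of $W$ via the all-minors matrix-tree theorem and invoke $\mathrm{adj}(W)\,W=\det(W)\,I=0$ together with $\operatorname{rank} W = N-1$.
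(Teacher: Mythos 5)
Your proof is correct, but it is worth noting that the paper does not actually prove this statement: it presents the matrix-tree theorem as a known result, deferring to Tutte and the subsequent rediscoveries (Hill, Shubert, Schnakenberg, Leighton--Rivest, Mirzaev--Gunawardena) for the proof. What you supply is the standard bijective proof of the Markov chain tree theorem, and it holds together. The one place where such arguments usually go wrong is the orientation bookkeeping, and you have it right under the paper's conventions: the master equation uses $W_{ij}$ as the rate from $j$ to $i$, the directed edge $i\to j$ carries weight $W_{ji}$, and a rooted tree $T_k$ points all edges toward $k$; hence $W_{kj}\,w(T_j)=w(T_j\cup\{j\to k\})$ and $W_{jk}\,w(T_k)=w(T_k\cup\{k\to j\})$, and both sides of the balance equation at $k$ enumerate exactly the connected spanning subgraphs with out-degree one at every vertex whose unique cycle passes through $k$ (on the left one deletes the cycle edge entering $k$, on the right the cycle edge leaving $k$). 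The positivity step is also needed and you include it: irreducibility plus the reversibility assumption $W_{ij}\neq 0\Leftrightarrow W_{ji}\neq 0$ make $G$ connected as an undirected graph, so every $\rho_k$ is a nonempty sum of strictly positive terms, and uniqueness of the kernel direction for an irreducible generator finishes the argument. Compared with simply citing the literature, your self-contained combinatorial proof costs a page but makes transparent exactly the tree-and-cycle structure that the paper's later ``tree surgery'' lemmas manipulate; the cofactor route via $\mathrm{adj}(W)\,W=0$ that you mention as a fallback is equally valid and shorter, at the price of importing the all-minors matrix-tree theorem as a black box.
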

\noindent This theorem, also known as the Markov chain tree theorem, is a consequence of a result of Tutte \cite{tutte1948}, and has been rediscovered repeatedly in different literatures, see e.g.~\cite{hill1966, shubert1975, schnakenberg1976, leighton1986} and \cite{mirzaev2013} for further discussion.

The MTT offers a graphical representation of the steady-state distribution that provides a convenient method for organizing the structure of the solution.
We illustrate this result in Fig.~\ref{fig:mtt}.
\begin{figure}[h]
\centering
\includegraphics[scale=.6]{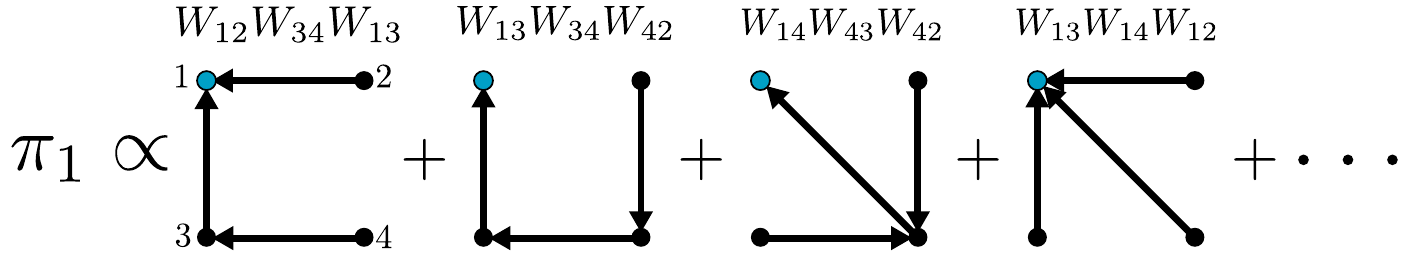}
\caption{{\bf  Matrix-tree theorem.} Graphical representation of the steady-state probability $\pi_1$ as the sum of all spanning trees rooted at $1$ (blue vertex).}\label{fig:mtt}
\end{figure}

\section{Vertex perturbations}\label{sec:vertex}

\begin{theorem}\label{energypert}
\begin{equation}
\frac{\partial \pi_i}{\partial E_k} = \begin{cases} -\pi_k(1-\pi_k) &\mbox{if } i = k \\ 
\pi_k \pi_i & \mbox{if } i \neq k \end{cases}.
\end{equation}
\end{theorem}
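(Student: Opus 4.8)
The plan is to apply the matrix-tree theorem directly and track how the weight of each rooted spanning tree depends on the single vertex parameter $E_k$. First I would record the elementary fact that, in the parameterization \eqref{eq:rates}, the rate $W_{ij}$ depends on the vertex parameter of its \emph{source} vertex $j$ alone, through $W_{ij} = e^{E_j}\exp[-(B_{ij}-F_{ij}/2)]$; hence $\partial W_{ij}/\partial E_k = \delta_{jk}W_{ij}$, i.e.\ varying $E_k$ rescales every rate leaving state $k$ by the common factor $e^{E_k}$, consistent with \eqref{eq:vertchainRule}.

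Next, write $\pi_m = Z_m/\mathcal{N}$ with $Z_m = \sum_T w(T_m)$ the unnormalized sum over spanning trees rooted at $m$ and $\mathcal{N} = \sum_m Z_m$. The key combinatorial observation is that in a spanning tree rooted at $m$, every vertex other than the root has exactly one outgoing edge (the first edge on the unique path toward the root), while the root has none. Consequently, for $m \neq k$ each directed tree $T_m$ contains exactly one edge out of $k$, so $w(T_m)$ is proportional to $e^{E_k}$ and $\partial_{E_k} w(T_m) = w(T_m)$; for $m = k$, no tree rooted at $k$ has an edge out of $k$, so $w(T_k)$ does not depend on $E_k$ at all. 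Summing over trees gives $\partial_{E_k} Z_m = Z_m$ for $m \neq k$ and $\partial_{E_k} Z_k = 0$, hence $\partial_{E_k}\mathcal{N} = \mathcal{N} - Z_k$.

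Finally, I would differentiate $\pi_i = Z_i/\mathcal{N}$ using the quotient rule. For $i = k$ this yields $\partial_{E_k}\pi_k = -Z_k(\mathcal{N} - Z_k)/\mathcal{N}^2 = -\pi_k(1-\pi_k)$; for $i \neq k$ it yields $\partial_{E_k}\pi_i = \bigl(Z_i \mathcal{N} - Z_i(\mathcal{N} - Z_k)\bigr)/\mathcal{N}^2 = Z_i Z_k/\mathcal{N}^2 = \pi_i \pi_k$, which is the claim. I would close by noting that although the parameterization \eqref{eq:rates} is non-unique, the quantity being computed is manifestly parameterization-independent, since the derivative only ever enters through the combination $\partial W_{ij}/\partial E_k$.

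I do not expect a genuine analytic obstacle here: the whole argument reduces to the combinatorial lemma on out-degrees in rooted spanning trees together with careful bookkeeping of the $e^{E_k}$ dependence of tree weights. The one point that requires care — and that I would state and justify explicitly before using it — is precisely that every non-root vertex of a rooted spanning tree has out-degree exactly one, so that the $E_k$-dependence of $w(T_m)$ is exactly linear in $e^{E_k}$ (never absent for $m\neq k$, never higher-order); this is where the acyclic, connected, single-root structure is essential.
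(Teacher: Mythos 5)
Your proof is correct and follows essentially the same route as the paper's: the matrix--tree theorem combined with the observation that a rooted spanning tree has exactly one edge directed out of $k$ unless it is rooted at $k$, so that $w(T_m)\propto e^{E_k}$ for $m\neq k$ and is independent of $E_k$ for $m=k$. Your bookkeeping via $Z_m$ and the quotient rule is just a notational variant of the paper's $a$, $b e^{E_k}$, $c e^{E_k}$ decomposition, and your explicit justification of the out-degree fact is exactly the point the paper also emphasizes.
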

\begin{proof}
The matrix-tree theorem implies that $\pi_i$ can be expressed as the ratio of sums of weights of rooted spanning trees. So to evaluate $\partial \pi_i / \partial E_k$, we need to understand in which spanning trees, and in what form, $E_k$ appears. 
The only rates that depend on $E_k$ are rates of transitions {out} of $k$, $W_{*k}=\exp(E_k-B_{*k}+F_{*k}/2)$, see Fig.~\ref{fig:derivative}.  Furthermore, any rooted spanning tree has exactly one edge directed out of $k$, unless the tree is rooted at $k$, in which case it has none. 
These observations allow us to group spanning trees in the MTT expression for the steady-state distribution in a convenient manner as  illustrated in Fig~\ref{fig:vertex}.

Thus, for $i = k$,  the matrix-tree theorem implies that
\begin{equation}
\pi_k = \frac{a}{a + b e^{E_k}} ,
\end{equation}
where
\begin{equation}
a=\sum_{T}w(T_k),\qquad be^{E_k}=\sum_{j\neq k}\sum_{T}w(T_j).
\end{equation}
Here, $a$ is the sum of weights of all spanning trees rooted at $k$---these do not depend on $E_k$ since they have no edge directed out of $k$---and $be^{E_k}$ is the sum of weights of all spanning trees not rooted at $k$---each of these has exactly one factor of $E_k$, making $b$ independent of $E_k$. 

If $i \neq k$, the MTT yields by a similar argument 
\begin{equation}
\pi_i = \frac{c e^{E_k}}{a + b e^{E_k}},
\end{equation}
with 
\begin{equation}
ce^{E_k}=\sum_{T}w(T_i).
\end{equation}
\begin{figure*}[hbt]
\centering
\includegraphics[scale=.65]{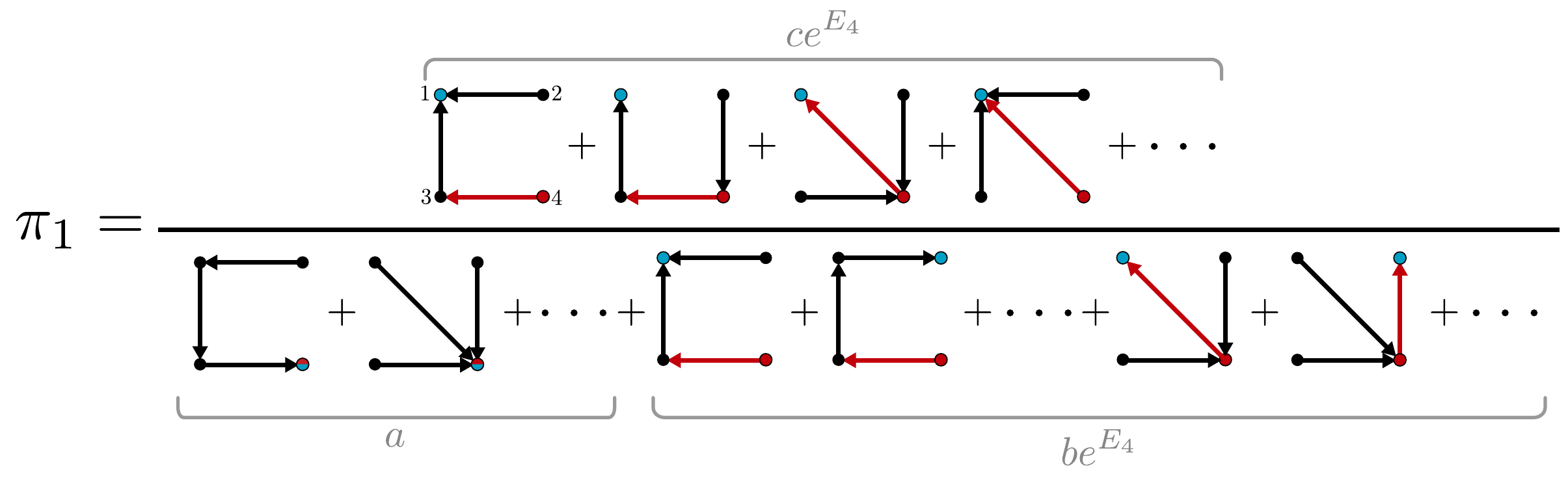}
\caption{{\bf Vertex perturbation.} Groupings of spanning trees with roots labeled in blue for $\pi_1$ ($i=1)$ utilized in the vertex perturbation derivation when the perturbed vertex is $k=4$, labeled in red.  Affected rates labeled in red.}\label{fig:vertex}
\end{figure*}


The theorem now follows by differentiating these expressions.
When $i\neq k$,
\begin{equation}
\begin{split}
\frac{\partial \pi_i }{\partial E_k}& = \frac{c e^{E_k}}{a + b e^{E_k}}-\frac{c e^{E_k}b e^{E_k}}{(a + b e^{E_k})^2} \\
&=\frac{c e^{E_k}}{a + b e^{E_k}}\left(\frac{a}{a + b e^{E_k}}\right)=\pi_i \pi_k,
\end{split}
\end{equation}
and similarly for $i=k$.
\end{proof}

\begin{corollary} 
\label{ratenergypert}
If $i \neq j$,
\begin{equation}
\frac{\partial \ln \left(\pi_i / \pi_j\right)}{\partial E_k} = \begin{cases} -1 &\mbox{if } i = k \\ 1 &\mbox{if } j = k \text{, and}\\ 
0 & \text{otherwise.} \end{cases}
\end{equation}
\end{corollary}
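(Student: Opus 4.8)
The plan is to derive the corollary directly from Theorem~\ref{energypert} by writing $\ln(\pi_i/\pi_j) = \ln \pi_i - \ln \pi_j$ and applying the chain rule, $\partial_{E_k}\ln(\pi_i/\pi_j) = (\partial_{E_k}\pi_i)/\pi_i - (\partial_{E_k}\pi_j)/\pi_j$. Since $i \neq j$ by hypothesis, the vertex index $k$ can coincide with at most one of $i$ and $j$, so there are exactly three cases to check: $k = i$, $k = j$, and $k \notin \{i,j\}$.

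First I would handle the generic case $k \notin \{i,j\}$: here Theorem~\ref{energypert} gives $\partial_{E_k}\pi_i = \pi_i \pi_k$ and $\partial_{E_k}\pi_j = \pi_j \pi_k$, so the two terms become $\pi_k - \pi_k = 0$. Next, for $k = i$, the first term uses the diagonal case $\partial_{E_i}\pi_i = -\pi_i(1-\pi_i)$, giving $-(1-\pi_i)$, while the second term uses the off-diagonal case $\partial_{E_i}\pi_j = \pi_j \pi_i$, giving $\pi_i$; summing yields $-(1-\pi_i) - \pi_i = -1$. The case $k = j$ is symmetric: the first term contributes $\pi_j$ and the second contributes $-(-(1-\pi_j)) = (1-\pi_j)$, for a total of $+1$. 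This reproduces the three cases in the statement.

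There is essentially no obstacle here — the corollary is a one-line consequence of the theorem once the chain rule is applied, and the only thing to be careful about is bookkeeping the sign from differentiating $-\ln\pi_j$ and correctly invoking the diagonal versus off-diagonal branch of Theorem~\ref{energypert} in each case. One could alternatively package all three cases at once by noting $\partial_{E_k}\ln\pi_i = \pi_k - \delta_{ik}$ (which follows from both branches of the theorem simultaneously), so that $\partial_{E_k}\ln(\pi_i/\pi_j) = (\pi_k - \delta_{ik}) - (\pi_k - \delta_{jk}) = \delta_{jk} - \delta_{ik}$, and then read off the three values; I would likely present the argument in this compact form, with the case-by-case check relegated to a remark if needed for clarity.
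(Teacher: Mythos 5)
Your proposal is correct and follows essentially the same route as the paper's proof: expand $\partial_{E_k}\ln(\pi_i/\pi_j)$ as $(\partial_{E_k}\pi_i)/\pi_i - (\partial_{E_k}\pi_j)/\pi_j$ and apply Theorem~\ref{energypert} case by case. The compact packaging $\partial_{E_k}\ln\pi_i = \pi_k - \delta_{ik}$ is a tidy equivalent reformulation but not a substantively different argument.
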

\begin{proof}
First, note that
\begin{equation}
\frac{\partial \ln \left(\pi_i / \pi_j\right)}{\partial E_k} = \frac{1}{\pi_i}\frac{\partial \pi_i}{\partial E_k} -\frac{1}{\pi_j}\frac{\partial \pi_j}{\partial E_k}.
\end{equation}
Now we apply Theorem \ref{energypert}. If $i=k$, then $j \neq k$, and $\frac{\partial \ln \left(\pi_i / \pi_j\right)}{\partial E_k} = -(1-\pi_k)-\pi_k = -1$. If $j=k$, then $i \neq k$, and $\frac{\partial \ln \left(\pi_i / \pi_j\right)}{\partial E_k} = \pi_k+(1-\pi_k )= 1$. And if neither $i$ nor $j$ equal $k$, then $\frac{\partial \ln \left(\pi_i / \pi_j\right)}{\partial E_k} = \pi_k-\pi_k = 0$.
\end{proof}

\section{Symmetric edge perturbations}\label{sec:barrier}


\subsection{Single edge}

In  this appendix, we bound the response to the perturbation of a single symmetric edge parameter in terms of the cycle forces driving the system out of equilibrium. 

First, we prove a general bound on the response of a ratio of observables. Equations \eqref{eq:barBound2} and \eqref{eq:barBound1} will then follow as corollaries by choosing suitable observables.

\begin{theorem}\label{mainresult}
Consider any two observables $A, B \in \mathbb{R}_{\geq 0}^N$ with at least one positive entry. Then,
\begin{equation}
\left|\frac{\partial}{\partial B_{mn}} \ln \frac{\langle A \rangle}{\langle B\rangle}\right| \leq \tanh\left(\frac{F_\mathrm{max}}{4}\right)
\end{equation}
where $F_\mathrm{max}$ is the magnitude of the cycle force that is largest in magnitude, among all those associated to cycles containing the distinguished edge $m \leftrightarrow n$ (in either direction).
\end{theorem}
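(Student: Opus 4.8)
The plan is to apply the matrix-tree theorem (MTT) to express both $\langle A \rangle$ and $\langle B \rangle$ as polynomials in the two rates $W_{mn}$ and $W_{nm}$ associated to the distinguished edge, exploiting the structure of spanning trees. Since each rooted spanning tree $T_r$ uses at most one of the two directed edges $m\to n$ or $n\to m$ (using both would create a cycle), the normalization-free sum $\sum_T w(T_r)$ splits into three groups: trees containing $m\to n$ (contributing a factor $W_{nm}$), trees containing $n\to m$ (contributing $W_{mn}$), and trees containing neither. Writing the symmetric edge parameter dependence $W_{mn} = e^{-B_{mn}} \cdot(\text{stuff})$ and $W_{nm}=e^{-B_{mn}}\cdot(\text{stuff})$ — so that $\partial_{B_{mn}}$ simultaneously rescales both — I expect $\langle A\rangle$ and $\langle B\rangle$ to take the form $\alpha x + \beta/x + \gamma$ (times a common power of $x$), where $x$ is a single variable tracking the ratio $W_{mn}/W_{nm}$ and the $B_{mn}$-derivative corresponds to $x\partial_x$. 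Then $\partial_{B_{mn}}\ln(\langle A\rangle/\langle B\rangle)$ becomes a logarithmic derivative of a ratio of such rational functions in $x$, which I can bound.

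The key step is to relate the range of the variable $x$ — or more precisely, the admissible ratios among the three groups of tree-sums — to the cycle forces $F_C$ around cycles through $e_{mn}$. The mechanism: given any spanning tree $T$ that contains neither directed edge, adding $m\to n$ and removing an edge along the resulting cycle produces a tree containing $m\to n$, and the ratio of their weights is governed by products of rates around a cycle through $e_{mn}$ — i.e., by $e^{\pm F_C}$ for some cycle $C\ni e_{mn}$, up to the symmetric factors that cancel. This should let me show that the cross-ratios of the coefficients $\alpha,\beta,\gamma$ (which a priori could be anything positive) are pinned down so that $|x\partial_x \ln(\text{ratio})|$ is maximized in a controlled two-state-like configuration, yielding the bound $\tanh(F_{\max}/4)$. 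Concretely I anticipate reducing to an inequality of the form $\left|\frac{px - q/x}{px+q/x+r}\right| \le \tanh(F_{\max}/4)$ after optimizing over the nonnegative observable weights, with $\ln(p/q)$ controlled by $F_{\max}$; the factor $1/4$ arises because $B_{mn}$ enters the rate exponent with coefficient $1$ while $F_{ij}$ enters with coefficient $1/2$, and a further factor of $2$ comes from the symmetric rescaling of both directed edges.

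The main obstacle I foresee is the combinatorial bookkeeping that certifies the cross-ratio constraint: showing that across \emph{all} spanning trees simultaneously, the relevant ratios of tree-weights are bounded by $e^{F_{\max}}$ (and not merely for a single exchange). This likely requires a careful pairing or flow argument on the set of spanning trees — matching trees in one group to trees in another via cycle exchanges through $e_{mn}$ — and then arguing that the worst case (equality) is attained in the effective two-state limit, which connects to the saturation scenario the paper mentions (a single cycle with strong time-scale separation). Establishing the $\tanh$ form, rather than a cruder exponential bound, will require identifying the right convexity/monotonicity property of $f(x)=\frac{\alpha x-\beta/x}{\alpha x+\beta/x+\gamma}$ over the constrained domain of $x$.
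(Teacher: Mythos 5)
Your overall skeleton matches the paper's: apply the MTT, group spanning trees according to whether they contain the distinguished edge, bound a ratio of tree-weight sums by $e^{F_{\max}}$ via a combinatorial pairing, and extract a $\tanh$ from an elementary algebraic inequality. However, two of your concrete steps are wrong as stated, and they are not cosmetic.

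First, the algebraic setup. Since $B_{mn}=B_{nm}$ multiplies both $W_{mn}$ and $W_{nm}$ by the common factor $y=e^{-B_{mn}}$, the derivative $\partial_{B_{mn}}$ acts as $-y\,\partial_y$; it cannot be $x\,\partial_x$ with $x=W_{mn}/W_{nm}$, because that ratio is invariant under a symmetric rescaling (the operator $x\partial_x$ is what the \emph{asymmetric} perturbation $\partial_{F_{mn}}$ would produce). Your three groups therefore collapse to two: trees containing the undirected edge (weight linear in $y$, call these sums $a_1,b_1$ for the $A$- and $B$-weighted cases) and trees not containing it ($a_0,b_0$). The correct expression is
\begin{equation*}
\frac{\partial}{\partial B_{mn}}\ln\frac{\langle A\rangle}{\langle B\rangle}
=\frac{b_0a_1-a_0b_1}{(b_0+b_1)(a_0+a_1)},
\end{equation*}
a $2\times2$ determinant over a product of sums (it manifestly vanishes when $A=B$), not your single-ratio form $(px-q/x)/(px+q/x+r)$, which has no mechanism for the interplay between $A$ and $B$ and does not vanish when the observables coincide. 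The $\tanh(F_{\max}/4)$ then follows from AM--GM on the denominator, $(b_0+b_1)(a_0+a_1)\ge\bigl(\sqrt{b_0a_1}+\sqrt{a_0b_1}\bigr)^2$, which yields $\tanh\bigl(\tfrac14\bigl|\ln\tfrac{b_0a_1}{a_0b_1}\bigr|\bigr)$; the $1/4$ comes from the square root combined with the $\tanh$ identity, not from the exponent bookkeeping of $B$ versus $F$ that you describe, and no optimization over $x$ or over the observable weights is needed.

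Second, and more seriously, your combinatorial mechanism does not certify the needed bound. Exchanging a single edge in a single spanning tree (add $m\to n$, delete another edge of the created cycle) changes the weight by a ratio of \emph{two individual rates}, which is not a cycle force: the symmetric barrier factors of the two exchanged edges do not cancel, so this ratio is not controlled by $F_{\max}$ at all. The quantity that must be bounded by $e^{F_{\max}}$ is the cross-ratio $b_0a_1/(a_0b_1)$, whose terms are \emph{products of two} rooted-tree weights $w(T_i)w(S_j)$ with $T\in\mathcal{E}_{mn}$ and $S\notin\mathcal{E}_{mn}$. The paper's ``tree surgery'' lemma constructs a bijection on such pairs out of two operations: edge swaps, which move an edge between the two trees and hence preserve the product of weights exactly, and at most one cycle flip, which reverses the orientation of an entire cycle through $e_{mn}$ and hence multiplies the product by exactly $e^{F_C}$. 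You correctly sense that a pairing argument is needed, but the pairing must act on pairs of trees, and showing that the two operations compose into an invertible map that changes the product by at most one factor of $e^{F_C}$ is the real content of the proof; as written, your ``cross-ratio constraint'' is asserted rather than established.
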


Our proof relies on the following technical lemma, which we prove in Appendix \ref{sec:rootSwap}.
\begin{lemma}[``Tree surgery"]
	Let ${\mathcal E}_{mn}$ be the set of spanning trees of $G$ containing the distinguished (undirected) edge $m \leftrightarrow n$. Then for any two distinct vertices $i, j$ of $G$,
	\begin{equation} %
	\left|\frac{\sum_{T\in  {\mathcal E}_{mn}}\sum_{S\notin  {\mathcal E}_{mn}}  w(T_i) w(S_j)}{ \sum_{T\in  {\mathcal E}_{mn}}\sum_{S\notin  {\mathcal E}_{mn}} w(T_j) w(S_i) }\right| \leq \exp(F_\mathrm{max}).
	\end{equation}
\end{lemma}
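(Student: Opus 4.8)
The plan is to establish a bijection-with-bounded-weight-ratio between the two families of spanning-tree pairs appearing in the numerator and denominator. Both sums range over the same index set $\{(T,S): T\in{\mathcal E}_{mn},\ S\notin{\mathcal E}_{mn}\}$; the only difference is which tree is rooted at $i$ and which at $j$. So it suffices to find, for each such pair $(T,S)$, a corresponding pair $(T',S')$ with $T'\in{\mathcal E}_{mn}$, $S'\notin{\mathcal E}_{mn}$, and
\begin{equation}
\frac{w(T_i)\,w(S_j)}{w(T'_j)\,w(S'_i)} \le \exp(F_{\mathrm{max}}),
\end{equation}
and such that the map $(T,S)\mapsto(T',S')$ is a bijection on the index set; then a term-by-term comparison of the two sums (both of which are sums of positive quantities) gives the result, with the absolute value on the left being harmless since everything is positive.

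\textbf{Key steps.} First, I would recall that reorienting a rooted spanning tree from root $i$ to root $j$ amounts to reversing exactly the edges along the unique tree-path from $j$ to $i$; hence $w(T_i)/w(T_j) = \prod_{e \text{ on } j\rightsquigarrow i \text{ in } T} (W_e/W_{\bar e})$, a ``partial affinity'' along that path. The natural choice is simply $(T',S') = (T,S)$ (no surgery of the trees themselves, only reassignment of roots), so that the ratio becomes
\begin{equation}
\frac{w(T_i)}{w(T_j)}\cdot\frac{w(S_j)}{w(S_i)} = \Bigg(\prod_{e\in P_T} \frac{W_e}{W_{\bar e}}\Bigg)\Bigg(\prod_{e\in P_S}\frac{W_{\bar e}}{W_e}\Bigg),
\end{equation}
where $P_T$ is the $j\rightsquigarrow i$ path in $T$ and $P_S$ the $j\rightsquigarrow i$ path in $S$. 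The second step is the key combinatorial observation: since $T$ contains the edge $m\leftrightarrow n$ and $S$ does not, the closed walk obtained by concatenating $P_T$ with the reverse of $P_S$ can be decomposed into cycles of $G$ \emph{each of which must use the edge $m\leftrightarrow n$} — or else telescopes away. More carefully, $P_T \cup P_S$ is a subgraph in which $i$ and $j$ have odd degree and all else even; removing the distinguished edge from $T$ splits $T$ into two components, and the structure of where $i,j,m,n$ fall relative to this split forces the symmetric difference $P_T \triangle P_S$ (as an edge set) to decompose into edge-disjoint cycles, with the edge $m\leftrightarrow n$ appearing in exactly the cycle(s) that ``cross'' the cut. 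The log of the above ratio is then a signed sum of cycle forces $F_C$ over cycles $C$ containing $m\leftrightarrow n$, plus cancelling contributions from any shared edges of $P_T$ and $P_S$.

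\textbf{Sharpening to $F_{\mathrm{max}}$ rather than a sum.} A sum of several cycle forces would only give $\exp(\sum|F_C|)$, which is too weak. The fix, which I expect to be the main obstacle, is to be more careful about the decomposition: because $T\setminus\{m\leftrightarrow n\}$ and $S$ are both forests and $S$ contains a unique $j\rightsquigarrow i$ path, one shows the relevant symmetric difference is in fact a \emph{single} cycle through $m\leftrightarrow n$ (any other cycles in $P_T\triangle P_S$ cannot occur because $T$ minus one edge is a forest and cannot contain a cycle, and $S$ is a forest too — so no cycle can lie entirely in $P_T$ or entirely in $P_S$, forcing at most one cycle and it must alternate between the two paths through the cut edge). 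Hence $\log$ of the ratio equals $\pm F_C$ for a single cycle $C\ni (m\leftrightarrow n)$, giving the bound $\exp(|F_C|)\le\exp(F_{\mathrm{max}})$ directly. The bijectivity of $(T,S)\mapsto(T,S)$ is trivial, so the term-by-term inequality closes the proof. The delicate point to get right is the cut/forest argument showing ``exactly one cycle, and it uses the distinguished edge'' — I would prove this by examining the two components of $T-\{m\leftrightarrow n\}$, noting $m$ and $n$ lie in different components, and tracing where $i$ and $j$ sit, handling the cases $i,j$ same component vs.\ different components separately.
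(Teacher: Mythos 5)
Your reduction to a term-by-term comparison of the two sums is the right frame (and matches the paper), but the pairing you choose---the identity map $(T,S)\mapsto(T,S)$, re-rooting the trees without altering them---does not work, and this failure is precisely why the paper's ``edge swap'' machinery exists. With the identity pairing, the ratio $w(T_i)w(S_j)/\bigl(w(T_j)w(S_i)\bigr)$ is the exponential of the affinity of the closed walk formed by the $j\rightsquigarrow i$ path in $T$ followed by the reversed $j\rightsquigarrow i$ path in $S$, and nothing forces that walk to touch the edge $m\leftrightarrow n$. Concretely: let $G$ be two triangles $\{1,2,3\}$ and $\{4,5,6\}$ joined by a bridge $3\leftrightarrow 4$, with distinguished edge $1\leftrightarrow 2$ and $i=5$, $j=6$. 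The only cycle through $1\leftrightarrow 2$ is the first triangle, so $F_{\mathrm{max}}=|F_{123}|$, which can be set to zero while the force $F_{456}$ around the second triangle is made enormous. Taking $T=\{1\leftrightarrow 2,\,2\leftrightarrow 3,\,3\leftrightarrow 4,\,4\leftrightarrow 5,\,5\leftrightarrow 6\}\in\mathcal{E}_{12}$ and $S=\{1\leftrightarrow 3,\,2\leftrightarrow 3,\,3\leftrightarrow 4,\,4\leftrightarrow 5,\,4\leftrightarrow 6\}\notin\mathcal{E}_{12}$, the two paths between $5$ and $6$ are $5\leftrightarrow 6$ and $5\leftrightarrow 4\leftrightarrow 6$, and the term ratio equals $\exp(F_{456})$---unbounded even though $F_{\mathrm{max}}=0$. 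Your planned case analysis cannot rescue this: whenever $i$ and $j$ lie in the same component of $T\setminus\{m\leftrightarrow n\}$, the $T$-path avoids the distinguished edge entirely. (A second, independent error: the symmetric difference of two paths with common endpoints can decompose into several edge-disjoint cycles, each alternating between the two paths, so ``at most one cycle'' does not follow from ``no cycle lies wholly within one path.'')

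What your identity pairing actually proves is the paper's Lemma 2 (``Cycle flip only''): the term ratio is bounded by $\exp(F_{i\leftrightarrow j})$, the maximal affinity of a round trip between $i$ and $j$---a genuinely different quantity, and the paper explicitly remarks that neither lemma implies the other. To obtain $F_{\mathrm{max}}$ you must change the trees, not only their roots: the paper's proof iteratively exchanges edges between $T$ and $S$ (\emph{edge swap}, which leaves the product of weights invariant and moves the branch point toward $j$) and reverses a cycle only in the step where the edge that would otherwise be exchanged is the distinguished one. This guarantees that at most one cycle is flipped, that it contains $m\leftrightarrow n$, that the distinguished edge stays in $T$ so the image pair still indexes a term of the denominator, and that every step is invertible so the pairing is a bijection. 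Those are exactly the properties your construction lacks.
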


\begin{proof}[Proof of Theorem \ref{mainresult}]
The matrix-tree theorem offers a graphical representation of the steady-state distribution in terms of rooted spanning trees.  
This observation suggests that we can segregate those contributions to steady-state averages that contain $B_{mn}$ by selecting  those (undirected) spanning trees in $G$ that contain the edge $e_{mn}$.  Let us call this set ${\mathcal E}_{mn}$.

Then by the matrix-tree theorem, we can write
	\begin{equation}
	\frac{\langle A\rangle}{\langle B\rangle}=\frac{\sum_i A_i \pi_i}{\sum_j B_j \pi_j} = \frac{a_1 + a_0}{b_1 + b_0}
	\end{equation}
	where 
\begin{align}
	a_1 &= \sum_i \sum_{T \in {\mathcal E}_{mn}} A_i w(T_i) \quad \quad 	a_0 = \sum_i \sum_{S \notin {\mathcal E}_{mn}} A_i w(S_i)\\
	b_1 &= \sum_i \sum_{T \in {\mathcal E}_{mn}} B_i w(T_i) \quad \quad 	b_0 = \sum_i \sum_{S \notin {\mathcal E}_{mn}} B_i w(S_i),
\end{align}
	where  $a_1$ and $b_1$ are linear in $\exp (-B_{mn})$, since they contain edge $e_{mn}$, whereas $a_0$ and $b_0$ are independent of $B_{mn}$.
	An illustrative example is presented in Fig.~\ref{fig:barrierFrac}.
	\begin{figure}
	\centering
	\includegraphics[scale=.48]{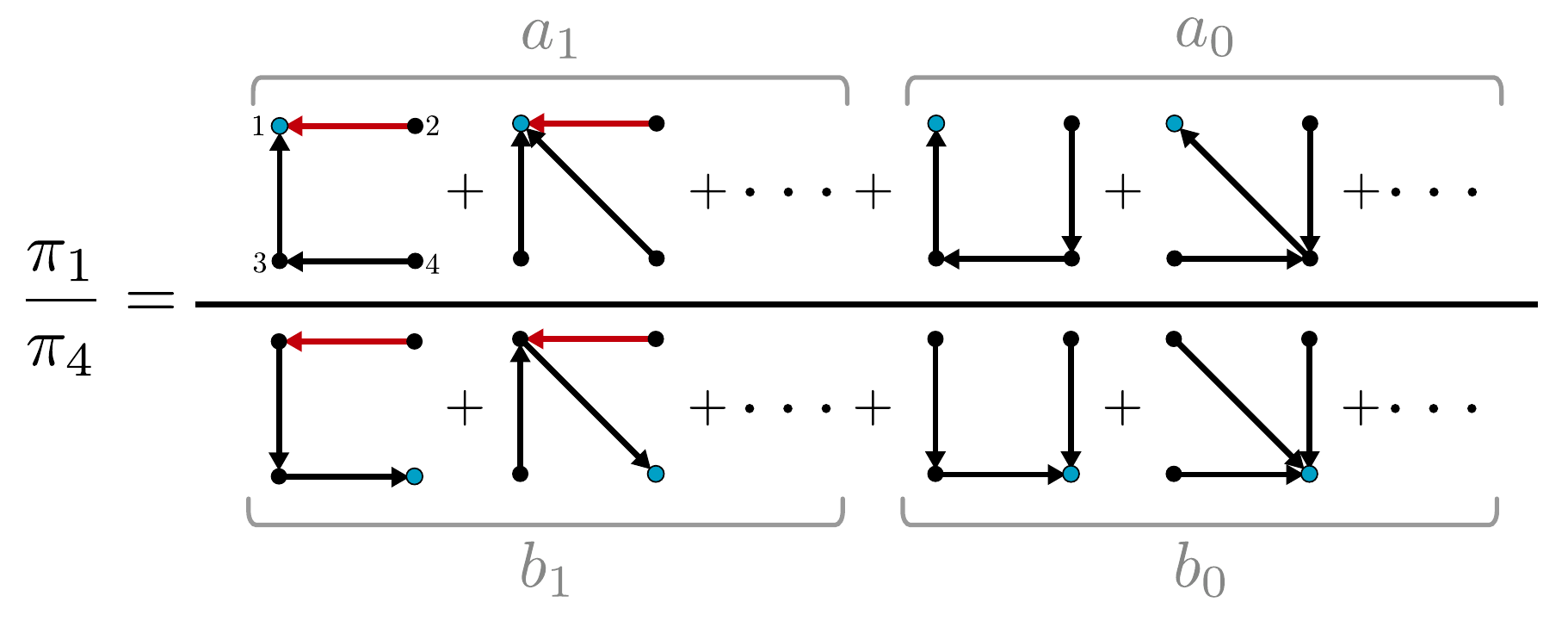}
	\caption{{\bf Symmetric edge perturbation.} Groupings of spanning trees used in the derivation of the symmetric edge perturbation bound for the ratio of observables $\pi_1/\pi_4$ with perturbed edge parameter $B_{12}$.  Roots labeled in blue and affected rates highlighted in red.}\label{fig:barrierFrac}
	\end{figure}
		
	This implies
	\begin{equation}\label{main1}
\frac{\partial}{\partial B_{mn}} \ln \frac{\langle A \rangle}{\langle B\rangle}= \frac{b_0a_1-a_0b_1}{(b_0+b_1)(a_0+a_1)}.
	\end{equation}
	Now note that by the AM-GM inequality the denominator is bounded as
	\begin{equation}\label{main2}
	\begin{split}
	(b_0&+b_1)(a_0+a_1) = b_0 a_0 + b_1 a_0 + b_1 a_1 + b_0 a_1 \\
	&\geq b_0 a_1 + b_1 b_0 + 2\sqrt{a_0 b_0 a_1 b_1} = \left(\sqrt{b_0 a_1} + \sqrt{a_0 b_1}\right)^2.
	\end{split}
	\end{equation}
	Since the numerator $b_0a_1-a_0b_1 = \left(\sqrt{b_0 a_1} - \sqrt{a_0 b_1}\right)\left(\sqrt{b_0 a_1} + \sqrt{a_0 b_1}\right)$, the bound \eqref{main2} implies,
	\begin{equation}\label{main3}
	\left|\frac{\partial}{\partial B_{mn}} \ln \frac{\langle A \rangle}{\langle B\rangle}\right| \leq \left|\frac{\sqrt{b_0 a_1} - \sqrt{a_0 b_1}}{\sqrt{b_0 a_1} + \sqrt{a_0 b_1}}\right|=\tanh\left(\frac{1}{4}\left|\ln\frac{b_0 a_1}{a_0 b_1}\right|\right).
	\end{equation}
	
	To complete the proof, we need to bound the ratio $b_0 a_1 / a_0 b_1$ by $\exp\left(F_{\rm max}\right)$. To do this, we match up terms above and below, writing the fraction as
	\begin{equation}
	\frac{b_0 a_1}{a_0 b_1} = \frac{\sum_i \sum_j \left( A_i B_j \sum_{T\in  {\mathcal E}_{mn}}\sum_{S\notin  {\mathcal E}_{mn}} w(T_i) w(S_j) \right)}{\sum_i \sum_j \left(A_i B_j \sum_{T\in  {\mathcal E}_{mn}}\sum_{S\notin  {\mathcal E}_{mn}} w(S_i) w(T_j)\right)}.
	\end{equation}
	The desired result is now a consequence of the inequality
		\begin{equation}\label{maxfrac}
		\frac{\sum_{i=1}^n x_i}{\sum_{i=1}^n y_i} = \frac{\sum_{i=1}^n \left(x_i/y_i\right)y_i}{\sum_{i=1}^n y_i} \leq \max_i \left(\frac{x_i}{y_i}\right),
		\end{equation}
		to give
		\begin{equation}
		\left|\frac{b_0 a_1}{a_0 b_1}\right| \le \max_{i,j}\left| \frac{\sum_{T\in  {\mathcal E}_{mn}}\sum_{S\notin  {\mathcal E}_{mn}} w(T_i) w(S_j) }{\sum_{T\in  {\mathcal E}_{mn}}\sum_{S\notin  {\mathcal E}_{mn}} w(S_i) w(T_j)}\right|,		
		\end{equation}
	followed by Lemma \ref{treesurgery}.

\end{proof}

From Theorem \ref{mainresult} we readily obtain our bounds on steady-state response.
For \eqref{eq:barBound1}:
\begin{corollary}\label{ratiobound}
\begin{equation}
\left|\frac{\partial \ln \left(\pi_i / \pi_j\right)}{\partial B_{mn}}\right| \leq \tanh\left(\frac{F_\mathrm{max}}{4}\right)
\end{equation}
\begin{proof}
Choose the observables in Theorem \ref{mainresult} to be $A_l=\delta_{il}$ and $B_l=\delta_{kl}$, where $\delta_{ij}$ is the Kronecker delta.
\end{proof}
\end{corollary}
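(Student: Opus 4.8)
The statement to prove is that $\left|\partial_{B_{mn}} \ln(\pi_i/\pi_j)\right| \le \tanh(F_{\max}/4)$, where $F_{\max}$ is the largest cycle-force magnitude among cycles through the edge $m \leftrightarrow n$. This is the special case of Theorem \ref{mainresult} in which the general positive observables $A$ and $B$ are collapsed onto individual states. The plan is to invoke Theorem \ref{mainresult} directly, choosing the observables so that $\langle A\rangle_\pi = \pi_i$ and $\langle B\rangle_\pi = \pi_j$. Concretely, set $A_\ell = \delta_{i\ell}$ and $B_\ell = \delta_{j\ell}$ (Kronecker deltas). Since $i$ and $j$ are distinct states and $\pi$ is strictly positive by irreducibility, both $A$ and $B$ are nonnegative vectors with at least one positive entry, so the hypotheses of Theorem \ref{mainresult} are met.

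With this choice, $\langle A\rangle = \sum_\ell A_\ell \pi_\ell = \pi_i$ and $\langle B\rangle = \pi_j$, hence $\ln(\langle A\rangle/\langle B\rangle) = \ln(\pi_i/\pi_j)$, and the conclusion of Theorem \ref{mainresult} reads exactly
\begin{equation*}
\left|\frac{\partial}{\partial B_{mn}} \ln\frac{\pi_i}{\pi_j}\right| \le \tanh\!\left(\frac{F_{\max}}{4}\right),
\end{equation*}
which is the claim. The quantity $F_{\max}$ appearing here is literally the same one defined in Theorem \ref{mainresult} — the maximum over cycles containing the distinguished edge $m\leftrightarrow n$ — so no reinterpretation is needed.

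There is essentially no obstacle here: the corollary is a one-line specialization, and all the real work — the spanning-tree grouping via the matrix-tree theorem, the AM--GM step bounding the denominator in \eqref{main2}, the rewriting as a $\tanh$ in \eqref{main3}, and the ``tree surgery'' Lemma that bounds $b_0 a_1 / a_0 b_1$ by $\exp(F_{\max})$ — has already been carried out in the proof of Theorem \ref{mainresult}. The only point worth a moment's care is checking the nondegeneracy hypothesis: one needs at least one positive entry in each observable, which is immediate for a single Kronecker delta, and one needs $i \neq j$ so that the ratio $\pi_i/\pi_j$ is the ratio of two genuinely different matrix-tree sums rather than a trivial constant — this is exactly the hypothesis stated in the corollary. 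Thus the proof is simply: apply Theorem \ref{mainresult} with $A_\ell = \delta_{i\ell}$, $B_\ell = \delta_{j\ell}$.
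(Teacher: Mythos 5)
Your proof is correct and is essentially identical to the paper's: both specialize Theorem \ref{mainresult} by taking the observables to be Kronecker deltas supported on $i$ and $j$, so that $\langle A\rangle/\langle B\rangle = \pi_i/\pi_j$. (You even silently fix the paper's typo, which writes $B_l = \delta_{kl}$ where it clearly means $\delta_{jl}$, and you add the harmless extra check that the hypotheses of the theorem are satisfied.)
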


We also have:
\begin{corollary}\label{cor:barboundprobs}
Let be $\pi_X = \sum_{k \in X} \pi_k$ be the total probability of a set of states $X$. Then, 
\begin{equation}
\left|\frac{\partial \pi_X }{\partial B_{mn}}\right| \leq  \pi_X (1-\pi_X) \tanh\left(\frac{F_\mathrm{max}}{4}\right).
\end{equation}
\end{corollary}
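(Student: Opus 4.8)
The plan is to obtain this as a direct corollary of Theorem~\ref{mainresult}, taking the two observables to be the indicator of $X$ and the indicator of its complement; this particular pairing is what manufactures the variance-like prefactor $\pi_X(1-\pi_X)$, by way of the derivative of a logit.

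First I would dispose of the degenerate cases: if $X=\emptyset$ or $X$ is the entire vertex set, then $\pi_X$ is identically $0$ or $1$, so both sides of the inequality vanish and there is nothing to prove. Assume then that $X$ and $X^{c}=\{1,\dots,N\}\setminus X$ are both nonempty, and set $A_k=1$ for $k\in X$ and $A_k=0$ otherwise, $B_k=1$ for $k\in X^{c}$ and $B_k=0$ otherwise. Both are nonnegative vectors with at least one positive entry, so Theorem~\ref{mainresult} applies and gives
\begin{equation}
\left|\frac{\partial}{\partial B_{mn}}\ln\frac{\langle A\rangle}{\langle B\rangle}\right|\le\tanh\left(\frac{F_\mathrm{max}}{4}\right),
\end{equation}
while by construction $\langle A\rangle=\pi_X$ and $\langle B\rangle=\pi_{X^{c}}=1-\pi_X$, using $\sum_k\pi_k=1$.

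Next I would invoke the elementary identity, valid whenever $0<\pi_X<1$,
\begin{equation}
\frac{\partial}{\partial B_{mn}}\ln\frac{\pi_X}{1-\pi_X}=\left(\frac{1}{\pi_X}+\frac{1}{1-\pi_X}\right)\frac{\partial\pi_X}{\partial B_{mn}}=\frac{1}{\pi_X(1-\pi_X)}\,\frac{\partial\pi_X}{\partial B_{mn}},
\end{equation}
so that $|\partial\pi_X/\partial B_{mn}|=\pi_X(1-\pi_X)\,|\partial_{B_{mn}}\ln(\langle A\rangle/\langle B\rangle)|$; combining this with the displayed bound finishes the proof. Irreducibility forces $\pi_k>0$ for every $k$, hence $0<\pi_X<1$ whenever $X$ is a proper nonempty subset, so the identity indeed applies in all the relevant cases.

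I do not expect a genuine obstacle here, since everything rests on the already-established Theorem~\ref{mainresult}; the argument is essentially a change of observables. The only point worth flagging is the choice of the complementary observable $B$: comparing $\pi_X=\langle A\rangle$ instead against the all-ones vector would yield only the weaker estimate $|\partial_{B_{mn}}\pi_X|\le\pi_X\tanh(F_\mathrm{max}/4)$, whereas pairing with $\pi_{X^{c}}$ symmetrizes the bound and recovers the occupation-variance prefactor $\pi_X(1-\pi_X)$, in parallel with the single-state bound~\eqref{eq:barBound2}.
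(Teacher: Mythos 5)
Your proof is correct and takes essentially the same route as the paper: the paper also derives Corollary \ref{cor:barboundprobs} by applying Theorem \ref{mainresult} to the indicator of $X$ and the indicator of its complement, so that $\langle A\rangle=\pi_X$ and $\langle B\rangle=1-\pi_X$. The only difference is that you spell out the logit-derivative identity converting the bound on $\partial_{B_{mn}}\ln(\pi_X/(1-\pi_X))$ into the stated bound on $\partial_{B_{mn}}\pi_X$, together with the degenerate cases, which the paper leaves implicit.
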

\begin{proof}
Choose the observables in Theorem \ref{mainresult} to be $A_i=\delta_i(X)$ and  $B_i=1-\delta_i(X)$, where the indicator $\delta_i(X) = 1$ if $i \in X$ and $\delta_i(X) = 0$ otherwise. Note that we then have $\langle A \rangle = \pi_X$ and $\langle B \rangle  = 1-\pi_X$.
\end{proof}
\noindent If $X=\{i\}$ consists of only a single state we recover the bound \eqref{eq:barBound2}.

\subsection{Multiple edges}

In this section, we derive our inequality for the response to perturbations by multiple edge parameters \eqref{eq:barrierBoundMulti}.
The proof proceeds in two steps.  
We first prove a bound on an arbitrary set of edges $S$ from which \eqref{eq:barrierBoundMulti} and other results are ready corollaries.

%
%
Here, the magnitude of response is bounded by a different function $F_{i \leftrightarrow j}$ of cycle forces. The quantity $F_{i \leftrightarrow j}$ is defined for any graph $G$ and vertices $i$ and $j$ to be
\begin{equation}\label{eq:pathForce}
F_{i \leftrightarrow j} = \max_{P_{i\to j},P_{j\to i}} \left|\ln \frac{w(P_{i\to j} \cup P_{j\to i})}{w({P}^*_{i\to j}\cup {P}^*_{j\to i})}\right|.
\end{equation}
where $P_{i\to j}$ is a (non-self-intersecting) path from $i$ to $j$, $ P_{j\to i}$ is a (non-self-intersecting) path from $j$ to $i$, and the superscript `$*$' denotes the reverse path. 
\begin{theorem}\label{multiedgeCycle}
	Let $S$ be a set of edges, and define $c_\mathrm{max}$ to be the size of the largest intersection $S$ has with any spanning tree of $G$. Similarly, define $c_\mathrm{min}$ to be the size of the smallest such intersection. Then,
	\begin{equation}
	\left|\sum_{e_{mn} \in S} \frac{\partial}{\partial B_{mn}}\ln \left(\frac{\pi_i}{\pi_j}\right)\right| \leq \left(c_\mathrm{max} - c_\mathrm{min}\right)\tanh\left(\frac{F_{i \leftrightarrow j}}{4}\right).
	\end{equation}
\end{theorem}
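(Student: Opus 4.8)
\noindent\emph{Proof proposal.} The plan is to follow the architecture of the single-edge argument (Theorem~\ref{mainresult}), but with a sharper concluding inequality. First I would compute the left-hand side exactly. Write $\pi_k=\mathcal{N}^{-1}\sum_T w(T_k)$ via the matrix-tree theorem and recall from the parameterization \eqref{eq:rates} that $\ln W_{ba}=-B_{ab}+E_a-F_{ab}/2$; since every undirected edge of a spanning tree $T$ appears exactly once (directed) in $T_k$, we get $\partial_{B_{mn}}\ln w(T_k)=-\mathbf{1}[e_{mn}\in T]$. Summing over $e_{mn}\in S$ and writing $n(T):=|S\cap T|\in[c_{\min},c_{\max}]$ and $\langle\,\cdot\,\rangle_k$ for the spanning-tree average weighted by $w(T_k)$, this gives
\begin{equation}
\sum_{e_{mn}\in S}\frac{\partial}{\partial B_{mn}}\ln\frac{\pi_i}{\pi_j}=\langle n\rangle_j-\langle n\rangle_i .
\end{equation}

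Next I would re-root. Each spanning tree $T$ contains a unique simple path $P(T)$ from $i$ to $j$, and passing from $T_i$ to $T_j$ reverses exactly the edges on $P(T)$; a short computation with \eqref{eq:rates} gives $w(T_j)=w(T_i)\,e^{E_i-E_j}\,e^{-F_{P(T)}}$ with $F_{P(T)}=\sum_{u\to v\in P(T)}F_{uv}$. Introducing the probability measure $p(T)\propto w(T_i)$ on spanning trees and the function $g(T):=e^{-F_{P(T)}}$, the constant factor cancels and
\begin{equation}
\langle n\rangle_j-\langle n\rangle_i=\frac{\mathbb{E}_p[n\,g]}{\mathbb{E}_p[g]}-\mathbb{E}_p[n]=\frac{\mathrm{Cov}_p(n,g)}{\mathbb{E}_p[g]} .
\end{equation}
The multiplicative spread of $g$ is exactly $e^{F_{i\leftrightarrow j}}$: for trees $T,T'$ one has $\ln(g(T')/g(T))=F_{P(T)}-F_{P(T')}=F_{P(T)}+F_{\overline{P(T')}}$, the edge-force sum along the concatenation of the $i\to j$ path $P(T)$ with the $j\to i$ path $\overline{P(T')}$; since every simple path between $i$ and $j$ extends to a spanning tree and the set of such sums is symmetric about zero, $\max_{T,T'}\ln(g(T')/g(T))=F_{i\leftrightarrow j}$.

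It then remains to prove the following scalar fact and apply it with $R=e^{F_{i\leftrightarrow j}}$ and $d=c_{\max}-c_{\min}$: for any probability measure $p$, any $g$ with $g_{\max}/g_{\min}\le R$, and any $n$ taking values in an interval of length $d$, $|\mathrm{Cov}_p(n,g)|\le d\tanh(\tfrac14\ln R)\,\mathbb{E}_p[g]$. I would prove this by rescaling so that $g\in[1,R]$ and shifting so that $n\in[0,d]$ (neither affects $\mathrm{Cov}$), and then conditioning on $g$: because $\mathrm{Cov}_p(n,g)=\mathrm{Cov}_p(\phi(g),g)$ with $\phi(g)=\mathbb{E}_p[n\mid g]\in[0,d]$ depends linearly on $\phi$, it is maximized at a threshold $\phi=d\,\mathbf{1}[g>\theta]$, whence $\mathrm{Cov}_p(n,g)=d\,q(1-q)(\mu_+-\mu_-)$ with $q=\Pr(g>\theta)$ and $\mu_\pm$ the conditional means of $g$. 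A one-line monotonicity check in $\mu_\pm$ pushes $\mu_+\to R$ and $\mu_-\to1$, reducing everything to $q(1-q)(R-1)\le\tanh(\tfrac14\ln R)\,(1+q(R-1))$, which with $s=\sqrt R$ rearranges to the identity $(1-q(s+1))^2\ge0$; the substitution $n\mapsto c_{\max}+c_{\min}-n$ supplies the matching lower bound.

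I expect the main obstacle to be this concluding lemma, specifically obtaining the sharp constant $\tanh(\tfrac14\ln R)$ rather than the crude bound $d$: the point is that a large covariance forces $\mathbb{E}_p[g]$ to be large as well, and the threshold reduction together with the algebraic identity above is exactly what quantifies this trade-off. A secondary subtlety, flagged above, is verifying that the multiplicative spread of $g$ is \emph{exactly} $e^{F_{i\leftrightarrow j}}$, i.e.\ that the unrestricted maximum over pairs of simple paths in \eqref{eq:pathForce} coincides with $\max_T F_{P(T)}-\min_T F_{P(T)}$.
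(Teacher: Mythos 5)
Your proposal is correct, and it reaches the theorem by a genuinely different route from the paper. The paper groups spanning trees by the intersection size $c=|T\cap S|$, uses an Abel-summation identity to split the response into $c_\mathrm{max}-c_\mathrm{min}$ terms each shaped like the single-edge case (Lemma~\ref{multicycleLem}), bounds each term by $\tanh\left(\tfrac14|\ln(\cdot)|\right)$ via the same AM--GM step as in Theorem~\ref{mainresult}, and then controls each ratio with the mediant inequality \eqref{maxfrac} and the ``cycle flip only'' Lemma~\ref{rootswap}. You instead write the response exactly as a normalized covariance $\mathrm{Cov}_p(n,g)/\mathbb{E}_p[g]$ over the tree measure $p\propto w(T_i)$, identify the multiplicative spread of $g$ with $e^{F_{i\leftrightarrow j}}$ via the re-rooting identity $w(T_v)w(T_{v\to w})=w(T_w)w(T_{w\to v})$ (the same identity underlying Lemma~\ref{rootswap}; your observation that every simple $i$--$j$ path extends to a spanning tree correctly settles the flagged subtlety), and then prove a self-contained scalar inequality $|\mathrm{Cov}_p(n,g)|\le d\tanh\left(\tfrac14\ln R\right)\mathbb{E}_p[g]$ by an extreme-point/threshold reduction; your algebra checks out, with equality at $q=1/(\sqrt{R}+1)$ exactly matching $(1-q(s+1))^2=0$. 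One small point to tidy: the ``monotonicity in $\mu_+$'' holds only when $q<1-\tanh\left(\tfrac14\ln R\right)$; in the complementary case the covariance term is already dominated and the bound is trivial, so nothing breaks, but you should say so. The trade-off is that the paper's route is modular---it literally reuses the single-edge machinery and its lemmas---whereas yours isolates a clean, sharp probabilistic inequality of independent interest and makes the saturation conditions (a two-valued $g$ at the extremes with $n$ a threshold function of $g$) transparent; both yield the identical constant $\left(c_\mathrm{max}-c_\mathrm{min}\right)\tanh\left(F_{i\leftrightarrow j}/4\right)$.
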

The appearance of $F_{i \leftrightarrow j}$ in this result stems from this lemma, that we rely on here and prove in Appendix \ref{sec:rootSwap}.
\begin{lemma}[``Cycle flip only"]
	\label{rootswap}
	For any spanning trees $T, S$ and vertices $i, j$ of $G$,
	\begin{equation}
	\frac{w(T_i)w(S_j)}{w(T_j)w(S_i)} \leq \exp(F_{i \leftrightarrow j}).
	\end{equation}
\end{lemma}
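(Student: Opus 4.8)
The plan is to reduce everything to one elementary combinatorial fact about rooted spanning trees: re-rooting a fixed undirected tree changes only the orientation of the edges on a single path. So I would first fix an undirected spanning tree $T$ and two distinct vertices $i,j$, and let $P^T_{ij}$ be the unique path in $T$ joining them. When $T$ is rooted at $j$, every edge points ``towards $j$''; in particular the edges of $P^T_{ij}$ run from $i$ towards $j$, and every edge off $P^T_{ij}$ points towards the vertex of $P^T_{ij}$ from which its subtree hangs. When instead $T$ is rooted at $i$, those off-path edges keep exactly the same orientation, and only the edges of $P^T_{ij}$ reverse. Taking the ratio of weights, all off-path factors cancel and
\[
\frac{w(T_i)}{w(T_j)} = \frac{w(P^T_{j\to i})}{w(P^T_{i\to j})},
\]
where $P^T_{i\to j}$ denotes $P^T_{ij}$ oriented from $i$ to $j$ and $P^T_{j\to i}$ its reverse; this is the step where I must be careful with the ordering convention $w(S)=W_{ji}W_{lk}\cdots$ so that the path weights come out oriented as written.

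Applying the identical argument to $S$ gives $w(S_j)/w(S_i) = w(P^S_{i\to j})/w(P^S_{j\to i})$, and multiplying the two relations yields
\[
\frac{w(T_i)w(S_j)}{w(T_j)w(S_i)} = \frac{w\!\left(P^S_{i\to j}\cup P^T_{j\to i}\right)}{w\!\left(P^S_{j\to i}\cup P^T_{i\to j}\right)},
\]
using that the weight of a union of directed edge sets is the product of the individual weights. Now $P^S_{i\to j}$ is a non-self-intersecting path from $i$ to $j$ and $P^T_{j\to i}$ is a non-self-intersecting path from $j$ to $i$ (each being a path in a tree), so the right-hand side is exactly one of the ratios entering the maximum that defines $F_{i\leftrightarrow j}$ in \eqref{eq:pathForce}, with $P_{i\to j}:=P^S_{i\to j}$, $P_{j\to i}:=P^T_{j\to i}$ and reversals $P^*_{i\to j}=P^S_{j\to i}$, $P^*_{j\to i}=P^T_{i\to j}$. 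Since any positive number $r$ satisfies $r\le \exp(|\ln r|)$, we conclude $w(T_i)w(S_j)/\big(w(T_j)w(S_i)\big)\le \exp(F_{i\leftrightarrow j})$, which is the claim; the degenerate case $i=j$ is trivial because both sides equal $1$.

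The only genuine obstacle is making the re-rooting identity of the first paragraph fully rigorous — a clean argument that re-rooting flips precisely the edges of $P^T_{ij}$ and leaves all others fixed — together with bookkeeping of the $W_{ij}$ versus $W_{ji}$ index order in $w(\cdot)$. After that there is nothing delicate: the definition of $F_{i\leftrightarrow j}$ constrains each of the two paths to be individually non-self-intersecting but imposes no disjointness between them, so feeding it the tree-path from $S$ together with the tree-path from $T$ is legitimate even when those paths overlap.
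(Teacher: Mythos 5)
Your proof is correct and follows essentially the same route as the paper's: both reduce the claim to the re-rooting identity $w(T_i)/w(T_j) = w(T_{j\to i})/w(T_{i\to j})$ (only the edges on the unique $i$--$j$ tree path reverse orientation) and then recognize the resulting product of path-weight ratios as one of the terms in the maximum defining $F_{i\leftrightarrow j}$. The only difference is presentational --- the paper packages the re-rooting as repeated applications of its \emph{cycle flip} operation while you justify the identity directly --- so there is nothing substantive to compare.
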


We will also rely on the following lemma, which generalizes the first part of the proof of Theorem \ref{mainresult}.
\begin{lemma}\label{multicycleLem}
For any symbols $\{a_n\}, \{b_n\}$,
\begin{equation}\begin{split}
&\left|\frac{\left(\sum_{n=i}^j n a_n\right) \sum_{n=i}^j b_n -\left(\sum_{n=i}^j n b_n\right) \sum_{n=i}^j a_n}{\sum_{n=i}^j a_n \sum_{n=i}^j b_n}\right| \\
&\qquad\qquad\leq \sum_{m={i+1}}^j \tanh\left(\frac{1}{4} \ln \left|\frac{\sum_{n=m}^j a_n \sum_{n=i}^{m-1} b_n}{\sum_{n=m}^j b_n \sum_{n=i}^{m-1} a_n}\right|\right)
\end{split}
\end{equation}
\end{lemma}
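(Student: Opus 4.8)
The plan is to prove Lemma \ref{multicycleLem} by a telescoping argument that splits the weighted ``difference quotient'' on the left-hand side into a sum of $j-i$ elementary two-block terms, each of which is then bounded by a $\tanh$ exactly as in the single-edge proof of Theorem \ref{mainresult}. First I would introduce the partial sums $A_m = \sum_{n=i}^{m-1} a_n$ and $A'_m = \sum_{n=m}^{j} a_n$ (so $A_m + A'_m = \sum_{n=i}^j a_n =: A$), and likewise $B_m, B'_m, B$ for the $b_n$. The key algebraic identity is that the numerator on the left, namely $N := (\sum n a_n)(\sum b_n) - (\sum n b_n)(\sum a_n)$, telescopes: writing $n = i + \#\{m : i+1 \le m \le n\}$, one gets
\begin{equation*}
N = \sum_{m=i+1}^{j} \left( A'_m B - B'_m A \right) = \sum_{m=i+1}^{j}\left( A'_m B_m - B'_m A_m \right),
\end{equation*}
where the last equality uses $A = A_m + A'_m$, $B = B_m + B'_m$, so the cross terms $A'_m B'_m$ cancel. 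Dividing by $AB$ and applying the triangle inequality reduces the claim to showing, for each $m$,
\begin{equation*}
\left| \frac{A'_m B_m - B'_m A_m}{AB} \right| \le \tanh\left( \tfrac14 \ln\left| \frac{A'_m B_m}{B'_m A_m} \right| \right).
\end{equation*}

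The second step is exactly the estimate already carried out in the proof of Theorem \ref{mainresult}: with the four nonnegative quantities $a_0 \leftrightarrow A_m$, $a_1 \leftrightarrow A'_m$, $b_0 \leftrightarrow B_m$, $b_1 \leftrightarrow B'_m$, we have $AB = (A_m + A'_m)(B_m + B'_m) \ge (\sqrt{A'_m B_m} + \sqrt{A_m B'_m})^2$ by AM--GM, while the numerator factors as $(\sqrt{A'_m B_m} - \sqrt{A_m B'_m})(\sqrt{A'_m B_m} + \sqrt{A_m B'_m})$, so the ratio is at most $\left| \frac{\sqrt{A'_m B_m} - \sqrt{A_m B'_m}}{\sqrt{A'_m B_m} + \sqrt{A_m B'_m}} \right| = \tanh\left( \tfrac14 \left| \ln \frac{A'_m B_m}{A_m B'_m} \right| \right)$. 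Summing over $m = i+1, \dots, j$ gives precisely the right-hand side of the lemma. I should be slightly careful about edge cases where some partial sums vanish (e.g.\ if all $a_n$ with $n < m$ are zero), but in those cases the corresponding term on the left either vanishes or the $\tanh$ argument diverges to $+\infty$ making the bound trivially true, so the inequality still holds; I would note this briefly rather than belabor it.

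The main obstacle I anticipate is getting the telescoping identity for $N$ stated and verified cleanly — in particular making sure the index bookkeeping ($n$ versus the count of $m$'s, the ranges $i+1 \le m \le j$ versus $i \le n \le j$, and the split points $m-1$ versus $m$) lines up so that the cross terms cancel exactly and the surviving term at split point $m$ is $A'_m B_m - B'_m A_m$ with the partial sums indexed as in the statement ($\sum_{n=m}^j$ against $\sum_{n=i}^{m-1}$). Once that identity is in hand, the rest is a direct reuse of the AM--GM/$\tanh$ manipulation from Theorem \ref{mainresult} plus the triangle inequality, so there is no genuinely new analytic content — the work is entirely in the combinatorial rearrangement. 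This lemma is evidently the engine that, combined with Lemma \ref{rootswap} (``cycle flip only'') applied termwise to bound each $\tanh$ argument by $F_{i \leftrightarrow j}$, will yield Theorem \ref{multiedgeCycle}; I would keep that downstream use in mind but not invoke it here.
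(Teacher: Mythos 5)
Your proposal is correct and follows essentially the same route as the paper's proof: the resummation $\sum_n n a_n = i\sum_n a_n + \sum_{m=i+1}^{j}\sum_{n=m}^{j}a_n$, cancellation of the cross terms to isolate $A'_m B_m - B'_m A_m$ at each split point, and then the AM--GM/$\tanh$ estimate from Theorem \ref{mainresult} applied termwise with the triangle inequality. Your remark about degenerate vanishing partial sums is a small point the paper leaves implicit (nonnegativity of the $a_n, b_n$ is needed for the AM--GM step), but it does not change the argument.
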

\begin{proof}
	
	First note we can rearrange the sum as
	\begin{equation}\label{resum}
	\sum_{n=i}^j n a_n = \sum_{n=i}^j \sum_{m=1}^n a_n=i \sum_{n=i}^j a_n +\sum_{m=i+1}^j \sum_{n=m}^j a_n,
	\end{equation}
	which is illustrated in Fig.~\ref{fig:resum}.
\begin{figure}
\begin{center}
	\scalebox{0.3}{\includegraphics{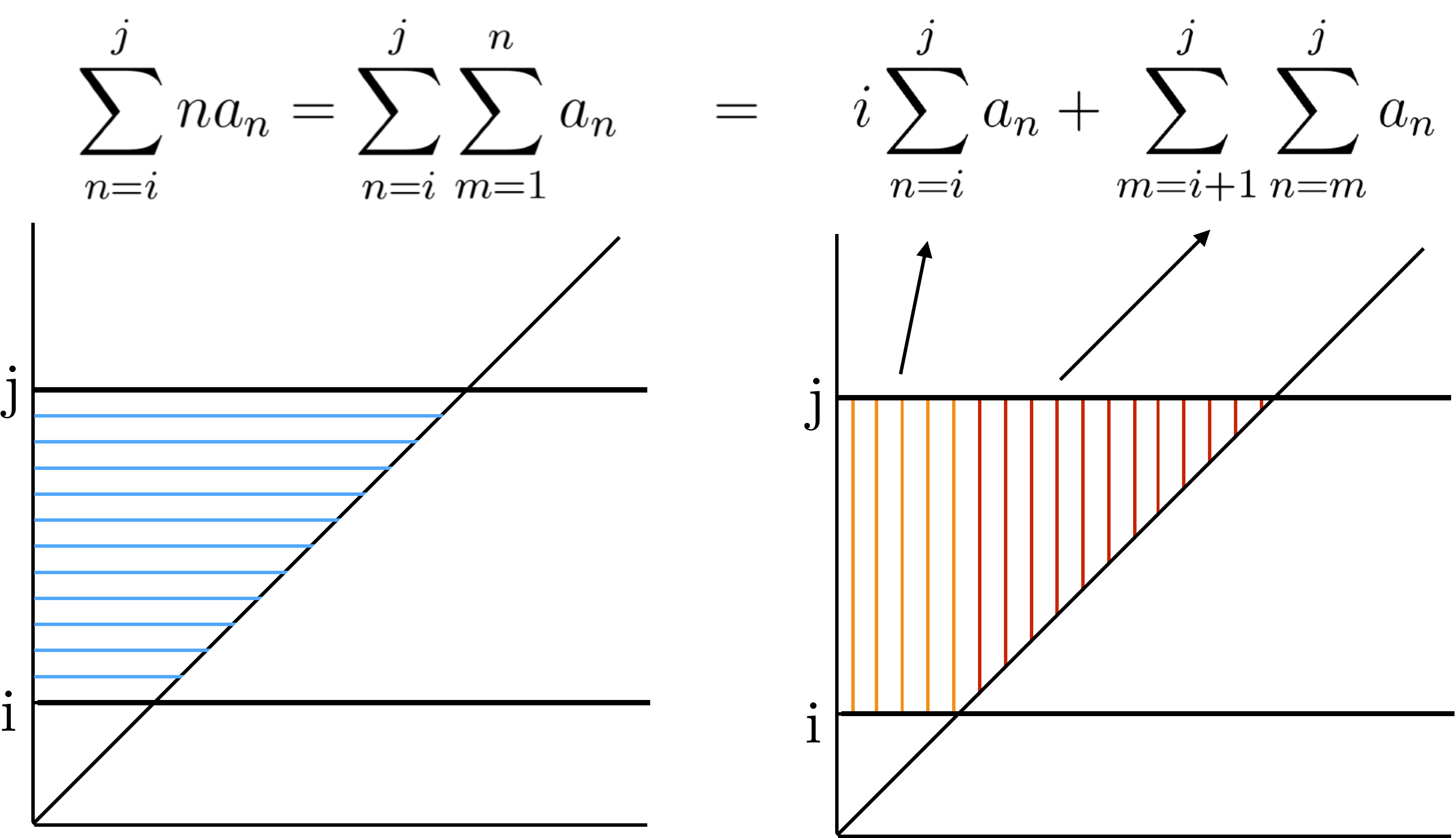}}
\end{center}
	\caption{{\bf Illustration of the identity \eqref{resum}.}}
	\label{fig:resum}
\end{figure}
As a result, we have
\begin{equation}
\begin{split}
&\frac{\left(\sum_{n=i}^j n a_n\right) \sum_{n=i}^j b_n -\left(\sum_{n=i}^j n b_n\right) \sum_{n=i}^j a_n}{\sum_{n=i}^j a_n \sum_{n=i}^j b_n}\\ 
&= \sum_{m=i+1}^j \frac{\left(\sum_{n=m}^j a_n\right) \sum_{n=i}^j b_n -\left(\sum_{n=m}^j b_n\right) \sum_{n=i}^j a_n}{\sum_{n=i}^j a_n \sum_{n=i}^j b_n}\\ 
& =\sum_{m={i+1}}^j \frac{\left(\sum_{n=m}^j a_n\right) \sum_{n=i}^{m-1} b_n -\left(\sum_{n=m}^j b_n\right) \sum_{n=i}^{m-1} a_n}{(\sum_{n=i}^{m-1} a_n+\sum_{n=m}^{j} a_n)( \sum_{n=i}^{m-1} b_n+\sum_{n=m}^j  b_n)}.
\end{split}
\end{equation}
Comparison with equations \eqref{main1} through \eqref{main3} in the proof of Theorem \ref{mainresult}, together with the triangle inequality, establishes the desired result.
\end{proof}

Now we are ready to proceed with the proof of the theorem.
\begin{proof}[Proof of Theorem \ref{multiedgeCycle}]
Define 
\begin{equation}
a_c = \sum_{T : |T \cap S| = c}  w(T_i),\quad
b_c = \sum_{T : |T \cap S| = c}  w(T_j).
\end{equation}
so that we have, for all $c$
\begin{equation}
\sum_{e_{mn} \in S} \frac{\partial a_c}{\partial B_{mn}} = c a_c \, , \quad \sum_{e_{mn} \in S} \frac{\partial b_c}{\partial B_{mn}} = c b_c.
\end{equation} 

By the matrix-tree theorem, the derivative we wish to bound can be written in terms of these quantities as
\begin{equation}
\begin{split}
\sum_{e_{mn} \in S} \frac{\partial}{\partial B_{mn}}\ln \left(\frac{\pi_i}{\pi_j}\right) &= \sum_{e_{mn} \in S} \frac{\partial}{\partial B_{mn}}\ln \frac{\sum_T w(T_i)}{ \sum_T w(T_j)} \\
&= 
 \sum_{e_{mn} \in S} \frac{\partial}{\partial B_{mn}}\ln \frac{\sum_{c=c_\text{min}}^{c_\text{max}} a_c}{\sum_{c=c_\text{min}}^{c_\text{max}} b_c}.
 \end{split}
\end{equation}
Expanding the derivative and applying Lemma \ref{multicycleLem} yields
\begin{equation}
\begin{split}
&\left|\sum_{e_{mn} \in S} \frac{\partial}{\partial B_{mn}}\ln\left(\frac{\pi_i}{\pi_j}\right)\right| \\
&\qquad \leq \sum_{m={c_\mathrm{min}+1}}^{c_\mathrm{max}} \tanh\left(\frac{1}{4} \ln \left|\frac{\sum_{n=m}^{c_{\mathrm{max}}} a_n \sum_{n=c_\mathrm{min}}^{m-1} b_n}{\sum_{n=m}^{c_\mathrm{max}} b_n \sum_{n=c_\mathrm{min}}^{m-1} a_n}\right|\right).
\end{split}
\end{equation}

To prove the theorem, all that remains  is to demonstrate that
\begin{equation}
\frac{\sum_{n=m}^{c_{\mathrm{max}}} a_n \sum_{n=c_\mathrm{min}}^{m-1} b_n}{\sum_{n=m}^{c_\mathrm{max}} b_n \sum_{n=c_\mathrm{min}}^{m-1} a_n} \leq \exp(F_{i \leftrightarrow j})
\end{equation}
holds for all $m$. This follows by an application of Lemma \ref{rootswap}. So we have
\begin{equation}
\begin{split}
\left|\sum_{e_{mn} \in S} \frac{\partial}{\partial B_{mn}}\ln \frac{\pi_i}{\pi_j}\right| &\leq \sum_{m={c_\mathrm{min}+1}}^{c_\mathrm{max}} \tanh\left(\frac{F_{i \leftrightarrow j}}{4}\right) \\
&= \left(c_\mathrm{max} - c_\mathrm{min}\right)\tanh\left(\frac{F_{i \leftrightarrow j}}{4}\right)
\end{split}
\end{equation}
as desired.
\end{proof}

Theorem \ref{multiedgeCycle} has a number of simple corollaries.

\begin{corollary}\label{rcycles}
If $G$ has $r$ independent cycles, then for any set $S$ of edges,
	\begin{equation}
	\left|\sum_{e_{mn} \in S} \frac{\partial}{\partial B_{mn}}\ln \left(\frac{\pi_i}{\pi_j}\right)\right| \leq r \tanh\left(\frac{F_{i \leftrightarrow j}}{4}\right).
	\end{equation}
\end{corollary}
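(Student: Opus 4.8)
The plan is to obtain this as an immediate consequence of Theorem~\ref{multiedgeCycle}, whose bound on the left-hand side is $(c_\mathrm{max} - c_\mathrm{min})\tanh(F_{i\leftrightarrow j}/4)$ with $c_\mathrm{max}$ (resp.\ $c_\mathrm{min}$) the largest (resp.\ smallest) size of the intersection of $S$ with a spanning tree of $G$. Since $F_{i\leftrightarrow j}\ge 0$ and $\tanh$ is nonnegative on nonnegative arguments, it suffices to prove the purely combinatorial bound $c_\mathrm{max} - c_\mathrm{min} \le r$ for an arbitrary edge set $S$, where $r = |E(G)| - |V(G)| + 1$ is the number of independent cycles (cyclomatic number) of the connected graph $G$.

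To prove $c_\mathrm{max} - c_\mathrm{min}\le r$, I would fix spanning trees $T_\mathrm{max}$ and $T_\mathrm{min}$ attaining $|T_\mathrm{max}\cap S| = c_\mathrm{max}$ and $|T_\mathrm{min}\cap S| = c_\mathrm{min}$ (possible since $G$ has finitely many spanning trees), and chain two elementary inequalities. First, decomposing $T_\mathrm{max}\cap S$ as the disjoint union of $T_\mathrm{max}\cap T_\mathrm{min}\cap S$ and $(T_\mathrm{max}\setminus T_\mathrm{min})\cap S$, and noting $T_\mathrm{min}\cap S\supseteq T_\mathrm{max}\cap T_\mathrm{min}\cap S$, one gets $c_\mathrm{max} - c_\mathrm{min}\le |(T_\mathrm{max}\setminus T_\mathrm{min})\cap S|\le |T_\mathrm{max}\setminus T_\mathrm{min}|$. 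Second, $T_\mathrm{max}\setminus T_\mathrm{min}\subseteq E(G)\setminus T_\mathrm{min}$, and since every spanning tree of a connected graph has exactly $|V(G)|-1$ edges, $|E(G)\setminus T_\mathrm{min}| = |E(G)| - |V(G)| + 1 = r$. Combining these and substituting into Theorem~\ref{multiedgeCycle} yields the claim.

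I expect no real obstacle here; the argument is short bookkeeping, and the only point needing care is the identification of ``number of independent cycles'' with the cyclomatic number $|E(G)|-|V(G)|+1$, which is legitimate because $G$ is connected (assumed throughout). As an alternative framing one could invoke the graphic matroid: $c_\mathrm{max}$ equals the matroid rank of $S$ and $c_\mathrm{min} = (|V(G)|-1) - \mathrm{rank}(E(G)\setminus S)$, so $c_\mathrm{max} - c_\mathrm{min} = \mathrm{rank}(S) + \mathrm{rank}(E(G)\setminus S) - (|V(G)|-1)\le |S| + |E(G)\setminus S| - (|V(G)|-1) = r$; but the direct spanning-tree argument above is self-contained and avoids any matroid machinery.
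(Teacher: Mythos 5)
Your proposal is correct and follows essentially the same route as the paper: both derive the corollary directly from Theorem~\ref{multiedgeCycle} by establishing the combinatorial fact $c_\mathrm{max}-c_\mathrm{min}\le r$. The only (immaterial) difference is in how that fact is counted --- the paper notes $c_\mathrm{max}\le|S|$ and $c_\mathrm{min}\ge|S|-r$ since every spanning tree excludes exactly $r$ edges, while you compare the two extremal trees via $c_\mathrm{max}-c_\mathrm{min}\le|T_\mathrm{max}\setminus T_\mathrm{min}|\le|E(G)\setminus T_\mathrm{min}|=r$; both are valid one-line arguments.
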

\begin{proof}
	Let $m$ be the number of edges in $G$. The largest possible intersection of a spanning tree and $S$ cannot exceed $|S|$ in size, so we have $c_\mathrm{max} \leq |S|$. Furthermore, each spanning tree of $G$ has exactly $m-r$ edges. So the smallest possible intersection is realized if all $r$ edges a spanning tree excludes are edges in the set $S$, which means $c_\mathrm{min} \geq |S|-r$. Therefore, $c_\mathrm{max}- c_\mathrm{min} \leq r$, and the corollary follows from Theorem \ref{multiedgeCycle}. 
\end{proof}

\begin{corollary}\label{forGK}
Let $H$ be a subgraph of $G$, and write $W$ for the set of vertices of $H$ incident to an edge not in $H$. Let $S$ be the edge set of $H$. Then,
\begin{equation}
\left|\sum_{e_{mn} \in S} \frac{\partial}{\partial B_{mn}}\ln \left(\frac{\pi_i}{\pi_j}\right)\right| \leq \left(|W|-1\right)\tanh\left(\frac{F_{i \leftrightarrow j}}{4}\right).
\end{equation}
\end{corollary}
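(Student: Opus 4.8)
The plan is to derive Corollary~\ref{forGK} as a direct consequence of Theorem~\ref{multiedgeCycle}: it suffices to show that, when $S$ is the edge set of the subgraph $H$, the combinatorial spread $c_\mathrm{max} - c_\mathrm{min}$ is at most $|W| - 1$. If $H$ has no edges the claim is trivial (the left-hand side vanishes), so assume otherwise. For any spanning tree $T$ of $G$, the edge set $T \cap S$ is acyclic, hence is the edge set of a forest $\mathcal{F}_T$ on the vertex set $V(H)$; writing $k(T)$ for its number of connected components (isolated vertices counted as components), we have the identity $|T \cap S| = |V(H)| - k(T)$. Therefore $c_\mathrm{max} = |V(H)| - \min_T k(T)$ and $c_\mathrm{min} = |V(H)| - \max_T k(T)$, so that $c_\mathrm{max} - c_\mathrm{min} = \max_T k(T) - \min_T k(T)$, and the whole problem reduces to the bounds $1 \le k(T) \le |W|$, valid for every spanning tree $T$ of $G$.

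The lower bound $k(T) \ge 1$ is immediate. For the upper bound I would prove that every connected component $C$ of $\mathcal{F}_T$ contains at least one vertex of $W$; since components are vertex-disjoint this yields an injection from components into $W$ and hence $k(T) \le |W|$. To prove the component claim, suppose some component $C$ avoids $W$ entirely, so every vertex of $C$ lies in $V(H)\setminus W$. By the defining property of $W$, a vertex of $V(H)\setminus W$ is incident in $G$ only to edges of $H$, i.e.\ only to edges of $S$; hence every edge of $T$ meeting a vertex of $C$ belongs to $T\cap S = E(\mathcal{F}_T)$, and by maximality of the component $C$ it joins two vertices of $C$. Thus $V(C)$ is closed under adjacency in $T$, and since $T$ is connected this forces $V(C) = V(G)$. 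But then $V(H) = V(G)$ and $W = \emptyset$, which together imply $H = G$ (every edge of $G$ has both endpoints in $V(H)$, and any such edge missing from $H$ would place its endpoints in $W$), contradicting the hypothesis $H \neq G$.

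Putting the two bounds together gives $c_\mathrm{max} - c_\mathrm{min} \le (|V(H)| - 1) - (|V(H)| - |W|) = |W| - 1$, and feeding this into Theorem~\ref{multiedgeCycle} completes the proof. I expect the only subtle point to be the component-meets-$W$ claim, in particular confirming that the degenerate case in which $H$ happens to be spanning ($V(H) = V(G)$) while still $H \neq G$ causes no trouble: there $W$ still contains both endpoints of every omitted edge, so $|W| \ge 2$ and the argument is unaffected. One can also note the bound is tight, e.g.\ for a ``star'' subgraph $H$ with a single boundary vertex, where $|W|-1 = 0$ and indeed no spanning tree of $G$ can vary the number of $S$-edges it uses.
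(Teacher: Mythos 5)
Your proof is correct and follows essentially the same route as the paper's: both reduce the corollary to the bound $c_\mathrm{max}-c_\mathrm{min}\le |W|-1$ of Theorem~\ref{multiedgeCycle} by viewing $T\cap S$ as a spanning forest on the vertices of $H$ with between $1$ and $|W|$ components and invoking the edges-equals-vertices-minus-components identity. Your direct argument that every component meets $W$, together with your explicit treatment of the $H\neq G$ hypothesis (stated in the main text though omitted from the appendix statement), is simply a slightly more careful rendering of the paper's pigeonhole version of the same step.
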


\begin{proof}
Consider a spanning tree $T$ of $G$. Viewed as a subgraph of $H$, $T$ is still at least a spanning forest (i.e.~it may no longer be connected, but still has no cycles and includes every vertex of $H$), with no more than $|W|$ component trees. To see this, suppose it had $|W|+1$ component trees. In this case, one component would have to be disconnected from all the vertices in $W$ (if every component is connected to a vertex in $|W|$, there can be at most $|W|$, as components cannot be connected to each other). But in that case, $T$ (as a subgraph of $G$) was disconnected---it was never a spanning tree at all. 

Let $n$ be the number of vertices in $H$. The number of edges in a spanning forest is always the number of vertices in the forest minus the number of components (trees in the forest). This means that for our graph $G$, the size of the intersection of $S$ and the edge set of $T$ is restricted to lie between $ n - 1 = c_\mathrm{max}$ or $n - |W| = c_\mathrm{min}$. By Theorem \ref{multiedgeCycle}, this implies the result.
\end{proof}


\section{Proofs of the root-swapping lemmas}\label{sec:rootSwap}

In the course of proving our results above we came across ratios of products of spanning tree weights, such as  
\begin{equation}
	\frac{\sum_{T\in  {\mathcal E}_{mn}}\sum_{S\notin  {\mathcal E}_{mn}}  w(T_i) w(S_j)}{ \sum_{T\in  {\mathcal E}_{mn}}\sum_{S\notin  {\mathcal E}_{mn}} w(T_j) w(S_i)},
\end{equation}
	which we bounded using Lemmas \ref{treesurgery} and \ref{rootswap}, yielding our theorems. 
	Here, we present proofs of these key lemmas. 
	The arguments will depend on the existence of invertible mappings between the pairs of spanning trees in the numerator to pairs of spanning trees in the denominator, which have their roots ``swapped": $(T_i,S_j)\to (T^\prime_j,S^\prime_i)$.  
	We will construct these mappings explicitly, but first we set out some relevant notation and definitions.

First, we will find it helpful in this section  to use the standard notation $s(e)$ (the {source}) for the vertex at the tail of a directed edge $e$  and $t(e)$ (the {target}) for the vertex at the head of $e$, where the arrow points.
In addition, the graph formed by the removal of the edge $h$ from a graph $H$, i.e.~by the deletion of $h$, will be denoted $H\setminus h$, and the graph formed by adding an edge $h$ to $H$ will be denoted $H \cup h$.

Second, we need to define a new kind of spanning tree. We have already introduced spanning trees, as well as the notion of a spanning tree $T_i$ rooted at vertex $i$. Recall that in a rooted spanning tree, every edge is directed {towards} the root $i$ (since a tree has no cycles, this direction is defined unambiguously). Generalizing from this, we define a {doubly-rooted spanning tree}, schematically depicted in Fig.~\ref{fig:edgeswap}(a). We start with a spanning tree $S$ and two vertices $i$ and $j$.
We first note that all the edges in the rooted trees $S_i$ and $S_j$ are oriented in the same direction except for those edges along the unique path {between} $i$ and $j$. This inspires us to pick a vertex $k$ on this path and define a doubly-rooted spanning tree $S_{ij,k}$ {with branch point} $k$ to be the spanning tree $S$ with every edge directed as it is in $S_i$ and $S_j$---when those directions are the same---and otherwise directed towards $i$ if between $k$ and $i$, and towards $j$ if between $k$ and $j$. 
One can view a (singly) rooted spanning tree $S_j$ as a sort of ``degenerate" doubly-rooted tree $S_{ij,i}$ with branch point $i$.

Our mappings are then built from repeated applications of the following operations on pairs of the form $(T_b, S_{mn,b})$, where $T_b$ is a spanning tree rooted at some vertex $b$, and $S_{mn,b}$ is a doubly-rooted spanning tree with roots $m$ and $n$ and branch point $b$:
\begin{itemize}

\item \emph{Cycle flip.} Consider the unique edge $e$ pointing out of $b$ towards $n$ in $S_{mn,b}$. Reroot the tree $T_b$ to the target $t(e)$ to form $T_{t(e)}$ and flip the edge $e\to e^*$ to form $S_{mn,t(e)}=\left(S_{mn,b} \setminus e\right)\cup e^*$. Output $(T_{t(e)}, S_{mn,t(e)})$.

\item \emph{Edge swap.} Consider the unique edge $e$ pointing out of $b$ towards $n$ in $S_{mn,b}$. Let $f$ be first edge, along the unique directed path in $T_b$ from $t(e)$ to $b$, that reconnects $S_{mn,b} \setminus e$. Swap these edges to form $T'_{s(f)}=(T_b \setminus f) \cup e$ and $S'_{mn,s(f)}=(S_{mn,b} \setminus e) \cup f$. Output $(T'_{s(f)}, S'_{mn,s(f)})$.
\end{itemize}
\begin{figure}
	\centering
	\includegraphics[width=\textwidth/2]{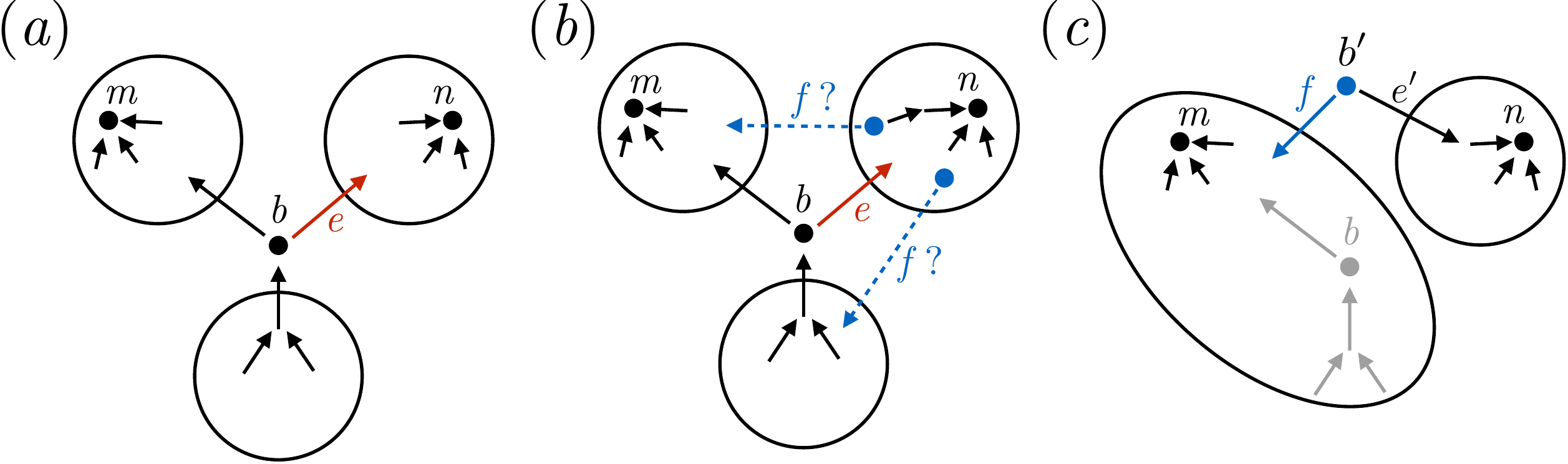}
	\caption{{\bf Effect of edge swap on $S_{mn,b}$.} (a) The structure of $S_{mn,b}$, a doubly-rooted tree with roots $m,n$ and branch point $b$. The edge $e$ is the unique edge pointing out of $s(e)=b$ towards $n$. (b) Removing the edge $e$ disconnects $S_{mn,b}$, since it is a tree. The edge $f$ from $T_b$ that reconnects it can point from the component containing $n$ to any part of the component containing $m$. It is also possible that $f = e^*$. (c) No matter where $f$ points, removing $e$ and adding $f$ to $S_{mn,b}$ yields a new doubly-rooted tree $S'_{mn,b'}$, with branch point $b' = s(f)$.}
	\label{fig:edgeswap}
\end{figure}

The output of each of these operations is another pair $(T'_{b'}, S'_{mn,b'})$ consisting of a tree rooted at $b'$ and a doubly-rooted tree with branch point $b'$ (see Figure \ref{fig:edgeswap} for an illustration of this in the case of \emph{edge swap}). 
Furthermore, no edges are reoriented in the \emph{edge swap}, although edges are exchanged between $T$ and $S$.
In a general \emph{cycle flip}, no edges are exchanged, and the edges that are reoriented form the single cycle obtained from the union of $e$ with the unique path in $T$ from $t(e)$ to $s(e)$.
In the degenerate case where the path in $T$ from $t(e)$ to $s(e)$ consists of the single edge $e^*$, \emph{cycle flip} and \emph{edge flip} are equivalent.

Notably, both of these operations are invertible, in the sense that given the output of either, {and} knowledge of which was applied, we can uniquely recover the original pair $(T_b,S_{mn,b})$ from $(T'_{b'}, S'_{mn,b'})$. 

\begin{itemize}
	
	\item To invert the \emph{cycle flip} all we need is to identify the original edge $e$---it is the reverse of the unique edge pointing out of $b'$ towards $m$. Note that $s(e) = b$, the original branch point of $S$ and root of $T$.
	
	\item To invert the \emph{edge swap}, we need to identify the original edges $e$ and $f$. The unique edge pointing out of $b'$ towards $m$ is $f$. The original $e$ is the first edge, going back along the path
	in $T'_{b'}$ from $s(f) = b'$ to $t(f)$, that reconnects $S'_{mn,b'} \setminus f$. 
\end{itemize}

\setcounter{lemma}{0}
\begin{lemma}[``Tree surgery"]\label{treesurgery}
	Let ${\mathcal E}_{mn}$ be the set of spanning trees of $G$ containing the distinguished (undirected) edge $m \leftrightarrow n$. Then for any two distinct vertices $i, j$ of $G$,
	\begin{equation}\label{claimfrac}
	\left|\frac{\sum_{T\in  {\mathcal E}_{mn}}\sum_{S\notin  {\mathcal E}_{mn}}  w(T_i) w(S_j)}{ \sum_{T\in  {\mathcal E}_{mn}}\sum_{S\notin  {\mathcal E}_{mn}} w(T_j) w(S_i) }\right| \leq \exp(F_\mathrm{max})
	\end{equation}
	where $F_\mathrm{max}$ is the magnitude of the cycle force that is largest in magnitude, among all those associated to cycles containing the distinguished edge $m \leftrightarrow n$ (in either direction).
\end{lemma}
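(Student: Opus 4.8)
\textit{Proof plan.} The plan is to reduce the estimate \eqref{claimfrac} to a \emph{weight-controlled bijection} between pairs of spanning trees. Write $N$ and $D$ for the numerator and denominator of the fraction in \eqref{claimfrac}; the goal is $N\le e^{F_\mathrm{max}}D$. One is tempted to apply the elementary bound \eqref{maxfrac} directly, matching the summand $w(T_i)w(S_j)$ of $N$ with the summand $w(T_j)w(S_i)$ of $D$ carrying the \emph{same} labels $(T,S)$; but this fails, because $\ln\!\big(w(T_i)w(S_j)/w(T_j)w(S_i)\big)$ is the force of the closed walk that runs along the $T$-path from $i$ to $j$ and back along the $S$-path from $j$ to $i$, and this walk need not traverse the distinguished edge $m\leftrightarrow n$ at all, so its force can exceed $F_\mathrm{max}$. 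The remedy is to match each summand of $N$ with a \emph{different} summand of $D$.

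Concretely, I would construct a bijection $\psi$ of the set $\{(T,S):T\in\mathcal{E}_{mn},\ S\notin\mathcal{E}_{mn}\}$ onto itself such that, writing $\psi(T,S)=(T',S')$, the pointwise inequality
\begin{equation}
w(T_i)\,w(S_j)\ \le\ e^{F_\mathrm{max}}\;w(T'_j)\,w(S'_i)
\end{equation}
holds for every admissible pair. Granting this, reindex $D$ by $\psi$ and invoke \eqref{maxfrac} with $x_{(T,S)}=w(T_i)w(S_j)$ and $y_{(T,S)}=w(T'_j)w(S'_i)$ to get $N/D\le\max(x/y)\le e^{F_\mathrm{max}}$, which is \eqref{claimfrac}.

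The map $\psi$ is to be assembled by composing the \emph{cycle flip} and \emph{edge swap} operations introduced above. One regards $(T_i,S_j)$ as a pair consisting of a tree rooted at a branch point $b$ together with a doubly-rooted tree having that same branch point, with $b$ initially at one endpoint of the relevant segment, and one repeatedly advances $b$ toward the other endpoint. At each step one performs a cycle flip if the cycle it would re-orient passes through $m\leftrightarrow n$, and an edge swap otherwise. An edge swap only trades an edge between the two trees and re-orients nothing, so it leaves the product of the two rooted-tree weights unchanged; a cycle flip multiplies this product by $e^{\pm F_C}$ for the one cycle $C$ it re-orients, and by the selection rule $C$ contains $m\leftrightarrow n$, so $|F_C|\le F_\mathrm{max}$. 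Both operations keep the first tree in $\mathcal{E}_{mn}$ and the second out of it (the edge $m\leftrightarrow n$ is never inserted into $S$ nor removed from $T$), so $\psi$ preserves the index set; and since each operation is invertible once its type is recorded, $\psi$ is a bijection.

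I expect the genuine difficulty to lie in two places. First, the bookkeeping: one must check that the branch point reaches its destination after finitely many steps, that the invariants above are genuinely maintained, and that the step-wise inverses glue into a global inverse for $\psi$. Second, and more seriously, one must show that the \emph{net} weight factor accumulated along the whole sequence is at most $e^{F_\mathrm{max}}$, rather than a product of many such factors --- equivalently, that the total re-orientation effected by $\psi$ is carried by a single cycle through $m\leftrightarrow n$. This is precisely why the edge swaps are interleaved with the cycle flips: they absorb exactly those parts of the transformation that would otherwise demand re-orientations along cycles missing $m\leftrightarrow n$ (such re-orientations being the ones bounded, less favourably, by $F_{i\leftrightarrow j}$ in the companion Lemma \ref{rootswap}), and confirming that they succeed in doing so is the heart of the argument.
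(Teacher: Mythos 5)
Your plan is the paper's plan: match each numerator term $w(T_i)w(S_j)$ to a denominator term $w(T'_j)w(S'_i)$ via an invertible sequence of \emph{edge swap} and \emph{cycle flip} operations, observe that only cycle flips change the product of weights and do so by $e^{\pm F_C}$ for a cycle $C$ through $m\leftrightarrow n$, and finish with \eqref{maxfrac}. But the proposal has a genuine gap exactly where you say the heart of the argument lies: you never establish that the \emph{net} accumulated factor is a single $e^{\pm F_C}$ rather than a product $e^{\pm F_{C_1}}e^{\pm F_{C_2}}\cdots$ of several such factors, which would not be bounded by $e^{F_\mathrm{max}}$. Flagging this as the remaining difficulty does not discharge it, and under the selection rule you actually state --- ``cycle flip whenever the cycle to be re-oriented passes through $m\leftrightarrow n$'' --- it is not clear that it can be discharged, since that cycle can contain the distinguished edge on many steps of the walk from $i$ to $j$.

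The paper closes this gap with a different, more restrictive trigger: it edge-swaps by default and cycle-flips \emph{only} when the edge $f$ that the swap would remove from $T$ is itself the distinguished edge. With that rule one can argue that cycle flip fires \emph{at most once}: $T\in\mathcal{E}_{mn}$ contains exactly one orientation of $m\leftrightarrow n$; the candidate edge $f$ always points from the $j$-rooted component of $S\setminus e$ into the $i$-rooted component; and once the single cycle flip reverses the distinguished edge, the $i$-rooted part of $S$ only grows thereafter, so the distinguished edge can never reappear as $f$. Hence the total ratio is exactly $\exp(F_C)$ for one cycle $C$ containing $m\leftrightarrow n$, giving the bound. (Your other invariants --- that the distinguished edge is never inserted into $S$ nor removed from $T$, that the branch point advances monotonically so the process terminates, and that recording which operation fired at each step makes the composite invertible --- are correct and match the paper, but they are the easy part; without the at-most-one-flip argument the proof is incomplete.)
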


\begin{proof}
To prove this result, it is sufficient to find a bijection between terms in the numerator and those in the denominator, such that each term and its partner are equal or differ by a factor of $\exp(F_C)$, where $F_C$ is the cycle force associated to a cycle $C$ that contains the distinguished edge.

Consider any term $w(T_i)w(S_j)$ in the numerator. We map it to a term in the denominator as follows. Starting with the pair $(T_i, S_j)$, viewing $S_j$ as a doubly-rooted tree $S_{ij,i}$, we repeatedly apply \emph{edge swap} until the root of the rooted tree (equivalently the branch point of the doubly-rooted tree) equals $j$, unless the edge $f$ that would be removed from $T_b$ in the process is the distinguished edge ($m \to n$ or $n \to m$). In that case, apply \emph{cycle flip} in that step, so that the distinguished edge is not exchanged.

It is guaranteed that this iterative process will eventually terminate, because at every step, the branch point of the doubly-rooted tree $S_{ij,b}$ moves closer to $j$, and the part of $S_{ij,b}$ rooted at $i$ grows. Eventually, the branch point hits $j$, and the \emph{edge swap} and \emph{cycle flip} operations cannot be applied.

At the end of this iterative process the initial pair $(T_i, S_j)$ has been transformed into a pair $(T'_j, S'_i)$, whose associated weight $w(T'_j)w(S'_i)$ appears in the denominator of \eqref{claimfrac}. This defines a bijection between terms in the numerator and terms in the denominator.
 To see that the map is invertible, we note that every step along the way (an application of \emph{edge swap} or \emph{cycle flip}) is invertible, as we argued above. Therefore, as long as it is possible to uniquely determine which was applied at each step, the whole sequence of operations is invertible. But this {is} possible, because when inverting a step, we can find the edge $f$ that would have been removed from $T$ by \emph{edge swap} in that step, and that determines whether or not \emph{edge swap} or \emph{cycle flip} was {in fact} applied in that step.
 Namely, \emph{cycle flip} was applied if $f$ was the distinguished edge, and \emph{edge swap} was applied otherwise.
 
 \begin{figure}[t]
 	\centering
 	\includegraphics[width=0.5\textwidth]{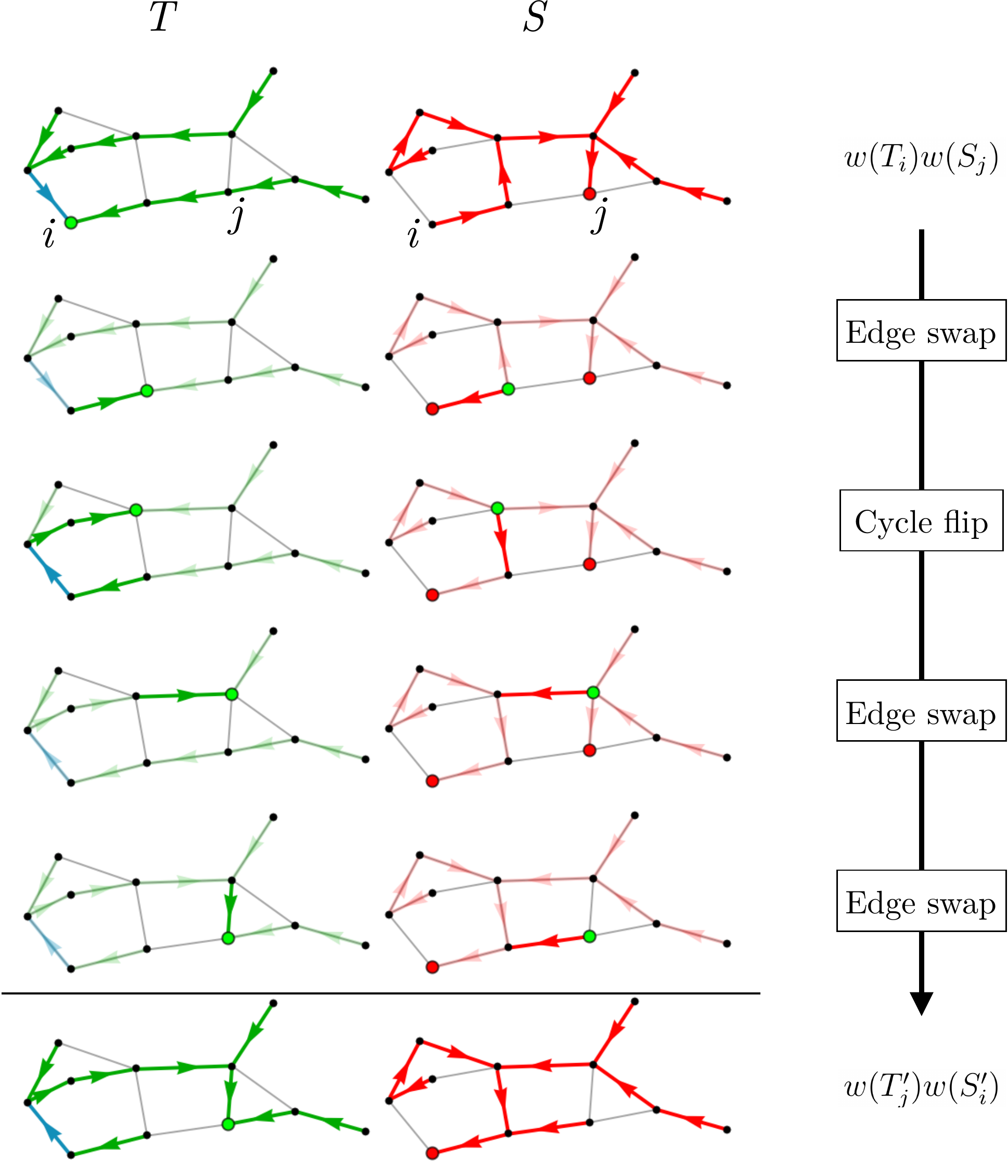}
 	\caption{{\bf Tree surgery.} Illustration of the steps of the iterative ``tree surgery" described in the proof of Lemma \ref{treesurgery} applied to particular pair of rooted spanning trees $T_i$ and $S_j$ of a graph $G$ with 11 vertices. The sequence of \emph{edge swap} and \emph{cycle flip} operations applied has the effect of swapping the roots of the trees without swapping the distinguished edge (blue). At intermediate stages, $S$ becomes a doubly-rooted tree whose branch point (labeled in green, also the root of $T$) moves between its roots $i$ and $j$ (labeled in red). The set of directed edges in the final pair of trees differs from the set in the original pair by the edges in the cycle (which contains the distinguished edge) flipped in the second step.}
 	\label{fig:tree-surgery-steps}
 \end{figure}

Having found this bijection between terms, it remains only for us to ask what is the ratio of the terms $w(T_i)w(S_j)$ and $w(T'_j)w(S'_i)$? The operation \emph{edge swap} has no effect on this product of weights, since it merely moves edges between $T$ and $S$. However, when \emph{cycle flip} is applied, edges change the way they are directed, and the weight $w(T_i)w(S_j)$ changes by a factor of $\exp(F_C)$ where $C$ is the (directed) cycle that gets flipped. Since we only apply \emph{cycle flip} if the path in $T_b$ from $t(e)$ to its root contains $m \leftrightarrow n$, $C$ is always a cycle containing $m \leftrightarrow n$. Furthermore, in the iteration described above, \emph{cycle flip} is applied {at most once}. To see this, note that the original tree $T_i$ contains either $m \to n$ or $n \to m$, never both. Furthermore, the edge $f$ that comes up in \emph{edge swap} always points from the part of $S$ rooted at $j$ to the part rooted $i$. Thus, if \emph{cycle flip} flips the distinguished edge to point the other way, it will never come up as $f$ in \emph{edge swap} again, because the part of $S$ rooted at $i$ only ever grows during this algorithm.

So we have 
\begin{equation}
\frac{w(T_i)w(S_j)}{w(T'_j)w(S'_i)} = \exp(F_C)
\end{equation}
for some cycle $C$ that contains the edge $m \leftrightarrow n$, as desired.

To prove the inequality \eqref{claimfrac}, we now match up terms above and below using this bijective ``tree surgery", putting them in an order such that each term above has the same position as its partner (e.g.~image) below. The Lemma then follows from the inequality: 
\begin{equation}
\frac{\sum_{i=1}^n x_i}{\sum_{i=1}^n y_i} = \frac{\sum_{i=1}^n \left(x_i/y_i\right)y_i}{\sum_{i=1}^n y_i} \leq \max_i \left(\frac{x_i}{y_i}\right).
\end{equation}
\end{proof}

\setcounter{lemma}{1}
\begin{lemma}[``Cycle flip only"]
\label{rootswap}
For any spanning trees $T, S$ and vertices $i, j$ of $G$
\begin{equation}
\frac{w(T_i)w(S_j)}{w(T_j)w(S_i)} \leq \exp(F_{i \leftrightarrow j})
\end{equation}
where $F_{i \leftrightarrow j}$ is {largest} possible value of $\ln [w(P_{i\to j} \cup P_{j\to i} ) / w(P_{i\to j}^* \cup P_{j\to i}^*) ]$ where $P_{i\to j}$ is a (non-self-intersecting) path from $i$ to $j$ and $P_{j\to i}$ is a  (non-self-intersecting) path from $j$ to $i$, and the superscript `$\, *$' denotes the reverse orientation.
\end{lemma}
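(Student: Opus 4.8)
The plan is to prove this directly, by comparing orientations in the two rooted spanning trees rather than invoking the surgery maps — these are needed only to bound ratios of \emph{sums} of tree weights, as in Lemma \ref{treesurgery}. The point is that $T_i$ and $T_j$ are built from the \emph{same} undirected tree $T$, and likewise $S_i$ and $S_j$ from the same $S$, so the left-hand ratio collapses to a ratio of path weights to the weights of their reverses.

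First I would record the elementary fact that, for an undirected spanning tree $T$ and a vertex $r$, an edge $e$ of $T$ is oriented in $T_r$ from the component of $T\setminus e$ \emph{not} containing $r$ toward the one that does. Hence, for two vertices $i$ and $j$, an edge $e$ of $T$ carries the same orientation in $T_i$ and in $T_j$ unless $i$ and $j$ lie in different components of $T\setminus e$, i.e.\ unless $e$ lies on the unique (non-self-intersecting) path $P^T$ in $T$ joining $i$ and $j$. All off-path factors therefore cancel in $w(T_i)/w(T_j)$, while on $P^T$ the orientations are exactly reversed: in $T_i$ the edges of $P^T$ form a directed path from $j$ to $i$, call it $P^T_{j\to i}$, and in $T_j$ they form its reverse $(P^T_{j\to i})^{*}$. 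This gives $w(T_i)/w(T_j) = w(P^T_{j\to i})/w((P^T_{j\to i})^{*})$, and by the identical argument $w(S_j)/w(S_i) = w(P^S_{i\to j})/w((P^S_{i\to j})^{*})$, where $P^S_{i\to j}$ is the directed $i\to j$ path inside the tree $S$.

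Multiplying these two identities,
\begin{equation}
\frac{w(T_i)\,w(S_j)}{w(T_j)\,w(S_i)} \;=\; \frac{w\!\left(P^S_{i\to j}\right) w\!\left(P^T_{j\to i}\right)}{w\!\left((P^S_{i\to j})^{*}\right) w\!\left((P^T_{j\to i})^{*}\right)} .
\end{equation}
Since $P^S_{i\to j}$ and $P^T_{j\to i}$ are non-self-intersecting paths (they live in trees), the right-hand side is exactly of the form appearing in the definition \eqref{eq:pathForce}, realized by the choice $P_{i\to j}=P^S_{i\to j}$ and $P_{j\to i}=P^T_{j\to i}$; hence its logarithm is bounded above by the maximum $F_{i\leftrightarrow j}$, which is the claimed inequality.

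The effort — modest — lies in two pieces of bookkeeping: checking carefully that every edge off the $i$--$j$ path carries an identical orientation in $T_i$ and $T_j$ so that those weights cancel cleanly, and reconciling the product $w(T_i)w(S_j)$ with the ``$\cup$'' notation in \eqref{eq:pathForce} (reading $w(P_{i\to j}\cup P_{j\to i})$ as the product of all the listed edge weights). As a consistency check one can rederive the same identity inside the framework of this section: run \emph{cycle flip} on $(T_i, S_{ij,i})$ repeatedly until the branch point reaches $j$; this re-roots $T$ along the $S$-path from $i$ to $j$, returns the pair $(T_j, S_i)$, and the accumulated weight factor telescopes to precisely the ratio displayed above — but the direct orientation count is the shortest route.
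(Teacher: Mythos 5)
Your proof is correct and is essentially the paper's own argument: the paper also reduces the left-hand side to the ratio $w(S_{i\to j})w(T_{j\to i})/w(S_{j\to i})w(T_{i\to j})$ via the re-rooting identity $w(T_v)w(T_{v\to w})=w(T_w)w(T_{w\to v})$, which is exactly the off-path-cancellation fact you derive directly. Your orientation-counting derivation of that identity is a fine (and slightly more self-contained) substitute for the paper's framing in terms of repeated \emph{cycle flip} operations, as you yourself note in your closing consistency check.
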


\begin{proof}
	As above, we consider the pair $(T_i,S_j)$ but this time just apply \emph{cycle flip} to it repeatedly until it can no longer be applied (because the branch point of $S$ has become $j$). The effect of these steps is to ``swap the roots" of the two trees $T_i$ and $S_j$, changing the directions of edges without changing the underlying (undirected) spanning trees. Along any undirected spanning tree $T$, there is a unique directed path $T_{v \to w}$ from any vertex $v$ to any other vertex $w$. ``Re-rooting" a tree changes its weight as follows
	\begin{equation}
	w(T_v)w(T_{v\to w}) = w(T_w)w(T_{w\to v})
	\end{equation}
	which implies
	\begin{equation}
	w(T_i)w(S_j) = w(T_j)w(S_i)\frac{w(S_{i\to j})w(T_{j\to i})}{w(S_{j\to i})w(T_{i\to j})}.
	\end{equation}
	The fraction appearing here is of the form $w(P_{i\to j} \cup P_{j\to i} ) / w(P_{i\to j}^* \cup P_{j\to i}^*) $, as required in the statement, establishing the result.
\end{proof}

It is important to note that neither Lemma \ref{treesurgery} nor Lemma \ref{rootswap} implies the other, although their proofs can be viewed as depending on a common technique.

\section{Saturating the inequalities}\label{sec:saturate}

We have established a number of thermodynamic bounds on steady-state response to edge perturbations.
It remains an open question whether we can saturate these inequalities.  
In this section, we exhibit one example where we can saturate our bounds---the case of a system whose transition graph $G$ consists of a single cycle $C$  with cycle force $F_C=F_{\rm max}$.  
While we are unable to prove this is the only way to saturate our inequalities, we do argue for its general relevance.

To keep the discussion as straightforward and precise as possible, we focus on the ratio bound in \eqref{eq:barBound1}, as this turns out to be the simplest to investigate.
We first specialize to the case where we vary the edge parameter $B_{mn}$ associated to the edge $e_{mn}$, and ask for the response of the ratio of steady-state probabilities of the adjacent states $m$ and $n$.
In this case, the series of inequalities that lead to our bound can be summarized as
\begin{equation}\label{eq:sat}
\begin{split}
\left|\frac{\partial \ln \left(\pi_m / \pi_n\right)}{\partial B_{mn}}\right| &=\left|\frac{b_0a_1-a_0b_1}{(b_0+b_1)(a_0+a_1)}\right| \\
&\hspace{-.3cm}\underset{\textrm{AM-GM}}{\le} \tanh\left(\frac{1}{4}\left|\ln\frac{b_0 a_1}{a_0 b_1}\right|\right)\\
&\hspace{-.5cm}\underset{\textrm{``tree surgery''}}{\le} \tanh(F_C/4),
\end{split}
\end{equation}
where we use the notation
	\begin{align}
	a_1 &= \sum_{T \in {\mathcal E}_{mn}} w(T_m) \quad \quad 	a_0 = \sum_{S \notin {\mathcal E}_{mn}}  w(S_m)\\
	b_1 &= \sum_{T \in {\mathcal E}_{mn}} w(T_n) \quad \quad 	b_0 = \sum_{S \notin {\mathcal E}_{mn}}  w(S_n).
	\end{align}
The first inequality in Eq.~\eqref{eq:sat} is an application of the AM-GM inequality and the second comes about from our ``tree surgery'' argument of Lemma~\ref{treesurgery}.
We address each in turn.

Let us begin with the tree surgery inequality, which comes about from analyzing the ratio 
\begin{equation}
\frac{b_0 a_1}{a_0 b_1}=\frac{\sum_{T\in \mathcal{E}_{mn}}\sum_{S\notin \mathcal{E}_{mn}}w(T_m)w(S_n)}{\sum_{T\in \mathcal{E}_{mn}}\sum_{S\notin \mathcal{E}_{mn}}w(T_n)w(S_m)}.
\end{equation}
The tree surgery provides an invertible mapping between the terms in the numerator and those in the denominator.  
For the case of a single cycle with vertices $m, n$ adjacent to the  distinguished edge $e_{mn}$, we have 
\begin{equation}
w(T_m)w(S_n)=w(T_n)w(S_m)e^{F_C}
\end{equation}
for all $T\in{\mathcal E}_{mn}$ and $S\notin{\mathcal E}_{mn}$.
Thus, every term in the numerator is proportional to a term in the denominator with the same proportionality constant:
\begin{equation}
\frac{b_0 a_1}{a_0 b_1}=\frac{\sum_{T_m\in \mathcal{E}_{mn}}\sum_{S_n\in \mathcal{E}_{mn}}w(T_n)w(S_m)e^{F_C}}{\sum_{T_n\in \mathcal{E}_{mn}}\sum_{S_m\in \mathcal{E}_{mn}}w(T_n)w(S_m)}=e^{F_C}.
\end{equation}
Thus, the ``tree surgery'' inequality is exactly satisfied in this case.

Equality in the AM-GM inequality is reached when
\begin{equation}\label{eq:amgmSat}
a_0 b_0 = a_1 b_1.
\end{equation}
While there are numerous choices for the rates that cause this equality to be satisfied, we will just exhibit a particular one to show that it is possible.
To do so, we first make a simplifying observation: each term on both sides of the equality is a product of the weight of a spanning tree rooted at $m$ and one that is rooted at $n$.
Therefore, each term has exactly the same dependence on the vertex parameters $E_j$, so we can cancel all the $E_j$ on both sides of \eqref{eq:amgmSat}.
Thus, all we need to do is fix the symmetric and asymmetric edge parameters.
We first fix the asymmetric edge parameters by choosing all the weight of the cycle force to be on the perturbed edge $e_{mn}$,
\begin{equation}
F_{kl}=-F_{lk}=\left\{\begin{array}{cc}F_C & k=m, l=n \\0 & \mathrm{else} \end{array}\right..
\end{equation}
Solving Eq.~\eqref{eq:amgmSat} for the symmetric edge parameters then leads to the relation
\begin{equation}
e^{B_{mn}}=\sum_{\substack{e_{ij} \in G \\ ij\neq mn}}e^{B_{ij}}.
\end{equation}
Thus, it is possible to saturate our inequality for the response of the ratio $\ln(\pi_m/\pi_n)$ to perturbations of $B_{mn}$.

This may seem like a rather special case, but we believe the situation is more general than it first appears, since it is possible for the dynamics on more complicated graphs $G$ (e.g.~with multiple cycles) to effectively have this ``single-cycle" behavior.
To see this, note that if the rates of transitions in $G$ are very small, apart from those around a single cycle containing the perturbed edge $e_{mn}$, then the graph is effectively composed of a single cycle, for the purposes of understanding response of the states on the cycle.
In addition, if we look at the response of ratios of arbitrary states on the cycle, such as $\ln(\pi_i/\pi_j)$, again the dynamics can effectively reproduce the situation discussed above, where we focused on the vertices adjacent to $e_{mn}$. This is because if the rates along the unique paths from $i$ to $m$ and $j$ to $n$ on the cycle are extremely fast, the states along these paths rapidly reach a local steady-state distribution. The two paths then act as two ``effective states" adjacent to the perturbed edge $e_{mn}$.

These arguments suggest that, for a general graph $G$, there are limits of the rates that give rise to response approaching arbitrarily closely the bound set by Corollary \ref{ratiobound}.

\section{Asymmetric edge perturbation inequality}\label{sec:force}

Our asymmetric edge perturbation bound follows from a more general inequality for arbitrary perturbations of a single rate:
\begin{proposition}
\label{singlerate}
\begin{equation}
\left|\frac{\partial \pi_k}{\partial \ln W_{ij}}\right| \leq \pi_k(1-\pi_k) 
\end{equation}
\end{proposition}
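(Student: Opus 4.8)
\textbf{Plan.} The strategy is to use the matrix-tree theorem to isolate the dependence of $\pi_k$ on the single rate $W_{ij}$, reducing the problem to analyzing a simple rational function of that one variable, and then to bound its logarithmic derivative using the same AM-GM trick that powers Theorem \ref{mainresult}. The key structural fact is that the rate $W_{ij}$ (a directed edge $j \to i$) appears in a rooted spanning tree $T_r$ at most once, and in fact appears exactly once precisely when $j \to i$ is the unique edge directed out of $j$ in $T_r$; otherwise it does not appear at all. Hence every term $w(T_r)$ in the matrix-tree sum is either linear in $W_{ij}$ or independent of it.

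\textbf{Key steps.} First I would partition, for each root $r$, the rooted spanning trees into those containing the directed edge $j \to i$ and those not containing it, and write
\begin{equation}
\pi_k = \frac{c\, W_{ij} + d}{(a + \tilde a) W_{ij} + (b + \tilde b)},
\end{equation}
where $c\, W_{ij}$ collects the trees rooted at $k$ that use $j\to i$, $d$ collects those rooted at $k$ that do not, and the denominator is the analogous decomposition of the normalization $\mathcal N = \sum_r \sum_T w(T_r)$; here $c,d$ and the denominator coefficients are all independent of $W_{ij}$, and all are nonnegative. Writing $x = W_{ij}$, this is a Möbius function of $x$, so
\begin{equation}
\frac{\partial \pi_k}{\partial \ln W_{ij}} = x\,\frac{\partial \pi_k}{\partial x}
= \frac{x\bigl[c(b+\tilde b) - d(a+\tilde a)\bigr]}{\bigl[(a+\tilde a)x + (b+\tilde b)\bigr]^2}.
\end{equation}
Next, abbreviating the numerator pieces as $P_1 = c\,x$, $P_0 = d$ for the $k$-rooted trees and $Q_1 = (a+\tilde a)x$, $Q_0 = b+\tilde b$ for the complementary sum (so that $\pi_k = (P_1+P_0)/(P_1+P_0+Q_1+Q_0)$ and $1-\pi_k = (Q_1+Q_0)/(\cdots)$), a short computation rewrites the derivative as
\begin{equation}
\frac{\partial \pi_k}{\partial \ln W_{ij}} = \frac{P_1 Q_0 - P_0 Q_1}{(P_1+P_0+Q_1+Q_0)^2}.
\end{equation}
Finally I would apply AM-GM to the four cross terms in the expansion of $(P_1+P_0+Q_1+Q_0)^2$: bounding $(P_1+P_0+Q_1+Q_0)^2 \ge \bigl(\sqrt{(P_1+P_0)(Q_1+Q_0)}\bigr)\cdot 4$ is too lossy, so instead I would note $(P_1+P_0+Q_1+Q_0)^2 \ge 4(P_1+P_0)(Q_1+Q_0)$ directly, while $|P_1 Q_0 - P_0 Q_1| \le P_1 Q_0 + P_0 Q_1 \le (P_1+P_0)(Q_1+Q_0)$ since all quantities are nonnegative. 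Combining, $|\partial_{\ln W_{ij}}\pi_k| \le \tfrac14 \cdot \frac{4(P_1+P_0)(Q_1+Q_0)}{(P_1+P_0+Q_1+Q_0)^2}\cdot\frac{(P_1+P_0)(Q_1+Q_0)}{(P_1+P_0)(Q_1+Q_0)}$; cleaning this up gives $|\partial_{\ln W_{ij}}\pi_k| \le \dfrac{(P_1+P_0)(Q_1+Q_0)}{(P_1+P_0+Q_1+Q_0)^2} = \pi_k(1-\pi_k)$, which is the claim.

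\textbf{Main obstacle.} The only delicate point is the combinatorial bookkeeping in the first step: verifying that $W_{ij}$ enters each rooted spanning tree weight at most linearly, and correctly, for \emph{every} choice of root $r$ (including $r = k$ and $r = j$, where the edge $j\to i$ may be forced absent). Once the rational-function form $\pi_k = (P_1+P_0)/(P_1+P_0+Q_1+Q_0)$ with $P$'s and $Q$'s nonnegative and linear in $x$ is secured, the inequality is a purely algebraic consequence of nonnegativity and AM-GM, with no reference to cycle forces needed — this is why the bound here is the cycle-force-free constant $\pi_k(1-\pi_k)$ rather than something involving $\tanh(F_{\max}/4)$. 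The asymmetric-edge bound \eqref{eq:forceBound} then follows immediately from \eqref{eq:FchainRule} and the triangle inequality, since $\partial_{F_{mn}}\pi_k$ is a difference of two such single-rate derivatives (up to the factor $\tfrac12$ and the rate prefactors, which combine to respect the same bound).
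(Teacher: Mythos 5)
Your proof is correct and takes essentially the same route as the paper's: the matrix-tree theorem makes $\pi_k$ a linear-fractional function of $W_{ij}$ with nonnegative coefficients, and your final inequality $|P_1Q_0-P_0Q_1|\le P_1Q_0+P_0Q_1\le(P_1+P_0)(Q_1+Q_0)$ is exactly the paper's observation (in different bookkeeping) that the two fractions $Q_0/(Q_1+Q_0)$ and $P_0/(P_1+P_0)$ each lie in $[0,1]$. Two cosmetic points: if $Q_1,Q_0$ denote only the non-$k$-rooted trees, the denominator in your first display should be $(c+a+\tilde a)W_{ij}+(d+b+\tilde b)$ rather than $(a+\tilde a)W_{ij}+(b+\tilde b)$, and the AM-GM detour is unnecessary since your last chain of inequalities already closes the argument.
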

\begin{proof}
By the matrix-tree theorem, we can write
\begin{equation}
\pi_k = \frac{a W_{ij} + b}{c W_{ij} + d}.
\end{equation}
where $a, b$, $c$ and $d$ are nonnegative quantities formed from sums of weights of rooted spanning trees that do not depend on $W_{ij}$. By normalization of probability $\pi_k \leq 1$, so we have $c \geq a$, $d \geq b$. Differentiating these expressions yields
\begin{equation}
\frac{\partial \pi_k}{\partial \ln W_{ij}} = \frac{(a d - b c)W_{ij}}{\left(c W_{ij} + d\right)^2}
\end{equation}
which after re-arranging gives
\begin{equation}
\begin{split}
&\left|\frac{\partial \pi_k}{\partial \ln W_{ij}}\right| \\
&\quad= \pi_k (1-\pi_k)\left|\frac{d-b}{(c-a)W_{ij}+(d-b)}-\frac{b}{a W_{ij}+b}\right|
\end{split}
\end{equation}
but the value of each of the two fractions on the right hand side is not smaller than zero or greater than one. This means their difference is not greater than one in magnitude, implying the result.
\end{proof}

\begin{corollary}
\begin{equation}
\left|\frac{\partial \pi_k}{\partial F_{ij}}\right| \leq \pi_k(1-\pi_k) 
\end{equation}
\end{corollary}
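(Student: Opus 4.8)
The plan is to obtain the corollary as an immediate consequence of Proposition~\ref{singlerate} together with the chain rule \eqref{eq:FchainRule}. The key observation is that a variation of the asymmetric edge parameter $F_{ij}$ is, up to the known coefficients, nothing but a combination of variations of the two single rates $W_{ij}$ and $W_{ji}$, each of which is already controlled by Proposition~\ref{singlerate}.

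Concretely, I would first rewrite \eqref{eq:FchainRule} in logarithmic form, using $W_{ij}\,\partial/\partial W_{ij} = \partial/\partial \ln W_{ij}$, as
\begin{equation}
\frac{\partial \pi_k}{\partial F_{ij}} = \frac{1}{2}\left(\frac{\partial \pi_k}{\partial \ln W_{ij}} - \frac{\partial \pi_k}{\partial \ln W_{ji}}\right).
\end{equation}
Then I would apply the triangle inequality and bound each term separately by Proposition~\ref{singlerate} — once applied to the rate $W_{ij}$ (all other rates, including $W_{ji}$, held fixed) and once to the rate $W_{ji}$ — to get
\begin{equation}
\left|\frac{\partial \pi_k}{\partial F_{ij}}\right| \le \frac{1}{2}\left(\left|\frac{\partial \pi_k}{\partial \ln W_{ij}}\right| + \left|\frac{\partial \pi_k}{\partial \ln W_{ji}}\right|\right) \le \frac{1}{2}\bigl(\pi_k(1-\pi_k) + \pi_k(1-\pi_k)\bigr) = \pi_k(1-\pi_k),
\end{equation}
which is exactly the claimed inequality.

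There is essentially no obstacle here: the only point that needs a word of justification is that Proposition~\ref{singlerate} is legitimately applicable to each of $W_{ij}$ and $W_{ji}$ individually, which holds because that proposition bounds the response to the perturbation of \emph{any} single transition rate with all others fixed. It is worth noting, though, that this route very likely does not give the sharpest possible constant, since Proposition~\ref{singlerate} itself already passes through an AM--GM-type estimate and we then incur a further loss in the triangle inequality; in fact $\partial_{\ln W_{ij}}\pi_k$ and $\partial_{\ln W_{ji}}\pi_k$ tend to have opposite signs, so the worst case built into the triangle inequality is never realized. A tighter bound could in principle be had by writing $\pi_k$ directly from the matrix-tree theorem as a ratio linear in each of $W_{ij}$ and $W_{ji}$ (no spanning tree can contain both directed edges $i\to j$ and $j\to i$) and estimating $\partial_{F_{ij}}\pi_k$ in one shot, but for the stated corollary the simple two-term argument suffices.
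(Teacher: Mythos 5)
Your proposal is correct and follows essentially the same route as the paper: the chain-rule identity \eqref{eq:FchainRule}, the triangle inequality, and two applications of Proposition~\ref{singlerate}. The remarks on the non-sharpness of the triangle-inequality step are a reasonable aside but do not change the argument.
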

\begin{proof}
The asymmetric edge parameter $F_{ij}$ appears in two rates, $W_{ij}$ and $W_{ji}$. This implies, by the chain rule
\begin{equation}
\frac{\partial \pi_k}{\partial F_{ij}} = \frac{1}{2}\left(W_{ij}\frac{\partial \pi_k}{\partial W_{ij}} - W_{ji}\frac{\partial \pi_k}{\partial W_{ji}}\right),
\end{equation}
which implies, by the triangle inequality, 
\begin{equation}
\left|\frac{\partial \pi_k}{\partial F_{ij}}\right| \leq  \frac{1}{2}\left|W_{ij}\frac{\partial \pi_k}{\partial W_{ij}}\right| + \frac{1}{2}\left|W_{ji}\frac{\partial \pi_k}{\partial W_{ji}}\right|.
\end{equation}
Now applying Proposition \ref{singlerate} establishes the desired result.
\end{proof}


%
%
%
%
%
%
%
%
%

\section{Biochemical applications}\label{sec:biochem}

So far, we have stated and proved equalities and inequalities about the response to perturbations of physical systems whose dynamics are well-modeled as continuous in time and Markovian over a finite state space. In this section, we describe specializations of these general results to two well-known motifs found in biochemical networks. In each case, we find an inequality relating some figure of merit to a chemical potential difference driving the network out of equilibrium (for example, $\Delta\mu = \mu_\mathrm{ATP}-\mu_\mathrm{ADP}-\mu_\mathrm{Pi}$ for ATP hydrolysis).

There are several ways that studying a biochemical network might lead us to consider a linear time evolution equation like \eqref{master},
\begin{equation}
\dot{p}_i(t) = \sum_{j=1}^N W_{ij} p_j(t), 
\end{equation}
with $\sum_iW_{ij} = 0$ for all $j$. First, the chemical master equation, which governs the evolution of the distribution over counts $(n_A, n_B, \dots)$ of chemical species $A, B, \dots$, is of this form. However, for chemical systems with many particles the number of states $N$ in such a description is enormous.

However, for some chemical reaction networks, the linear equation \eqref{master} arises as the {rate equation} governing the deterministic evolution of the concentrations of chemical species. As emphasized by Gunawardena \cite{gunawardena2012linear, gunawardena2014time}, this is a generic situation that can arise from strong time-scale separation. When the rate equation of a reaction network is of the form \eqref{master}, we can equivalently view it as the master equation of a continuous-time Markov chain describing the stochastic transitions of a single molecule subject to a set of effectively monomolecular reactions \cite{mirzaev2013, Wong2017}.
Whichever interpretation we take, the mathematics that arises is the same, and our results can be put to work.

\newcommand{\ld}[1]{W_{#1}\frac{\partial}{\partial W_{#1}}}

\subsection{Covalent modification cycle}

Goldbeter and Koshland \cite{Goldbeter1981} studied a model of the covalent modification and demodification of a substrate by two enzymes, assuming the action of both enzymes obeys mass-action kinetics with a single intermediate complex and no product rebinding:
\begin{equation}\label{gkmm}
\begin{split}
E_1 + S \leftrightarrow E_1 S \to S^* + E_1 \\
E_2 + S^* \leftrightarrow E_2 S \to S + E_2 
\end{split}.
\end{equation}
The total substrate concentration $S_\mathrm{tot} = [S]+[S^*]+[E_1 S]+[E_2 S]$ is conserved in these reactions, as are the enzyme totals $E_{1,\mathrm{tot}} = [E_1] + [E_1 S]$,  $E_{2,\mathrm{tot}} = [E_2] + [E_2 S]$. In the limit of saturating substrate $S_\mathrm{tot} \gg E_{1,\mathrm{tot}}, E_{2,\mathrm{tot}}$, the kinetics are effectively Michaelis-Menten in form, and the steady-state ratio $[S^*]/[S]$ can exhibit unlimited sensitivity to changes in $E_{1,\mathrm{tot}}$ and $E_{2,\mathrm{tot}}$~\cite{Goldbeter1981}.

Sensitivity of the steady state to changes in enzyme concentrations is only possible out of equilibrium \cite{Goldbeter1987}. In \eqref{gkmm}, the nonequilibrium nature of the system is reflected in the combination
of the irreversible product release reactions with the overall reversibility of the modification of $S$. 

In the regime of low substrate  $S_\mathrm{tot} \ll E_{1,\mathrm{tot}}, E_{2,\mathrm{tot}}$, we have that $[E_1] \approx  E_{1,\mathrm{tot}}$ and $[E_2] \approx  E_{2,\mathrm{tot}}$, and the nonlinear mass-action dynamics implied by \eqref{gkmm} reduce to linear kinetics, with the enzyme concentrations ``absorbed" into the rate constants (see Fig.~\ref{fig:GK_suppl_1}). 

\begin{figure}
	\centering
	\includegraphics[scale=.35]{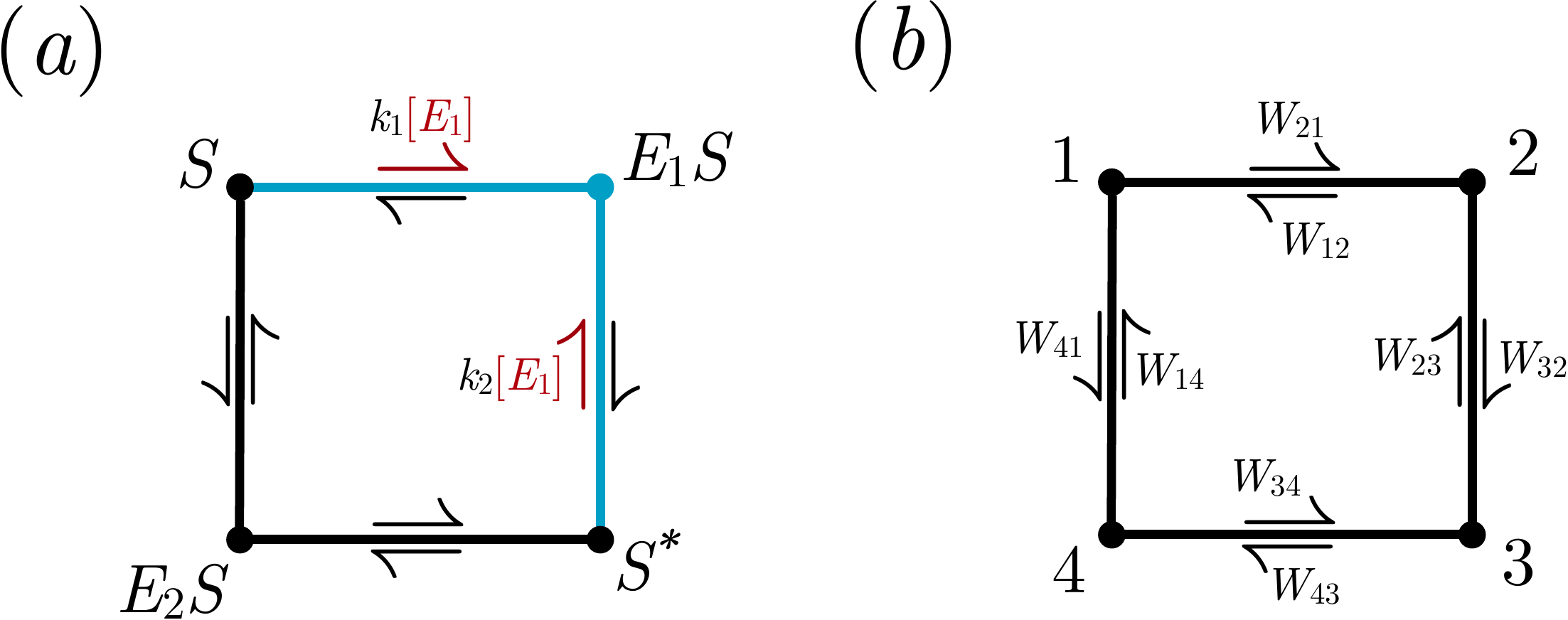}
	\caption{(a) The transition graph $G$ arising from the Goldbeter-Koshland model in the low substrate limit, with product rebinding. Two transition rates (red) depend on the (assumed constant) free enzyme concentration $[E_1]$ that we vary. Scaling $[E_1]$ is equivalent to a perturbation of two edge parameters and one vertex parameter (blue). (b) State numbers and rate labels we use in this subsection. Key equivalences are ``1 = $S$", ``3 = $S^*$", ``$W_{21} = k_1 [E_1]$", and ``$W_{23} = k_2 [E_1]$". }
	\label{fig:GK_suppl_1}
\end{figure}

In this work, we consider the low-substrate limit, and study the relative probability $\pi_{S^*}/\pi_S$ for a particular substrate molecule to be modified.
For thermodynamic consistency, all reactions must be reversible, so we must include product rebinding.
We further suppose that concentrations of other participants in these reactions  (e.g.~ATP, ADP in the case of phosphorylation/dephosphorylation) are held at fixed values. 
These choices yield a system of the form we have studied in the preceding sections, with linear dynamics of the form \eqref{master}, held out of equilibrium by the cycle force $F_C = \ln(\frac{W_{21}W_{32}W_{43}W_{14}}{W_{12}W_{23}W_{34}W_{41}})$. For a system such as this one, driven chemically, we can identify $F_C = \Delta \mu / k_\mathrm{B} T$. Our results proved above then imply a bound, in terms of $\Delta \mu$, on the sensitivity $s$ of the steady-state ratio $\pi_{S^*}/\pi_S$ to a change in $[E_1]$. 

Perturbing $[E_1]$ is equivalent to perturbing two edge parameters and a vertex parameter:
\begin{widetext}
\begin{align}
[E_1]\frac{\partial}{\partial [E_1]} &= \ld{21} + \ld{23} \\ &= \left(\ld{12} + \ld{21} \right) +\left(\ld{23} + \ld{32} \right)  -\left(\ld{12}+\ld{32}\right)\nonumber \\
&= - \frac{\partial}{\partial B_{12}} - \frac{\partial}{\partial B_{23}} - \frac{\partial}{\partial E_{2}}. \nonumber
\end{align}
\end{widetext}
Now we can apply Corollaries  \ref{ratenergypert} and \ref{rcycles} to bound the sensitivity
\begin{align}\label{sensebound}
s &= \left| [E_1] \frac{\partial \ln( \pi_3 / \pi_1)}{\partial [E_1]} \right| \\ \nonumber
&=\left| - \left(\frac{\partial}{\partial B_{12}} + \frac{\partial}{\partial B_{23}} \right) \ln \left(\frac{\pi_3}{ \pi_1}\right) - \frac{\partial \ln (\pi_3 / \pi_1)}{\partial E_{2}} \right| \\&\leq \tanh(F_{1 \leftrightarrow 3}/4) = \tanh(\Delta \mu/4k_\text{B}T).  \nonumber
\end{align}

Remarkably, the form of the bound \eqref{sensebound} remains unchanged even if the assumption that catalysis proceeds via a single intermediate complex is completely relaxed.  In particular, following Gunawardena et al.~\cite{xu2012realistic, dasgupta2014fundamental}, we consider an arbitrary reaction network built out of a collection of any number of reactions of the following form, which include an arbitrary number of intermediates and reactions between them:
\begin{equation}
\begin{split}
E_1 + S \leftrightarrow (E_1 S)_i \\ 
E_1 + S^* \leftrightarrow (E_1 S)_i \\ 
(E_1 S)_i \leftrightarrow (E_1 S)_j \\ 
E_2 + S \leftrightarrow (E_2 S)_i \\ 
E_2 + S^* \leftrightarrow (E_2 S)_i \\ 
(E_2 S)_i \leftrightarrow (E_2 S)_j.
\end{split}
\end{equation}
A general network of this form is schematically represented in Fig.~\ref{fig:phospho}(b). In any such network, consider the subgraph whose vertices $V$ are all the intermediates $\{(E_1 S)_i\}$ containing $E_1$, together with $S$ and $S^*$, and whose edges $\mathcal{E}$ are all the edges between the vertices $V$. Scaling $[E_1]$ is equivalent to decreasing all the edge parameters associated to edges in $\mathcal{E}$, and the vertex parameters associated to vertices in the set $V_I = V\setminus\{S,S^*\}$. This decomposition yields the result
\begin{align}\label{generalGKbound}
s &= \left| [E_1] \frac{\partial \ln( \pi_{S^*} / \pi_{S})}{\partial [E_1]} \right| \\ \nonumber
&= \left| -\left(\sum_{v \in {V_I}} \frac{\partial}{\partial E_v} - \sum_{e \in {\mathcal{E}}} \frac{\partial}{\partial B_e} \right)\ln\left(\frac{\pi_{S^*}}{ \pi_S}\right)\right| \\ &\leq \left|\left(\sum_{e \in {\mathcal{E}}} \frac{\partial}{\partial B_e} \right)\ln \left(\frac{\pi_{S^*}}{ \pi_S}\right)\right|\nonumber \\ & \leq \tanh(F_{S^* \leftrightarrow S}/4) = \tanh(\Delta \mu / 4 k_\mathrm{B} T) \nonumber
\end{align}
where the last line follows from Corollary \ref{forGK} with $W = \{ S, S^* \}$, $|W| = 2$.

\subsection{Biochemical sensing and the Govern--ten Wolde trade-off}
\label{tenWolde_consistency}

Now we will show how our sensing bound \eqref{eq:sensingbound} arises, and how it reduces to the results of Govern and ten Wolde \cite{govern2014PNAS} in the appropriate limits. 

To arrive at \eqref{eq:sensingbound}, we first employ the approximation \eqref{eq:errorprop}
\begin{align}
\label{eq:sensingboundderiv}
\left(\frac{\sigma_e}{[E_1]}\right)^2 &\approx \frac{\sigma_{s}^2}{(\partial \mu_s/\partial [E_1])^2} \frac{1}{[E_1]^2} \nonumber\\ &= \frac{N \pi_{S^*}(1-\pi_{S^*})}{N^2([E_1]\partial \pi_{S^*}/\partial [E_1])^2}.
\end{align}
Now we recognize the derivative in the denominator as being an instance of the kinds we have considered already. In particular, application of \eqref{eq:barrierBoundMulti}, in the form of a ratio of general observables, together with our vertex perturbation results, yields
\begin{equation}
\left|[E_1]\frac{\partial \pi_{S^*}}{\partial [E_1]} + \pi_{S^*} \pi_{Y}\right| \leq \pi_{S^*} (1-\pi_{S^*})\tanh(\Delta \mu/4 k_\mathrm{B} T)
\end{equation}
where $\pi_Y$ is the fraction of the total substrate bound up in complexes involving $E_1$. In \cite{govern2014PNAS}, these enzymatic intermediates are neglected (e.g. equation [S22] of \cite{govern2014PNAS}), and to proceed we will do the same here, supposing that $\pi_Y$ is very small. We then get
\begin{align} 
\left(\frac{\sigma_e}{[E_1]}\right)^2 &\gtrsim \frac{1}{N \pi_{S^*} (1-\pi_{S^*})  \tanh(\Delta \mu/4 k_\mathrm{B} T)^2} \nonumber \\  &\geq \frac{4}{N \tanh(\Delta \mu/4 k_\mathrm{B} T)^2}
\end{align}
as desired. In the limit $\Delta \mu \gg k_\mathrm{B}T$,
\begin{equation}
\left(\frac{\sigma_e}{[E_1]}\right)^2 \gtrsim \frac{4}{N}
\end{equation}
whereas in the limit $\Delta \mu \ll k_\mathrm{B}T$, this yields
\begin{equation}\label{lowforcelimit}
\left(\frac{\sigma_e}{[E_1]}\right)^2 \gtrsim \frac{64}{N}\left(\frac{ k_\mathrm{B} T}{\Delta \mu}\right)^2.
\end{equation}

To make contact between our low potential limit $\Delta \mu \ll k_\mathrm{B}T$ \eqref{lowforcelimit} and the results of Govern and ten Wolde, we now review the context of their results. In that paper, the authors study the error $\delta c/c$ in an estimate of the concentration $c$ of an extracellular ligand $L$, which binds to receptor $R$, forming a complex: $R + L \leftrightarrow RL$. The complex $RL$ then plays the role of $E_1$ in our discussion above, participating in a covalent modification cycle. The concentration of the ligand-bound receptor (in our notation $E_1$, in their notation $RL$) is given by 
\begin{align}\label{fractionbound}
[E_1] = [RL] &= R_T p \nonumber \\
p &= \frac{c}{c+K} 
\end{align}
where $K$ is the dissociation constant and $R_T$ is the total concentration of receptors. 

An estimate of $c$ can be constructed, just as we describe in the main text producing an estimate of $[E_1]$. Using the approximation \eqref{eq:errorprop} together with the equation \eqref{fractionbound} relates the error of these estimates,
\begin{equation}\label{error_related}
\left(\frac{\delta c}{c}\right)^2 \approx  \frac{1}{(1-p)^2}\left(\frac{\sigma_e}{[E_1]}\right)^2
\end{equation}
The authors of \cite{govern2014PNAS} compare the sensing error $(\delta c/c)^2$ not to the chemical potential difference $\Delta \mu$, but to a quantity $w$, which is the dissipation rate of the whole system, normalized by the sum of all the rates (forward and reverse) around the modification cycle (in the case studied by the authors, consisting of only two states, in our notation, $S$ and $S^*$, and no intermediates). We shall call this quantity, which is the rate of relaxation to steady state, $\mathcal{R}$. So in our notation,  
\begin{equation}\label{w_def}
w = \left(N \times |J| \times \frac{\Delta \mu}{k_\mathrm{B} T}\right)/\mathcal{R}. 
\end{equation}
where $J$ is the net current around the cycle, per substrate molecule. The arguments of Govern and ten Wolde show that in the limit $\Delta \mu \ll k_\mathrm{B} T$, 
\begin{equation}\label{tenWolde_error}
\left(\frac{\delta c}{c}\right)^2 \geq \frac{4}{(1-p)^2 w}.
\end{equation}
This is slightly tighter than the inequality (S26) they write in \cite{govern2014PNAS}, but also follows from their argument. As a consequence of \eqref{error_related} and \eqref{tenWolde_error}, we then get
\begin{equation}\label{tenWolde_result}
\left(\sigma_e/[E_1]\right)^2 \geq 4/w.
\end{equation}
To show that this bound coincides with our result \eqref{lowforcelimit}, we need to show that for fixed small force the smallest value achieved by $4/w$ is $64 (k_\mathrm{B}T)^2/ N (\Delta \mu)^2$. If this were not so, it would imply that either our result was weaker than that of Govern and ten Wolde, or vice versa.

To show that \eqref{lowforcelimit} and \eqref{tenWolde_result} do coincide for fixed small force, we use the inequality,
\begin{equation}\label{jrineq}
\frac{|J|}{\mathcal{R}} \leq \pi_{S^*}(1-\pi_{S^*})\tanh\left(\frac{\Delta \mu}{4 k_\mathrm{B}T}\right) \leq  \frac{\Delta \mu}{16 k_\mathrm{B}T}
\end{equation}
which holds (for a two-state system), by the same algebra \eqref{main1}--\eqref{main3} that led to our symmetric edge perturbation result. See also Malaguti and ten Wolde \cite{malaguti2019theory} (equations S112 to S114), who give an explicit expression for $|J|/R$.

Plugging \eqref{jrineq} into the definition \eqref{w_def}, we get
\begin{equation}\label{result_rel}
\frac{4}{w} \geq \frac{64}{N}\left(\frac{ k_\mathrm{B} T}{\Delta \mu}\right)^2.
\end{equation}
Additionally, this inequality can be saturated in the limit $\Delta \mu \ll k_\mathrm{B}T$, because there is a near equilibrium regime in which $|J|/R \approx \Delta \mu / 16 k_\mathrm{B}T$. This establishes the desired equivalence between our results.

\subsection{Kinetic proofreading}

In our presentation and analysis here, we follow closely the  papers of Murugan et al. \cite{murugan2012speed, Murugan2014}. Our results generalize bounds on the discriminatory index $\nu$  found in those works. 

First, we consider the single-loop, three-state network (see Fig.~\ref{fig:KP_suppl_1}) equivalent to the system studied by Hopfield and Ninio \cite{Hopfield1974, Ninio1975}.
\begin{figure}[t]
	\centering
	\vspace*{0.3 cm}
	\includegraphics[scale=.35]{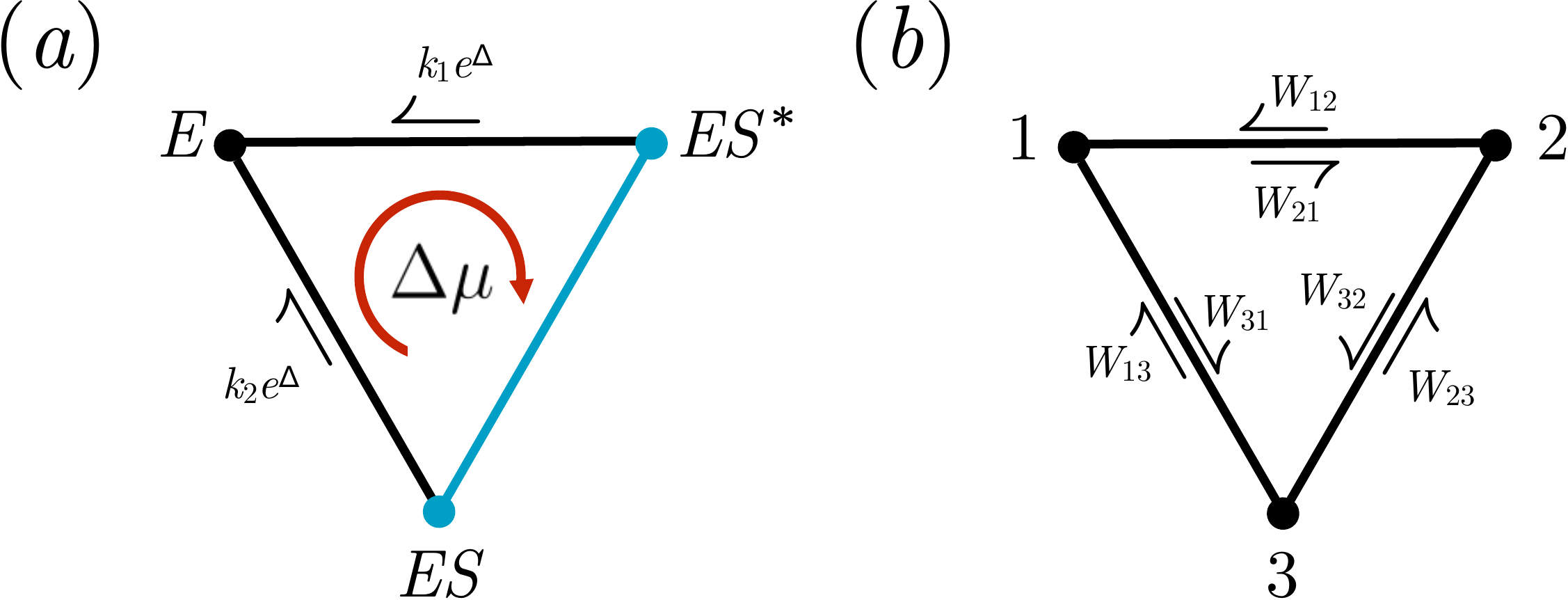}
	\caption{(a) The transition graph $G$ for the single-loop kinetic proofreading mechanism of Hopfield. The dissociation rates of the complexes $ES^*$ and $ES$ are the only rates that depend on the binding energy $\Delta$. Perturbing $\Delta$ is equivalent to vertex and edge perturbations (blue). (b) State numbers and rate labels we use in this subsection. Key equivalences are ``1 = $E$", ``3 = $ES$", ``$W_{12} = k_1 e^\Delta$", and ``$W_{13} = k_2 e^\Delta$".}
	\label{fig:KP_suppl_1}
\end{figure}
A perturbation of the binding energy $\Delta$ can be decomposed as a linear combination of vertex and symmetric edge parameter perturbations. In terms of the notation we introduce in Fig.~\ref{fig:KP_suppl_1}(b), we have
\begin{widetext}
\begin{align}
\frac{\partial}{\partial \Delta} &= \ld{12} + \ld{13} \\ &= \left( \ld{13}+\ld{23}\right)+\left(\ld{12}+\ld{32}\right)-\left(\ld{23}+\ld{32}\right) \nonumber \\
&= \frac{\partial}{\partial E_3} + \frac{\partial}{\partial E_{2}} + \frac{\partial}{\partial B_{23}} .\nonumber
\end{align}
Now we can apply Corollaries \ref{ratenergypert} and \ref{ratiobound} to derive the bound
\begin{align}
\left|\nu - 1\right| &= \left|\frac{\partial \ln (\pi_1/\pi_{3})}{\partial \Delta} - 1\right| = \left|\frac{\partial \ln (\pi_1/\pi_{3})}{\partial E_3} + \frac{\partial \ln (\pi_1/\pi_{3})}{\partial E_{2}} + \frac{\partial \ln (\pi_1/\pi_{3})}{\partial B_{23}}- 1\right| \\ \nonumber
&\leq |1 + 0 - 1| + \left|\frac{\partial \ln (\pi_1/\pi_{3})}{\partial B_{23}}\right| \\ \nonumber 
& \leq  \tanh\left(F_\mathrm{max}/4\right) =  \tanh\left(\Delta\mu/4k_\mathrm{B} T\right).\nonumber
\end{align}
\end{widetext}

In the case of the more general kinetic proofreading scheme \cite{murugan2012speed, Murugan2014} where $m$ complexes can dissociate, described in Fig.~\ref{fig:KP_suppl_2}, perturbing $\Delta$ is equivalent to perturbing the edge and vertex parameters associated to the edges $\mathcal{E}$ and vertices $V$ on one side of the ``fence". We then have
\begin{align}
\left|\nu - 1\right| &= \left|\frac{\partial \ln (\pi_E/\pi_{ES})}{\partial \Delta} - 1\right| \\ \nonumber
&=\left| \left(\sum_{v \in {V}} \frac{\partial}{\partial E_v} + \sum_{e \in {\mathcal{E}}} \frac{\partial}{\partial B_e} \right)\ln\left(\frac{\pi_{E}}{ \pi_{ES}}\right)-1\right|  \\ \nonumber
&\leq |1 - 1| + \left|\left(\sum_{e \in {\mathcal{E}}} \frac{\partial}{\partial B_e} \right)\ln \left(\frac{\pi_{E}}{ \pi_{ES}}\right)\right| \\
&\leq (m-1)\tanh(F_{E \leftrightarrow ES}/4)\nonumber
\end{align}
where in the last line we have applied Corollary \ref{forGK}.


%

\end{document}